\documentclass[11pt]{amsart}
\usepackage{graphicx,amssymb,amsmath,amsthm}
\usepackage{enumerate}
\usepackage{dsfont}
\usepackage[colorlinks, citecolor=red]{hyperref}
\usepackage{comment,cite,color}
\usepackage{cite,color}
\usepackage{mathrsfs}
\usepackage{epsfig}
\usepackage{lscape}
\usepackage{subfigure}
\usepackage{epstopdf}
\usepackage{caption}
\usepackage{algorithm}

\usepackage{tgtermes}
\usepackage{lipsum}

\usepackage{bm}

\usepackage{algpseudocode}


\textheight 8.0in
\textwidth 6.00in
\topmargin -0.25in
\oddsidemargin 0.25in
\evensidemargin 0.25in
\parskip 1.0ex

\newcommand{\xkh}[1]{\left(#1\right)}
\newcommand{\dkh}[1]{\left\{#1\right\}}

\newcommand{\nj}[1]{\langle {#1} \rangle}

\newcommand{\norm}[1]{\|{#1}\|_2}
\newcommand{\normf}[1]{\|{#1}\|_F}
\newcommand{\normone}[1]{\|{#1}\|_1}
\newcommand{\norms}[1]{\|{#1}\|}
\newcommand{\abs}[1]{\left\lvert#1\right\rvert}

\newcommand{\A}{{\mathcal A}}
\renewcommand{\H}{{\mathcal H}^{d\times d}}
\newcommand{\E}{{\mathbb E}}

\newcommand{\PP}{{\mathbb P}}

\newcommand{\1}{{\mathds 1}}

\newcommand{\R}{{\mathbb R}}
\newcommand{\Rd}{{\mathbb R}^d}

\newcommand{\T}{\top}
\newcommand{\C}{{\mathbb C}}
\newcommand{\Cd}{{\mathbb C}^d}

\newcommand{\x}{{\widehat{\bm{x}}}}
\newcommand{\z}{{\widehat{\bm{z}}}}
\newcommand{\X}{{\widehat{X}}}

\newcommand{\vx}{{\bm x}}

\newcommand{\vu}{{\bm u}}
\newcommand{\vv}{{\bm v}}
\newcommand{\vz}{{\bm z}}
\newcommand{\vb}{{\bm b}}

\newcommand{\ta}{{\tilde{\bm{a}}}}

\newcommand{\va}{{\bm a}}
\newcommand{\vh}{{\bm h}}

\newcommand{\vbar}{\bar{{\bm v}}}

\newcommand{\cN}{{\mathcal N}}

\newcommand{\rank}{{\rm rank}}

\newcommand{\supp}{{\rm supp}}

\renewcommand{\omega}{\eta}

\newcommand{\RNum}[1]{\uppercase\expandafter{\romannumeral #1\relax}}

\newtheorem{definition}{Definition}[section]
\newtheorem{corollary}[definition]{Corollary}

\newtheorem{theorem}[definition]{Theorem}
\newtheorem{lemma}[definition]{Lemma}

\newtheorem{remark}[definition]{Remark}

\newtheorem{example}[definition]{Example}

\date{}

\begin{document}
\baselineskip 19pt
\bibliographystyle{plain}
\title{Performance bounds of the intensity-based estimators for noisy phase retrieval}
\author{Meng Huang}
\address{Department of Mathematics, The Hong Kong University of Science and Technology,
Clear Water Bay, Kowloon, Hong Kong, China} \email{menghuang@ust.hk}

\author{Zhiqiang Xu}
\thanks{Zhiqiang Xu was supported by the National Science Fund for Distinguished Young Scholars grant (12025108), by Beijing Natural Science Foundation (Z180002) and  by NSFC grant (12021001, 11688101).  }
\address{LSEC, Inst.~Comp.~Math., Academy of
Mathematics and System Science,  Chinese Academy of Sciences, Beijing, 100091, China
\newline
School of Mathematical Sciences, University of Chinese Academy of Sciences, Beijing 100049, China}
\email{xuzq@lsec.cc.ac.cn}

\maketitle




\begin{abstract}
The aim of noisy  phase  retrieval is to  estimate a signal $\vx_0\in \Cd$ from $m$
noisy intensity measurements $b_j=\abs{\nj{\va_j,\vx_0}}^2+\eta_j, \; j=1,\ldots,m$,
where $\va_j \in \Cd$ are known measurement vectors and
$\eta=(\eta_1,\ldots,\eta_m)^\T \in \R^m$ is a  noise vector. A commonly used estimator for
 $\vx_0$ is to minimize the intensity-based  loss function, i.e.,
$\x:=\mbox{argmin}_{\vx\in \Cd} \sum_{j=1}^m \big(\abs{\nj{\va_j,\vx}}^2-b_j
\big)^2$. Although one has  developed many algorithms for solving  the
intensity-based estimator, there are very few results about its estimation
performance. In this paper, we focus on  the performance of the intensity-based
estimator and prove that the error bound satisfies $\min_{\theta\in
\R}\|\x-e^{i\theta}\vx_0\|_2 \lesssim \min\Big\{\frac{\sqrt{\|\eta\|_2}}{{m}^{1/4}},
\frac{\|\eta\|_2}{\norm{\vx_0}\cdot \sqrt{m}}\Big\}$ under the assumption of  $m
\gtrsim d$ and $\va_j \in \Cd, j=1,\ldots,m,$ being complex Gaussian random vectors.
We also show that the error bound is rate optimal when $m\gtrsim d\log m$.
  For the case where $\vx_0$ is an $s$-sparse signal, we present a similar result under the assumption of $m \gtrsim
s \log (ed/s)$. To the best of our knowledge, our results are the first theoretical
guarantees for the intensity-based estimator  and its sparse version. Our proofs employ
Mendelson's small-ball method which can deliver an effective lower bound on a
nonnegative empirical process.
\end{abstract}

\smallskip

\section{Introduction}
\subsection{Phase retrieval}
Assume that
\[
b_j:=\abs{\nj{\va_j,\vx_0}}^2+\eta_j, \quad  j=1,\ldots,m
\]
where $ \va_j \in \Cd$ are known measurement vectors and
$\eta:=(\eta_1,\ldots,\eta_m)^\T \in \R^m$ is a  noise vector. Throughout this paper,
we assume that the noise $\eta$ is a  fixed or random vector independent of measurement
vectors $ \va_j, \; j=1,\ldots,m$.

To estimate $\vx_0\in \C^d$ from $\vb:=(b_1,\ldots,b_m)^\T\in \R^m$ is referred to as
{\em phase retrieval}. Due to the physical limitations, optical sensors can record
only the modulus of Fraunhofer diffraction pattern while losing the phase
information, and hence phase retrieval has many applications in fields of physical
sciences and engineering, which includes X-ray crystallography
\cite{harrison1993phase,millane1990phase}, microscopy \cite{miao2008extending},
astronomy \cite{fienup1987phase}, coherent diffractive imaging
\cite{shechtman2015phase,gerchberg1972practical} and optics
\cite{walther1963question} etc. Despite its simple mathematical form, it has been
shown that to reconstruct a finite-dimensional discrete signal from its Fourier
transform magnitudes is generally {\em NP-complete} \cite{Sahinoglou}.

Based on the least squares criterion, one can employ the following {\em
intensity-based} empirical loss to estimate  $\vx_0$:
\begin{equation}\label{eq:mod1}
\min_{\vx\in \C^d} \quad \sum_{j=1}^m \xkh{\abs{\nj{\va_j,\vx}}^2-b_j}^2.
\end{equation}
For the case where $\vx_0$ is sparse,  the following Lasso-type program can be
employed  to estimate $\vx_0$:
\begin{equation}\label{eq:mod2sparse}
\min_{\vx\in \C^d} \quad \sum_{j=1}^m \xkh{\abs{\nj{\va_j,\vx}}^2-b_j}^2 \quad \mbox{s.t.} \quad \normone{\vx}\le R,
\end{equation}
where  $R$ is a parameter which specifies a desired sparsity level of the solution.
An  advantage of the intensity-based estimator (\ref{eq:mod1}) is that the objective
function is differentiable based on the Wirtinger derivatives. One can therefore try to find its minimum with high-order algorithms
 such as trust-region and Gauss-Newton methods. Though the
objective functions are non-convex,  the strategy of spectral initialization plus
local gradient descent can be adopted to solve (\ref{eq:mod1}) and
(\ref{eq:mod2sparse}) efficiently under Gaussian random measurements.  For instance,
it has been proved that when $m\gtrsim d$ and $\norms{\eta}_{\infty}\lesssim
\norm{\vx_0}^2 $, with high probability the truncated spectral method given in
\cite{TWF} can return an initial guess $\vz_0$ which is close to the
target signal in the real case, namely, $\norm{\vz_0-\vx_0} \le \delta\norm{\vx_0}$ for any fixed relative error tolerance $\delta$. With this in place, the update rules
such as Wirtinger Flow \cite{WF}, Trust-Region \cite{turstregion} and Gauss-Newton
\cite{Gaoxu}  methods  could find a global solution to (\ref{eq:mod1}) at least in the noiseless case.

In the noiseless case, i.e., $ b_j=\abs{\nj{\va_j,\vx_0}}^2, j=1,\ldots,m$, the
solution to (\ref{eq:mod1}) is exactly  $\vx_0$ (up to a unimodular constant) if $m\geq 4d-4$
and $\va_j,j=1,\ldots,m,$ are generic vectors in $\C^d$ \cite{CEHV,WX19}. However,
 one still does not know the distance between the solution to
(\ref{eq:mod1}) and the true signal $\vx_0$ in the noisy case.
 The aim of this paper is to study the performance of (\ref{eq:mod1}) and
(\ref{eq:mod2sparse}) from the theoretical viewpoint.

\subsection{Algorithms for phase retrieval }
For the last two decades, many algorithms  have been designed for phase retrieval, especially in the noiseless case,
which falls into two categories: convex methods and non-convex ones.

The convex methods rely on the ``matrix-lifting''  technique which lifts the
quadratic system to a linear rank-one positive semi-definite program.
More specifically, a rank one matrix $X=\vx \vx^*$ is introduced to linearize the quadratic constrains
and then a nuclear norm minimization is adopted as a convex surrogate of the rank constraint.
Such methods include PhaseLift \cite{phaselift,Phaseliftn}, PhaseCut
\cite{Waldspurger2015} etc. Although the convex methods have good theoretical
guarantees,  they require to solve a semi-definite program in the ``lifted'' space
$\C^{d^2}$ rather than $\C^d$, where $d$ is the dimension of signals. Thus the memory
requirements and computational complexity become quite high, which makes it
prohibitive for large-scale problems in practical applications.

The non-convex methods operate directly on the original space, which achieves
significantly improved computational performance. The oldest non-convex algorithms
for phase retrieval are based on alternating projection including Gerchberg-Saxton
\cite{gerchberg1972practical} and Fineup \cite{ER3}, but lack of theoretical guarantees. The
first non-convex algorithm with theoretical guarantees was given by  Netrapalli et al
who showed that the  AltMinPhase \cite{AltMin} algorithm converges  linearly to the
true solution up to a global phase with $O(d \log^3 d)$ resampling Gaussian random
measurements. In \cite{WF}, Cand\`es,  Li and Soltanolkotabi developed the Wirtinger Flow (WF)  to
solve (\ref{eq:mod1}) and proved WF algorithm can achieve the linear convergence with
$O(d \log d)$  Gaussian random measurements. Lately, Chen and Cand\`es improved the
result to $O(d )$ Gaussian random measurements by Truncated Wirtinger Flow (TWF)
\cite{TWF}. In \cite{Gaoxu}, Gao and Xu proposed a Gauss-Newton  algorithm to solve
(\ref{eq:mod1}) and proved the Gauss-Newton method can achieve quadratic convergence
for the real-valued signals  with $O(d \log d)$ Gaussian random measurements. In
\cite{turstregion}, Sun, Qu, and Wright proved that, for $O(d \log^3 d)$ Gaussian random measurements,
the objective function of (\ref{eq:mod1}) has a benign geometric landscape: (1) all local minimizers are global;
and (2) the objective function has a negative curvature around each saddle point\footnote{We do not differentiate between saddle points and local maximizers.}. They also developed the Trust-Region method to find a global
solution.

Another alternative approach for phase  retrieval is to solve the following {\em
amplitude-based} empirical loss:
\begin{equation}\label{eq:mod2}
\min_{\vx\in \C^d}\,\,\sum_{j=1}^m \xkh{\abs{\nj{\va_j,\vx}}-\psi_j}^2,
\end{equation}
where $\psi_j:=\sqrt{b_j}, \; j=1,\ldots, m$.  For Gaussian random measurements,
through an appropriate initialization, many algorithms can be used to solve
(\ref{eq:mod2}) successfully such as Truncated Amplitude Flow (TAF) \cite{TAF},
Reshaped Wirtinger Flow (RWF) \cite{RWF} and Perturbed Amplitude Flow (PAF)
\cite{PAF}.  It has been proved that TAF, RWF and PAF algorithms converge linearly to
the true solution up to a global phase under $O(d )$ Gaussian random measurements.

For sparse phase retrieval, a standard $\ell_1$ relaxation technique leads
to the corresponding sparse intensity-based estimator (\ref{eq:mod2sparse}).
 It has
 been shown that when the noises are independent centered sub-exponential random
variables with maximum sub-exponential norm $\sigma$ and $m\gtrsim \alpha_\sigma^2
s^2 \log(md)$, in the real case, the initialization procedure given in
\cite{cai2016optimal} can return an initial guess $\vz_0$ which satisfies
$\supp(\vz_0) \subset \supp(\vx_0)$ and  is very close to the target signal with high
probability, where $s$ is the sparsity level and $\alpha_\sigma$ is a constant
related to $\sigma$. Next, the projection gradient method \cite{Duchi08} could be
used to find a global minimizer to (\ref{eq:mod2sparse}). Such two-step procedure has
also been used in various signal processing and machine learning problems, such as
Blind Deconvolution \cite{Ling2019Regularized}, matrix completion
\cite{Sun2016Guaranteed} and sparse recovery \cite{Qu2016Finding}.

We refer the reader to survey papers \cite{SparseTAF, Hand2016,Wu2020} for accounts of recent
developments in the algorithms of sparse phase retrieval. The theoretical results
concerning the injectivity of sparse phase retrieval can be found in \cite{WX14,
Iwen2015Robust}.

\subsection{Related work}
\subsubsection{PhaseLift}
We first introduce the  estimation performance of PhaseLift for noisy phase
retrieval. In \cite{Phaseliftn}, Cand\`es and Li suggest using the following empirical  loss
to estimate $\vx_0$:
\begin{equation} \label{mo:phaselift}
\min_{X \in \C^{d\times d}} \quad \sum_{j=1}^m \abs{\va_j^* X \va_j-b_j} \quad \mbox{s.t.} \quad X \succeq 0.
\end{equation}
They prove that  the solution $\X$ to (\ref{mo:phaselift}) obeys
\[
\normf{\X-\vx_0\vx_0^*} \lesssim \frac{\normone{\eta}}{m}
\]
with high probability provided  $m\gtrsim  d$ and $\va_j \in \C^d, j=1,\ldots,m,$ are complex Gaussian
random vectors.
Though (\ref{mo:phaselift}) is a convex optimization problem, one needs to solve it in a ``lifted"
space $\C^{d^2}$. The computational cost typically far exceeds the order of $d^3$, which is not suitable for
large-dimensional data. However, for the intensity-based estimator, one just needs to operate on
the original space $\C^d$ rather than lifting the problem into higher dimensions.


\subsubsection{The amplitude-based estimator}
As shown before, the amplitude-based empirical loss (\ref{eq:mod2}) is an alternative estimator for phase retrieval.
In \cite{Mengxu}, Huang and Xu studied the estimation performance of the
amplitude-based estimator (\ref{eq:mod2}) for {\em real-valued } signals. They prove that the
solution $\x$ to (\ref{eq:mod2}) satisfies
\[
\min \{\norm{\x+\vx_0}, \norm{\x-\vx_0}\}\lesssim \frac{\norm{\eta}}{\sqrt{m}}
\]
with high probability provided  $m \gtrsim d$ and $\va_j \in \R^d,j=1,\ldots,m,$ are Gaussian
random vectors.
 They also prove that the reconstruction error $\norm{\eta}/\sqrt{m}$ is sharp.
Furthermore, in \cite{Mengxu}, Huang and Xu  consider the following constrained
nonlinear Lasso to estimate  $s$-sparse signals $\vx_0$:
\begin{equation} \label{mo:conslasso}
\min_{\vx\in \R^d}\,\,\sum_{j=1}^m \xkh{\abs{\nj{\va_j,\vx}}-\psi_j}^2 \quad \mbox{s.t.} \quad \normone{\vx}\le R.
\end{equation}
They show  that    any global solution $\x$ to (\ref{mo:conslasso}) with $R:=
\normone{\vx_0}$ obeys
\[
\min \{\norm{\x+\vx_0}, \norm{\x-\vx_0}\}\lesssim \frac{\norm{\eta}}{\sqrt{m}}
\]
with high probability provided $m\gtrsim  s \log (ed/s)$ and $\va_j \in \R^d, j=1,\ldots,m,$
are Gaussian random vectors.

The results from \cite{Mengxu} hold only for real-valued signals and it seems highly
nontrivial to extend them to the complex case. However, the results in this paper
hold for both real-valued and complex-valued signals. Furthermore, the intensity-based estimator considered in this paper
 is differentiable comparing with amplitude-based estimator,  which admits high-order algorithms.

\subsubsection{Poisson log-likelihood estimator}
In \cite{TWF}, Chen and Cand\`es consider the Poisson log-likelihood function
\[
\ell_j(\vx,b_j)=b_j \log (|\va_j^\T \vx|^2)-|\va_j^\T \vx|^2
\]
and use
\begin{equation} \label{es:poisson}
\min_{\vx \in \R^d }\quad - \sum_{j=1}^m \ell_j(\vx,b_j)
\end{equation}
to estimate real-valued signals $\vx_0$. They establish stability estimates using Truncated Wirtinger Flow approach and show that if $\norms{\eta}_\infty \lesssim \norm{\vx_0}^2$ then the solution $\x$ to (\ref{es:poisson}) satisfies
\[
\mbox{dist}~(\x,\vx_0) \lesssim  \frac{\norm{\eta}}{\norms{\vx_0}\sqrt{m}}
\]
with high probability provided  $m\gtrsim  d$ and $\va_j \in \R^d, j=1,\ldots,m,$ are Gaussian random vectors. Furthermore, a lower bound on the minimax estimation error is also derived under the real-valued Poisson noise model, namely, with high probability
\begin{equation}\label{lower:poisson}
\mbox{inf}_{\x} \quad \mbox{sup}_{\vx_0 \in \gamma(K)}\quad  \E~ \mbox{dist}~(\x,\vx_0) \gtrsim \sqrt{\frac{d}{m}},
\end{equation}
where
$
\gamma(K):=\dkh{\vx_0 \in \R^d: \norm{\vx_0} \in (1\pm 0.1) K}
$
for any $K\ge \log^{1.5} m$.

The result  (\ref{lower:poisson}) presents a lower bound on the expectation of the  minimax
estimation error under Poisson noise structure, whereas the result in Theorem
\ref{th:lowbound} gives a tail bound with a reasonably large measurements and holds
for a wide range of noises. Moreover, all the results in \cite{TWF} are for
real-valued signals, while ours hold for complex-value ones.

\subsection{Our contributions} \label{sec:contrib}
As stated earlier, the two-step strategy of spectral initialization plus local gradient descent can be use
to solve the intensity-based estimator (\ref{eq:mod1}). To our knowledge, there is no result concerning the
reconstruction error of (\ref{eq:mod1}) for noisy phase retrieval from the
theoretical viewpoint. The goal of this paper is to study the estimation performance
of the intensity-based estimator (\ref{eq:mod1}) and its sparse version
(\ref{eq:mod2sparse}).

Our first  result shows that  the estimation error is quite small and bounded by the
average noise per measurement, as stated below. We emphasize that this theorem does
not assume any particular structure on the noise $\eta$.

\begin{theorem} \label{result1}
Suppose that the measurements  $\va_j \sim  1/\sqrt{2}\cdot
\cN(0,I_d)+i/\sqrt{2}\cdot \cN(0,I_d) $ are i.i.d. complex Gaussian random vectors
and the measurement number  $m \gtrsim d$. Then the following holds with probability
at least $1-\exp(-c m)$: For all $\vx_0\in \Cd$, the solution $\x \in \Cd$  to
(\ref{eq:mod1}) with $b_j=\abs{\nj{\va_j,\vx_0}}^2+\eta_j, \;  j=1,\ldots,m$,
satisfies
\begin{equation*}\label{eq:thup}
  \min_{\theta\in [0,2\pi)} \norm{\x-e^{i\theta}\vx_0}\,\,\le\,\,
  C\min \dkh{\frac{\sqrt{\|\eta\|_2}}{{m}^{1/4}}, \frac{\|\eta\|_2}{\norm{\vx_0}\cdot \sqrt{m}}}.
\end{equation*}
Here, $C$ and $c$ are positive absolute constants.
\end{theorem}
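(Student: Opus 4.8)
The plan is to reduce the error bound to a lower bound on a nonnegative empirical process and then apply Mendelson's small-ball method. Write $\vx = \x$ for the minimizer and set $\vh := \x - e^{i\theta^\star}\vx_0$, where $\theta^\star$ is the minimizing phase. Since $\x$ is a minimizer of $F(\vx) := \sum_{j=1}^m (\abs{\nj{\va_j,\vx}}^2 - b_j)^2$ and $e^{i\theta^\star}\vx_0$ is feasible, we have $F(\x) \le F(e^{i\theta^\star}\vx_0) = \sum_j \eta_j^2 = \|\eta\|_2^2$. On the other hand, expanding $\abs{\nj{\va_j,\x}}^2 - b_j = \big(\abs{\nj{\va_j,\x}}^2 - \abs{\nj{\va_j,\vx_0}}^2\big) - \eta_j$ and using $(A-B)^2 \ge \tfrac12 A^2 - B^2$, we obtain
\begin{equation*}
\tfrac12 \sum_{j=1}^m \big(\abs{\nj{\va_j,\x}}^2 - \abs{\nj{\va_j,\vx_0}}^2\big)^2 - \|\eta\|_2^2 \;\le\; F(\x) \;\le\; \|\eta\|_2^2,
\end{equation*}
so $\sum_{j=1}^m \big(\abs{\nj{\va_j,\x}}^2 - \abs{\nj{\va_j,\vx_0}}^2\big)^2 \le 4\|\eta\|_2^2$. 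The key identity is $\abs{\nj{\va_j,\x}}^2 - \abs{\nj{\va_j,\vx_0}}^2 = \Real\big(\nj{\va_j, \vh}\overline{\nj{\va_j, \x + e^{i\theta^\star}\vx_0}}\big)$; more symmetrically, writing $\vu = \x + e^{i\theta^\star}\vx_0$, this is a real quadratic form in $(\vh,\vu)$. The task is thus to show that, uniformly over the relevant set of pairs, $\sum_j \big(\abs{\nj{\va_j,\x}}^2 - \abs{\nj{\va_j,\vx_0}}^2\big)^2$ is bounded below by something like $c\, m\, \norm{\vh}^2 \big(\norm{\vh}^2 + \norm{\vx_0}^2\big)$, which when combined with the upper bound $4\|\eta\|_2^2$ yields $\norm{\vh}^2(\norm{\vh}^2 + \norm{\vx_0}^2) \lesssim \|\eta\|_2^2/m$, and hence both branches of the claimed minimum (the first from dropping $\norm{\vx_0}^2$, the second from dropping $\norm{\vh}^2$).

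The core analytic step is a uniform lower bound of the form
\begin{equation*}
\inf_{\vx_0, \vh} \; \frac{1}{m}\sum_{j=1}^m \big(\abs{\nj{\va_j,\vx_0 + \vh}}^2 - \abs{\nj{\va_j,\vx_0}}^2\big)^2 \;\gtrsim\; \norm{\vh}^2\big(\norm{\vh}^2 + \norm{\vx_0}^2\big),
\end{equation*}
holding with probability $1 - \exp(-cm)$ once $m \gtrsim d$. By homogeneity one normalizes, e.g. to $\norm{\vh} = 1$, and the quantity $Q_j(\vx_0,\vh) := \abs{\nj{\va_j,\vx_0+\vh}}^2 - \abs{\nj{\va_j,\vx_0}}^2$ is a homogeneous quadratic in the $\le 4d$ real coordinates of $(\vh,\vx_0)$. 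The standard route is Mendelson's small-ball method: (i) a small-ball probability estimate $\Prob\big(\abs{Q_j(\vx_0,\vh)} \ge \xi \,\norm{\vh}\sqrt{\norm{\vh}^2+\norm{\vx_0}^2}\big) \ge 2u$ for absolute constants $\xi, u > 0$, which follows from explicit anti-concentration of low-degree Gaussian polynomials (the random variables $\nj{\va_j,\vh}$ and $\nj{\va_j,\vx_0}$ are jointly complex Gaussian, so $Q_j$ has a density bounded away from zero near the relevant scale); and (ii) a bound on the multiplier/Rademacher complexity $\E \sup \big| \tfrac1m \sum_j \ep_j \mathbf{1}\{\abs{Q_j} \le \cdot\} \big|$ over the normalized parameter set, which one controls by a Dudley/chaining argument or a direct VC-type bound since the class is parametrized by a bounded-dimensional quadratic family; this contributes a term of order $\sqrt{d/m}$, small when $m \gtrsim d$. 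Combining (i)–(ii) via the small-ball argument gives the desired lower bound on the empirical quadratic process. The main obstacle, and the step requiring the most care, is the small-ball/anti-concentration estimate: one must show $Q_j$ does not concentrate near zero \emph{uniformly} in the ratio $\norm{\vx_0}/\norm{\vh}$, including the degenerate regimes where $\vh$ is nearly parallel or nearly orthogonal to $\vx_0$ — this is where the two-sided factor $\norm{\vh}^2\big(\norm{\vh}^2+\norm{\vx_0}^2\big)$ (rather than just $\norm{\vh}^4$ or $\norm{\vh}^2\norm{\vx_0}^2$) must be justified, and handling it cleanly in the complex case (where the phase freedom $\theta^\star$ must be reconciled with the quadratic form) is the delicate point.

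Finally, one must also control from above the term that was dropped: the passage from $F(\x) \le \|\eta\|_2^2$ to the clean quadratic inequality used $(A-B)^2 \ge \tfrac12 A^2 - B^2$ with no cross-term, which is lossless enough; but if one instead keeps the cross term $-2\sum_j \eta_j Q_j(\vx_0,\vh)$, bounding it requires an upper bound on $\sum_j Q_j^2$ of the same order $m\,\norm{\vh}^2(\norm{\vh}^2+\norm{\vx_0}^2)$, which holds with high probability by a routine concentration estimate for Gaussian quadratic forms (or by the same chaining used in step (ii)). Either way, after assembling the two-sided control one divides out by $\norm{\vh}\sqrt{\norm{\vh}^2+\norm{\vx_0}^2}$ to reach $\norm{\vh} \lesssim \|\eta\|_2 / \big(\sqrt{m}\,\sqrt{\norm{\vh}^2+\norm{\vx_0}^2}\big)$; treating the cases $\norm{\vh} \lesssim \norm{\vx_0}$ and $\norm{\vh} \gtrsim \norm{\vx_0}$ separately then produces exactly $\min\{\|\eta\|_2^{1/2}/m^{1/4},\ \|\eta\|_2/(\norm{\vx_0}\sqrt{m})\}$, completing the proof.
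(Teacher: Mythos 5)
Your high-level strategy coincides with the paper's: use optimality of $\x$ to bound $\sum_{j=1}^m\big(|\nj{\va_j,\x}|^2-|\nj{\va_j,\vx_0}|^2\big)^2\lesssim\norm{\eta}^2$, lower-bound this nonnegative empirical process uniformly via Mendelson's small-ball method, and convert to a vector error bound. The genuine gap is in your core analytic claim. As you state it, the uniform bound $\inf_{\vx_0,\vh}\frac1m\sum_{j=1}^m\big(|\nj{\va_j,\vx_0+\vh}|^2-|\nj{\va_j,\vx_0}|^2\big)^2\gtrsim\norm{\vh}^2\big(\norm{\vh}^2+\norm{\vx_0}^2\big)$ is false: for $\vh=(e^{i\phi}-1)\vx_0$ the left-hand side vanishes identically while the right-hand side does not. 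So the phase alignment is not a loose end to be ``reconciled'' later; the bound can only hold on the aligned set, and the small-ball (anti-concentration) estimate at the scale $\norm{\vh}\sqrt{\norm{\vh}^2+\norm{\vx_0}^2}$, which you assert by appealing to a ``density bounded away from zero,'' is precisely the unproved content — uniformity over the ratio $\norm{\vx_0}/\norm{\vh}$ and over the alignment is where the work lies, as you yourself acknowledge. The paper supplies it by lifting: $|\nj{\va_j,\x}|^2-|\nj{\va_j,\vx_0}|^2=\va_j^*H\va_j$ with $H=\x\x^*-\vx_0\vx_0^*$ of rank at most $2$; the small-ball probability is proved at scale $\normf{H}$ via the moment computation $\E|\va^*H\va|^2\ge\normf{H}^2$, $\E|\va^*H\va|^4\le 13\,(\E|\va^*H\va|^2)^2$ and Paley--Zygmund (Theorem \ref{le:ORIP}), and the complexity term is $\E\sup_H\nj{H,\tfrac1{\sqrt m}\sum_j\epsilon_j\va_j\va_j^*}\le\sqrt{2}\,\E\norm{\tfrac1{\sqrt m}\sum_j\epsilon_j\va_j\va_j^*}\lesssim\sqrt d$ by Lemma \ref{le:simiA} — it is the Rademacher average of the quadratic-form class, not a VC-type bound for indicator classes as in your sketch. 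Together with the optimality argument (Theorem \ref{th:detstab}) this gives $\normf{\x\x^*-\vx_0\vx_0^*}\lesssim\norm{\eta}/\sqrt m$ on one event of probability $1-e^{-cm}$, uniformly in $\vx_0$ (Corollary \ref{th:liftresult}).

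Even granting that lifted bound, to finish along your route you would still need the deterministic comparison $\normf{H}\gtrsim\norm{\vh}\sqrt{\norm{\vh}^2+\norm{\vx_0}^2}$ for the aligned difference $\vh=\x-e^{i\theta^\star}\vx_0$. This is true, but it requires an argument you do not give: writing $\vbar=e^{i\theta^\star}\vx_0$ and $s=\nj{\vh,\vbar}$ (real after alignment, with $s\ge-\norm{\vx_0}^2$ and $|s|\le\norm{\vh}\norm{\vx_0}$), one expands $\normf{H}^2=\norm{\vh}^4+4\norm{\vh}^2 s+2s^2+2\norm{\vh}^2\norm{\vx_0}^2$ and minimizes over the feasible $s$. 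The paper proves only the half $\normf{H}\ge\tfrac1{\sqrt2}\norm{\vh}\norm{\vx_0}$ (its claim (\ref{cla:uv})), which yields the branch $\norm{\eta}/(\norm{\vx_0}\sqrt m)$, and obtains the branch $\sqrt{\norm{\eta}}/m^{1/4}$ more cheaply from $\norm{\x}^2\le\norm{\vx_0}^2+\normf{H}$, the triangle inequality $\min_\theta\norm{\x-e^{i\theta}\vx_0}\le\norm{\x}+\norm{\vx_0}$, and a case analysis on $\norm{\vx_0}$ versus $\sqrt{\norm{\eta}}/m^{1/4}$. Your optimality step (including the treatment of the cross term) and your final case analysis are fine; what is missing is the restriction to aligned pairs and an actual proof of the small-ball/comparison estimate at your chosen normalization.
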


According to Theorem \ref{result1}, the following holds with high probability:
\begin{equation*}
   \min_{\theta\in [0,2\pi)} \norm{\x-e^{i\theta}\vx_0}
\lesssim\,\,
  \min \dkh{\frac{\sqrt{\|\eta\|_2}}{{m}^{1/4}}, \frac{\|\eta\|_2}{\norm{\vx_0}\cdot \sqrt{m}}}
   = \left\{
   \begin{array}{ll}
   \frac{\sqrt{\|\eta\|_2}}{{m}^{1/4}}  \quad & \mbox{if} \quad \|\vx_0\|_2< \frac{\sqrt{\|\eta\|_2}}{{m}^{1/4}}\\
    \frac{\|\eta\|_2}{\norm{\vx_0}\cdot \sqrt{m}} \quad & \mbox{if} \quad \|\vx_0\|_2\ge \frac{\sqrt{\|\eta\|_2}}{{m}^{1/4}}
   \end{array}
   \right..
\end{equation*}
 In practical applications, the signals of interest $\vx_0$ usually fall in the second regime where $\|\vx_0\|_2\ge \frac{\sqrt{\|\eta\|_2}}{{m}^{1/4}}$.


The next theorem presents a lower bound for the estimation error  for any fixed
$\vx_0$, under the assumption of   $m\gtrsim d\log m$  and noise with the structures
$\norm{\eta} \asymp \sqrt{m}$, $\norms{\eta}_{\infty}\lesssim \log m$, $~| \sum_{j=1}^m \eta_j|  \lesssim m $ and $\big| \sum_{j=1}^m \eta_j\big| \gtrsim \sqrt{m}\norm{\eta} \gtrsim d
\norms{\eta}_{\infty} $.
 The result shows that the estimator (\ref{eq:mod1}) is rate optimal for
some  $\eta$ provided $m\gtrsim d\log d$.


\begin{theorem}\label{th:lowbound}
Suppose that the measurements $\va_j \sim  1/\sqrt{2}\cdot \cN(0,I_d)+i/\sqrt{2}\cdot
\cN(0,I_d) $ are i.i.d. complex Gaussian random vectors and the measurement number $m
\gtrsim d \log m$. Assume that $\eta \in \R^m$ is a noise vector  satisfying
$\norm{\eta} \asymp \sqrt{m}$,
 $\norms{\eta}_{\infty}\lesssim \log m$, $~| \sum_{j=1}^m \eta_j|  \lesssim m $
 and $d \norms{\eta}_{\infty} \lesssim
 \sqrt{m}\norm{\eta} \lesssim \big| \sum_{j=1}^m \eta_j\big|$. For  any fixed $\vx_0\in \Cd$ satisfying $\norm{\vx_0} \ge
2C\norm{\eta}/\sqrt{m}$ the following holds with probability at least $1-c'\exp(-c''
d)-c'''m^{-1}$: any solution $\x \in \Cd$ to (\ref{eq:mod1}) with
$b_j=\abs{\nj{\va_j,\vx_0}}^2+\eta_j, \; j=1,\ldots,m$, satisfies
\begin{equation*}
  \min_{\theta\in [0,2\pi)} \norm{\x-e^{i\theta}\vx_0}\,\,\gtrsim\,\,
  \min \dkh{\frac{\sqrt{\|\eta\|_2}}{{m}^{1/4}}, \frac{\|\eta\|_2}{\norm{\vx_0}\cdot \sqrt{m}}}.
\end{equation*}
Here, $ c', c'', c'''$ are universal positive constants and $C$ is the constant in Theorem \ref{result1}.
\end{theorem}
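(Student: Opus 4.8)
The plan is to lower-bound $\min_\theta \norm{\x - e^{i\theta}\vx_0}$ by exploiting the \emph{first-order optimality} of $\x$ together with a careful analysis of how the noise term enters the objective. Since $\x$ minimizes $F(\vx) := \sum_{j=1}^m (\abs{\nj{\va_j,\vx}}^2 - b_j)^2$ with $b_j = \abs{\nj{\va_j,\vx_0}}^2 + \eta_j$, we have $F(\x) \le F(e^{i\theta}\vx_0) = \sum_j \eta_j^2 = \norm{\eta}^2$ for every $\theta$. Expanding $F(\x)$ and writing $\vh := \x - e^{i\theta_0}\vx_0$ for the optimal phase $\theta_0$, we get
\begin{equation*}
\sum_{j=1}^m \big(\abs{\nj{\va_j,\x}}^2 - \abs{\nj{\va_j,\vx_0}}^2\big)^2 \;\le\; 2\sum_{j=1}^m \eta_j \big(\abs{\nj{\va_j,\x}}^2 - \abs{\nj{\va_j,\vx_0}}^2\big).
\end{equation*}
The left-hand side is a nonnegative empirical process in $\vh$; by Mendelson's small-ball method (the same machinery used for the upper bound) it is bounded below by $c\, m\, \norm{\vh}^2 (\norm{\vh}^2 + \norm{\vx_0}^2)$ with the stated probability, provided $m \gtrsim d\log m$. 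So the whole game reduces to showing that the right-hand side cannot be too small relative to $\norm{\vh}$ unless $\norm{\vh}$ is already as large as the claimed bound.

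The key step is to extract the dominant contribution of the noise cross-term. Expand $\abs{\nj{\va_j,\x}}^2 - \abs{\nj{\va_j,\vx_0}}^2 = \abs{\nj{\va_j,\vh}}^2 + 2\Real(\overline{\nj{\va_j,\vh}}\,\nj{\va_j, e^{i\theta_0}\vx_0})$, so that
\begin{equation*}
2\sum_{j=1}^m \eta_j\big(\abs{\nj{\va_j,\x}}^2 - \abs{\nj{\va_j,\vx_0}}^2\big) = 2\sum_{j=1}^m \eta_j \abs{\nj{\va_j,\vh}}^2 + 4\Real\Big(\textstyle\sum_{j=1}^m \eta_j\,\overline{\nj{\va_j,\vh}}\,\nj{\va_j,e^{i\theta_0}\vx_0}\Big).
\end{equation*}
Here the decisive point is the first sum: since $\E\abs{\nj{\va_j,\vh}}^2 = \norm{\vh}^2$ and $\sum_j \eta_j \asymp \pm\abs{\sum_j \eta_j}$, concentration of $\sum_j \eta_j \abs{\nj{\va_j,\vh}}^2$ around $\big(\sum_j \eta_j\big)\norm{\vh}^2$ (uniformly over $\vh$ on a sphere, using a net argument and the conditions $\norms{\eta}_\infty \lesssim \log m$, $d\norms{\eta}_\infty \lesssim \sqrt m\,\norm{\eta}$ to control the deviation at scale $\sqrt{m}\,\norm{\eta}\,\norm{\vh}^2 \lesssim \abs{\sum_j\eta_j}\norm{\vh}^2$) shows this term is comparable to $\abs{\sum_j \eta_j}\,\norm{\vh}^2$. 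The second sum is bounded by $O(\norm{\eta}\cdot\norm{\vh}\cdot\norm{\vx_0})$ times a $\sqrt{\log m}$-type factor via a standard chaining/Bernstein estimate, hence is $\lesssim \sqrt{m\log m}\,\norm{\vh}\norm{\vx_0}$. Chaining everything: if $\norm{\vh}$ were smaller than the claimed lower bound, the left-hand side $c\,m\,\norm{\vh}^2\norm{\vx_0}^2$ (using $\norm{\vx_0} \ge 2C\norm{\eta}/\sqrt m$ so this term dominates $m\norm{\vh}^4$) would be forced to exceed the right-hand side, giving a contradiction; unwinding produces $\norm{\vh}^2 \gtrsim \abs{\sum_j\eta_j}/m \asymp \norm{\eta}/\sqrt m \asymp \sqrt{\norm{\eta}}/m^{1/4}$ in the relevant regime, which after comparison with $\norm{\eta}/(\norm{\vx_0}\sqrt m)$ yields the stated minimum.

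The main obstacle I anticipate is obtaining the lower bound on $\sum_j \eta_j\abs{\nj{\va_j,\vh}}^2$ \emph{uniformly} in $\vh$ with the right scale. A pointwise estimate is a routine Bernstein bound, but to conclude for the actual (random, $\vh$-dependent) minimizer $\x$ one needs the inequality on a whole neighborhood, and the fluctuation of this weighted empirical process must be shown to be of strictly smaller order than its mean $\abs{\sum_j\eta_j}\norm{\vh}^2$ — this is exactly where the four hypotheses on $\eta$ (in particular $d\norms{\eta}_\infty \lesssim \sqrt m\,\norm{\eta} \lesssim \abs{\sum_j\eta_j}$ and $m\gtrsim d\log m$) are consumed, and getting the net/union-bound budget to close requires care. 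A secondary subtlety is handling the phase optimization: one should fix the representative $e^{i\theta_0}\vx_0$ nearest to $\x$ at the outset and verify that all the concentration statements are uniform over $\theta_0 \in [0,2\pi)$, which costs only an extra constant in the covering number. Beyond these, the argument is bookkeeping combining the small-ball lower bound from Theorem \ref{result1}'s proof with the noise estimates above.
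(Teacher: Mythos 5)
Your starting point is the wrong tool for a lower bound, and this is a fatal gap rather than a fixable detail. The inequality you derive from global optimality, $\sum_j A_j^2 \le 2\sum_j \eta_j A_j$ with $A_j := \abs{\nj{\va_j,\x}}^2-\abs{\nj{\va_j,\vx_0}}^2$, is a zeroth-order comparison of $F(\x)$ with $F(e^{i\theta}\vx_0)$; by Cauchy--Schwarz it yields only $\norm{\A(H)}\le 2\norm{\eta}$, which is precisely how the paper proves the \emph{upper} bound (Theorem \ref{th:detstab} and Theorem \ref{result1}). It is satisfied identically when $\vh=0$ (both sides vanish), so no concentration analysis of its two sides can force $\norm{\vh}$ away from zero: the constraint ``$F(\x)\le F(\vx_0)$'' does not exclude $\x=e^{i\theta}\vx_0$. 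Your claimed contradiction is also quantitatively backwards: as $\norm{\vh}\to 0$ your lower bound for the left side is $O(m\norm{\vh}^2\norm{\vx_0}^2)$, while the right side still contains the cross term $4\Re\big(\sum_j \eta_j\,\overline{\nj{\va_j,\vh}}\,\nj{\va_j,e^{i\theta_0}\vx_0}\big)$, whose summands are \emph{not} centered; its mean is $4\Re\big((\sum_j\eta_j)\,\vh^*e^{i\theta_0}\vx_0\big)$, of size up to $\asymp m\norm{\vh}\norm{\vx_0}$ under the hypotheses ($\abs{\sum_j\eta_j}\asymp m$ here), not the $\sqrt{m\log m}\,\norm{\vh}\norm{\vx_0}$ you assert. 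Hence for small $\norm{\vh}$ the inequality can hold comfortably and no lower bound on $\norm{\vh}$ follows.

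The paper's proof uses first-order stationarity as an exact identity rather than the comparison inequality: from $\nj{\nabla f(\x),\x}=0$ one gets
\begin{equation*}
\frac1m \sum_{j=1}^m \abs{\va_j^* \x}^4 -\frac1m \sum_{j=1}^m \abs{\va_j^* \x}^2\abs{\va_j^* \vx_0}^2= \frac1m \sum_{j=1}^m \eta_j \abs{\va_j^* \x}^2 .
\end{equation*}
The right-hand side is bounded \emph{below} uniformly, $\abs{\frac1m\sum_j\eta_j\abs{\va_j^*\x}^2}\gtrsim \norm{\eta}\norm{\x}^2/\sqrt m$, via Bernstein plus a net and Lemma \ref{le:etalowbound}; this is exactly where the assumptions $\abs{\sum_j\eta_j}\gtrsim\sqrt m\norm{\eta}\gtrsim d\norms{\eta}_\infty$ and $\norms{\eta}_\infty\lesssim\log m$ are consumed. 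The left-hand side is bounded \emph{above} by $3\norms{\x}^2\normf{H}$ with $H=\x\x^*-\vx_0\vx_0^*$, using Lemma \ref{le:upow4} (the nontrivial bound $\frac1m\sum_j\abs{\va_j^*\x}^4\le(2+\gamma)\norms{\x}^4$, valid for the minimizer and requiring $m\gtrsim d\log m$) together with Lemmas \ref{le:sunju} and \ref{le:sunju22}. Combining gives $\normf{H}\gtrsim\norm{\eta}/\sqrt m$, which is then converted to the vector distance using the bound on $\norm{\x}$ from Theorem \ref{result1}. If you prefer to stay with a comparison-based argument, you would have to exhibit a competitor point and show that every $\vx$ within the claimed radius of the circle $\{e^{i\theta}\vx_0\}$ has strictly larger objective value than that competitor; your present setup does not do this, so the proposal as written does not prove the theorem.
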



\begin{remark}
In Theorem \ref{th:lowbound}, we require $\eta$ satisfies   the conditions
$\norm{\eta} \asymp \sqrt{m}$, $\norms{\eta}_{\infty}\lesssim \log m$, $~|
\sum_{j=1}^m \eta_j| \lesssim m $ and $d \norms{\eta}_{\infty} \lesssim
 \sqrt{m}\norm{\eta} \lesssim \big| \sum_{j=1}^m \eta_j\big|$. In fact, there exist many noises
satisfying them. For instance, if $\eta_j$ are generated independently according to the
biased Gaussian distribution, i.e., $\eta_j \sim N(\mu,\sigma^2)$ for some non-zero
constant  $ \mu$, then the noise vector $\eta$ satisfies those conditions with high
probability.
\end{remark}



%

We next turn to the phase retrieval for sparse signals. This is motivated by the
signal $\vx_0 \in \C^d$ admitting a sparse representation under some linear
transformation in many applications. Without loss of generality, we  assume that
$\vx_0 \in \Cd$ is an $s$-sparse vector and wish to estimate $\vx_0$ from
$\vb=(b_1,\ldots,b_m)^\T$ by solving
\begin{equation}\label{quartic model sparse}
\min_{\vx\in \C^d} \quad \sum_{j=1}^m \xkh{\abs{\nj{\va_j,\vx}}^2-b_j}^2 \quad \mbox{s.t.} \quad \normone{\vx}\le R.
\end{equation}
The estimation performance of (\ref{quartic model sparse}) is stated as follows.

\begin{theorem} \label{result sparse}
Suppose that the measurements $\va_j \sim  1/\sqrt{2}\cdot \cN(0,I_d)+i/\sqrt{2}\cdot
\cN(0,I_d) $ are i.i.d. complex Gaussian random vectors and the measurement number $m
\gtrsim s \log (ed/s)$. Then the following holds with probability at least
$1-5\exp(-c m)$ where $c$ is a constant: for any $s$-sparse vector $\vx_0\in \Cd$,
\begin{equation*}
  \min_{\theta\in [0,2\pi)} \norm{\x-e^{i\theta}\vx_0}  \; \lesssim \; \min \dkh{\frac{\sqrt{\|\eta\|_2}}{{m}^{1/4}}, \frac{\|\eta\|_2}{\norm{\vx_0}\cdot \sqrt{m}}},
\end{equation*}
where $\x \in \Cd$ is a solution to (\ref{quartic model sparse}) with parameter
$R:=\normone{\vx_0}$.
\end{theorem}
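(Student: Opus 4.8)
The plan is to mirror the argument behind Theorem~\ref{result1}, but to carry the $\ell_1$-constraint through the whole chain so that every ambient-dimension factor $d$ is replaced by the effective dimension $s\log(ed/s)$. Write $\vx=\x$ for a minimizer of (\ref{quartic model sparse}) with $R=\normone{\vx_0}$, and set $\vh:=\x-e^{i\theta_0}\vx_0$ where $\theta_0$ achieves $\min_\theta\norm{\x-e^{i\theta}\vx_0}$. Since $\normone{\x}\le\normone{\vx_0}$, the error vector $\vh$ obeys the usual cone condition: decomposing coordinates into $S:=\supp(\vx_0)$ (of size $s$) and $S^c$, one gets $\normone{(\vh)_{S^c}}\le \normone{(\vh)_S}+ \text{(phase correction terms)}$, hence $\normone{\vh}\lesssim \sqrt{s}\,\norm{\vh}$ up to lower-order terms. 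Thus $\vh$ lives, after rescaling, in the set $T_s:=\{\vu\in\Cd:\norm{\vu}=1,\ \normone{\vu}\le c\sqrt{s}\}$, whose Gaussian width is $O(\sqrt{s\log(ed/s)})$.

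The core is a one-sided (small-ball) lower bound for the nonnegative empirical process
\[
f_m(\vh)\;:=\;\frac1m\sum_{j=1}^m\Bigl(\abs{\nj{\va_j,\x}}^2-\abs{\nj{\va_j,\vx_0}}^2\Bigr)^2,
\]
uniformly over all error directions in the cone $T_s$. Concretely I would show that, with probability at least $1-\exp(-cm)$ once $m\gtrsim s\log(ed/s)$,
\[
f_m(\vh)\;\gtrsim\; \norm{\vh}^2\Bigl(\norm{\vh}^2+\norm{\vx_0}^2\Bigr)
\]
for every $\vx_0$ that is $s$-sparse and every feasible $\x$ (so $\vh$ in the cone). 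This is exactly the quantity controlled in the proof of Theorem~\ref{result1}, but with the covering/chaining argument run over the restricted set $T_s$ rather than the whole sphere; Mendelson's small-ball method gives the lower bound from (i) a pointwise small-ball estimate for $\bigl(\abs{\nj{\va,\x}}^2-\abs{\nj{\va,\vx_0}}^2\bigr)^2$, which is a property of a pair of rank-$\le 2$ complex quadratics and does not see sparsity, and (ii) a bound on the empirical Rademacher complexity of the relevant function class, which is where the $\sqrt{s\log(ed/s)/m}$ enters via the Gaussian width of $T_s$. On the other side, optimality of $\x$ gives $f_m(\vh)\le \frac1m\sum_j(\abs{\nj{\va_j,\vx_0}}^2-b_j)^2-\text{(cross term)}$; expanding $b_j=\abs{\nj{\va_j,\vx_0}}^2+\eta_j$ and using $\x$ feasible, the right-hand side is bounded by a linear-in-$\eta$ term of size $\lesssim \frac1m\sum_j\abs{\eta_j}\cdot\abs{\,\abs{\nj{\va_j,\x}}^2-\abs{\nj{\va_j,\vx_0}}^2\,}$, which by Cauchy–Schwarz is $\lesssim \frac{\norm{\eta}}{\sqrt m}\sqrt{f_m(\vh)}$. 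Combining the two bounds yields $\norm{\vh}^2(\norm{\vh}^2+\norm{\vx_0}^2)\lesssim \frac{\norm{\eta}^2}{m}$, and solving this quadratic inequality in $\norm{\vh}^2$ gives precisely $\norm{\vh}\lesssim\min\{\|\eta\|_2^{1/2}/m^{1/4},\ \|\eta\|_2/(\norm{\vx_0}\sqrt m)\}$.

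The main obstacle is step (ii): obtaining the uniform lower bound on $f_m$ over the sparse cone with the sharp sample complexity $m\gtrsim s\log(ed/s)$. The function $\vh\mapsto(\abs{\nj{\va,\x}}^2-\abs{\nj{\va,\vx_0}}^2)^2$ is a quartic in the entries of $\x$, so the relevant class is not sub-Gaussian and one must truncate or use a chaining argument adapted to heavy-tailed quartic forms; controlling the empirical complexity over $T_s$ requires a covering-number bound for $T_s$ at the appropriate scale, which is standard but must be combined carefully with the degree-$4$ growth so that the error terms are genuinely of lower order than the main term $\norm{\vh}^2(\norm{\vh}^2+\norm{\vx_0}^2)$. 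A secondary technical point is that the cone condition for $\vh$ involves the phase $e^{i\theta_0}\vx_0$ rather than $\vx_0$ itself; I would handle this by noting $\normone{e^{i\theta_0}\vx_0}=\normone{\vx_0}$ so the support-based feasibility argument goes through verbatim, and the small-ball lower bound is already phase-invariant by construction. Everything else — the deterministic inequality from optimality, the Cauchy–Schwarz bound on the noise term, and the final algebra — is identical to the dense case in Theorem~\ref{result1}.
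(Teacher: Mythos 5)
Your proposal is correct and follows essentially the same route as the paper: the cone condition $\normone{\vh}\lesssim\sqrt{s}\norm{\vh}$ from feasibility, the optimality-plus-Cauchy--Schwarz bound (equivalent to $\norm{\A(H)}\le 2\norm{\eta}$ for $H=\x\x^*-\vx_0\vx_0^*$), and a Mendelson small-ball lower bound for the nonnegative quartic process uniformly over the sparse cone, with the small-ball probability being sparsity-free and the complexity term controlled by covering numbers of the $\ell_1$-constrained set of size $\exp(Cs\log(ed/s))$ --- this is exactly the paper's Lemma \ref{le:cruciallem} combined with the argument of Theorem \ref{result1}. Your closing algebra, solving $\norm{\vh}^2(\norm{\vh}^2+\norm{\vx_0}^2)\lesssim\norm{\eta}^2/m$ to get both regimes of the minimum at once, is a slightly more direct packaging of the paper's case analysis but not a different method.
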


\begin{remark}
In \cite{cai2016optimal}, the authors establish the estimation error using the
Thresholded Wirtinger Flow approach under the centered sub-exponential noise for the
real-valued signals. In short, they show that, with probability at least
$1-47/m-10/e^s$, the estimator $\x$ given by the Thresholded Wirtinger Flow algorithm
obeys
$
\min\dkh{\x-\vx_0, \x+\vx_0} \lesssim \frac{\sigma}{\norm{\vx_0}} \sqrt{\frac{s\log
d}{m}}
$
provided $m\ge O(s^2 \log(md))$, where $\sigma:=\max_{1\le j\le m}
\norms{\eta_j}_{\psi_1}$. Although the estimation error is slightly better than the
upper bound given in Theorem \ref{result sparse}, however, our result holds for any
noise structure and complex-valued signals.  Moreover, the probability of failure in
Theorem \ref{result sparse}  is exponentially small in the number of measurements.
\end{remark}

\subsection{Numerical Experiments}
In this subsection, we report some numerical experiments to verify  that
the global solutions to (\ref{eq:mod1}) and (\ref{eq:mod2sparse}) can
be obtained efficiently and the results given in Subsection \ref{sec:contrib} are rate optimal.
In our experiments, the target signal $\vx_0$ and the measurement vectors
$\va_1,\ldots,\va_m$ are independent standard complex Gaussian random vectors,
whereas the noise vector $\eta \in \R^m $ is a real Gaussian random vector with
entries $\eta_j \sim  N(1,1)$.

\begin{example}
In this example, we verify the estimation error presented in Theorem \ref{result1} is
rate optimal.  We consider the case where $d=500$ and vary $m$ within the range
$[4d,50d]$. To solve the estimator (\ref{eq:mod1}), we use the truncated spectral
method proposed in \cite{TWF} to obtain a good initial guess and then refine it by
Wirtinger Flow \cite{WF}. Figure \ref{figure:1} depicts the ratio $\rho_m$ against
the number of measurements $m$,  when averaged over $100$ times independent trials.
Here, the ratio $\rho_m$ is defined as
\begin{equation} \label{eq:rhom}
\rho_m:= \frac{\text{dist}(\x,\vx_0)}{\norm{\eta}/(\norm{\vx_0}\cdot \sqrt{m})}.
\end{equation}
Numerical results show that $\rho_m$ tends to be a constant around $0.37$, which
verifies the estimation error $\frac{\norm{\eta}}{\norm{\vx_0}\cdot \sqrt{m}}$
presented in Theorem \ref{result1} is rate optimal.
\end{example}

\begin{figure}[H]
\centering
\includegraphics[width=0.4\textwidth]{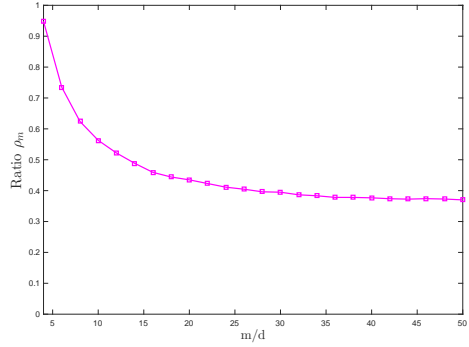}
\caption{The ratio $\rho_m$ versus the number of measurements $m$ under Gaussian noises with $d=500$.}
\label{figure:1}
\end{figure}

\begin{example}
The purpose of this numerical experiment is to verify  the estimation bound given in
Theorem \ref{result sparse} is  rate optimal when $m=O(s \log (ed/s))$. We choose
$d=1000$ and take the sparsity level $s=100$. The support of $\vx_0$ is uniformly
distributed at random.  The non-zero entries of $\vx_0$ are chosen randomly according to a
standard normal distribution. We vary $m$ between $\lceil 6s\log (ed/s)\rceil$ and
$\lceil 20s\log (ed/s)\rceil $. For each fixed $m$, we run $100$ times trials and
calculate the average ratio $\rho_m$ defined in \eqref{eq:rhom}. The constrained
optimization problem (\ref{quartic model sparse}) is solved by combining the
initialization method introduced in \cite{cai2016optimal} and the projection gradient
descent onto the $\ell_1$- ball \cite{Duchi08}. The result is plotted in Figure
\ref{figure:2}. We can see that $\rho_m$ tends to be a constant around $0.72$, which
verifies the estimation error $\frac{\norm{\eta}}{\norm{\vx_0}\cdot \sqrt{m}}$
presented in Theorem \ref{result sparse} is rate optimal for sparse signals.
\end{example}

\begin{figure}[H]
\centering
\includegraphics[width=0.4\textwidth]{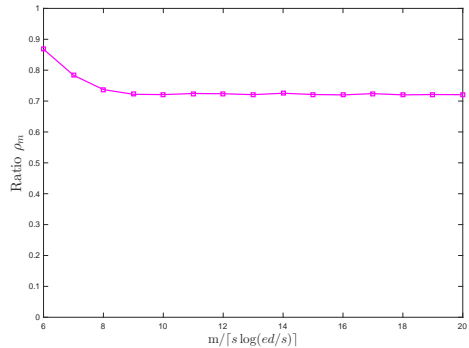}
\caption{The ratio $\rho_m$ versus the number of measurements $m$ for $s$-sparse signals with $d=1000$, $s=100$.}
\label{figure:2}
\end{figure}

\subsection{Notations}
Throughout this paper, we assume the measurements $\va_j\in \Cd, \; j=1,\ldots,m $
are i.i.d. complex Gaussian random vectors. Here we say  $\va\in \Cd$ is a complex Gaussian
random vector if $\va \sim 1/\sqrt{2}\cdot \cN(0,I_d)+i/\sqrt{2}\cdot \cN(0,I_d)$.
We write $\vz \in \mathbb{S}_{\C}^{d-1}$ if $\vz \in \C^d$ and $\norm{\vz}=1$.
We use the notations $\norm{\cdot}$ and $\norms{\cdot}_*$ to denote the operator norm and
nuclear norm of a matrix, respectively. For any $A,B\in \R$, we use $ A \lesssim B$
to denote $A\le C_0 B$ where $C_0\in \R_+$ is an  absolute constant.  The notion
$\gtrsim$ can be defined similarly. Moreover, $A \asymp B$ means that there exist constants $C_1,C_2>0$ such that $C_1 A \le B \le C_2 A$.
 In this paper, we use  $C,c$ and the subscript
(superscript)   form of them to denote universal constants whose values vary with the
context.

\subsection{Organization}
The paper is organized as follows. In Section 2, after introducing some definitions,
we study the recovery of low-rank matrices from rank-one measurements, which plays a
key role in the proofs of main results. We also believe that the results in Section 2
are of independent interest. Combining the  Mendelson's small-ball method and the
results in Section 2, we present the proofs of Theorem \ref{result1} and Theorem
\ref{th:lowbound} in Section 3.  The proof of Theorem \ref{result sparse}  is given
in Section 4. A brief discussion is presented in Section 5.  Appendix collects the
technical lemmas needed in the proof.

\section{ The recovery of low-rank matrices from rank-one measurements }

 For convenience, we let $\A: \H(\C) \to
\R^m$ be a linear map  which is defined as
\begin{equation} \label{eq:mapA}
\A(X):=\xkh{\va_1^* X\va_1, \va_2^* X\va_2 ,\ldots, \va_m^* X\va_m }^\T,
\end{equation}
where $\H(\C):=\{X\in \C^{d\times d}: X^*=X\}$.
Its dual operator $\A^*: \R^m\to  \H(\C)$ is given by
\begin{equation}\label{eq:dualo}
\A^*(\vz)=\sum_{j=1}^m z_j \va_j\va_j^*.
\end{equation}
In this section, we focus on the following minimization problem:
\begin{equation} \label{mod:determ}
\min_{X\in \H(\C)}  \norm{\A(X)-\vb}^2 \quad \mbox{s.t.}\quad  \rank{(X)} \le r.
\end{equation}
Set $\H_r(\C):=\{X\in \C^{d\times d}: X^*=X, {\rm rank}(X)\leq r\}$.
 A simple observation is  that $\x$ is a solution to
(\ref{eq:mod1}) if and only if $\X:=\x\x^*$ is a solution to (\ref{mod:determ}) with
$r=1$. Hence, (\ref{mod:determ}) can be regarded as a lifted version of
(\ref{eq:mod1}).
To prove the main results of this paper, we first characterize the estimation
performance of  (\ref{mod:determ}).

\subsection{The performance of (\ref{mod:determ})}
The main result of this section is  Corollary  \ref{th:liftresult} which presents the
estimation performance of (\ref{mod:determ}). We believe some results in this section are also of independent interest in the area  of low-rank matrix recovery from rank-one measurements
\cite{HX,rank1,cai2015rop,chen2015exact}.

We first introduce the definition of Lower Restricted Isometry Property ( see, e.g., \cite{bourrier2014f,keriven2018instance}).
\begin{definition}\label{de:rip}[Lower Restricted Isometry Property]
A linear map $\A: \H(\C) \to \R^m$ is said to have the  {\em Lower Restricted
Isometry Property} (LRIP) condition of order $r$ and constant $c_0$ if the following
holds
\[
\norm{\A(X)} /\sqrt{m} \ge c_0\normf{X}
\]
for all non-zero matrices $X\in \H_r(\C) $.
\end{definition}

With the LRIP condition in place, we can demonstrate that the optimization
(\ref{mod:determ}) is stable, as stated in the following theorem.
\begin{theorem} \label{th:detstab}
Suppose $\A$ satisfies the LRIP condition with order $2r$ and constant $c_0>0$, then
the solution $\X$ to (\ref{mod:determ})  satisfies
\[
 \normf{\X-X_0}   \lesssim  \frac{\norm{\eta}}{\sqrt{m}}
\]
for all matrices $X_0\in \H_r(\C)$ and $\vb=\A(X_0 )+\eta$  with the noise vector
$\eta \in \R^m$.
\end{theorem}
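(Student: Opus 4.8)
The plan is to exploit optimality of $\X$ together with the LRIP condition applied to the difference $\X - X_0$, after checking that this difference has rank at most $2r$. First I would record the key rank fact: both $\X$ and $X_0$ lie in $\H_r(\C)$, so $\X - X_0 \in \H_{2r}(\C)$, and hence the LRIP of order $2r$ gives $\norm{\A(\X - X_0)} \ge c_0 \sqrt{m}\,\normf{\X - X_0}$. This is the only place the order-$2r$ (rather than order-$r$) hypothesis is used, and it is the crucial structural observation.

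Next I would use optimality. Since $\X$ minimizes $\norm{\A(X) - \vb}^2$ over $\H_r(\C)$ and $X_0$ is feasible for (\ref{mod:determ}), we have $\norm{\A(\X) - \vb} \le \norm{\A(X_0) - \vb}$. Substituting $\vb = \A(X_0) + \eta$ turns the right-hand side into $\norm{\eta}$, so $\norm{\A(\X) - \A(X_0) - \eta} \le \norm{\eta}$. By the triangle inequality,
\[
\norm{\A(\X - X_0)} \;\le\; \norm{\A(\X - X_0) - \eta} + \norm{\eta} \;\le\; 2\norm{\eta}.
\]
Combining this with the LRIP lower bound from the previous step yields
\[
c_0 \sqrt{m}\,\normf{\X - X_0} \;\le\; \norm{\A(\X - X_0)} \;\le\; 2\norm{\eta},
\]
and therefore $\normf{\X - X_0} \le \frac{2}{c_0}\cdot\frac{\norm{\eta}}{\sqrt{m}} \lesssim \frac{\norm{\eta}}{\sqrt{m}}$, which is the claimed bound (with the implied constant depending only on $c_0$).

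There is essentially no hard analytic obstacle here; the argument is the standard "optimality plus restricted isometry" two-line estimate familiar from compressed sensing and low-rank recovery. The only point requiring a moment's care — and the one I would state explicitly — is the rank bookkeeping: the LRIP is assumed only for matrices of rank $\le 2r$, and one must verify the error matrix $\X - X_0$ genuinely falls in that class (it does, since rank is subadditive and $\H(\C)$ is closed under subtraction, so $\X - X_0$ is Hermitian of rank $\le 2r$). The real work of the paper is of course establishing that the LRIP actually holds with high probability under rank-one complex Gaussian measurements when $m \gtrsim d$ (or $m\gtrsim r d$), which is handled elsewhere via Mendelson's small-ball method; Theorem \ref{th:detstab} itself is purely deterministic given that input.
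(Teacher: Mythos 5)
Your argument is correct and coincides with the paper's own proof: both use optimality of $\X$ against the feasible point $X_0$, substitute $\vb=\A(X_0)+\eta$, apply the triangle inequality to get $\norm{\A(\X-X_0)}\le 2\norm{\eta}$, and then invoke the order-$2r$ LRIP on the rank-$\le 2r$ Hermitian difference $\X-X_0$. No issues.
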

\begin{proof}
Since $\X$ is the global solution to (\ref{mod:determ}) and $X_0$ is a feasible
point, we have
\[
\norm{\A(\X)-\vb} \le \norm{\A(X_0)-\vb}.
\]
Noting $\vb=\A(X_0 )+\eta$, we obtain that
\[
\norm{\A(H)-\eta}\le \norm{\eta},
\]
where $H=\X-X_0 \in \H_{2r}(\C)$.  Since $\A$ satisfies the LRIP condition, we have
\[
c_0 \sqrt{m} \normf{H} \le \norm{\A(H)} \le \norm{\A(H)-\eta}+ \norm{\eta} \le 2 \norm{\eta}.
\]
Consequently,
\[
 \normf{H} \lesssim \norm{\eta}/\sqrt{m}.
\]
We arrive at the conclusion.
\end{proof}

The next result shows that $\A$ satisfies  the LRIP condition with high probability
provided $\va_j, j=1,\ldots,m$, are i.i.d. complex Gaussian random vectors. We
postpone its proof  to the end of this section.

\begin{theorem} \label{le:ORIP}
 Suppose that  $\va_j\in \C^d ,j=1,\ldots,m, $ are i.i.d. complex Gaussian random vectors and  $t, r \in {\mathbb Z}_{\geq 1} $ satisfy $t\cdot r<d$.
If $m \gtrsim tdr$
   then with probability at least $1-\exp(-c m)$, the linear map $\A$ defined
   in \eqref{eq:mapA} satisfies LRIP condition of order $t\cdot r$ and constant $c_0$,
   where  $c, c_0>0$ are constants independent of $d,r$ and $t$.
 \end{theorem}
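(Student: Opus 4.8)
\emph{Proof proposal.} The plan is to deduce the LRIP from Mendelson's small-ball method. By homogeneity of the inequality $\norm{\A(X)}/\sqrt m\ge c_0\normf{X}$ it suffices to restrict to the set $E:=\{X\in\H_{tr}(\C):\normf{X}=1\}$ and show that, with probability at least $1-\exp(-cm)$, one has $\inf_{X\in E}\norm{\A(X)}\ge c_0\sqrt m$. Viewing $\H(\C)$ as a real Euclidean space and writing $\A(X)_j=\jkh{\va_j\va_j^*,X}$ for the Hilbert--Schmidt pairing, the small-ball inequality (Mendelson; see Tropp) gives, for every $\zeta>0$ and $s>0$, with probability at least $1-\exp(-s^2/2)$,
\[
\inf_{X\in E}\norm{\A(X)}\ \ge\ \zeta\sqrt m\,Q_{2\zeta}(E)-2W_m(E)-\zeta s,
\]
where $Q_{2\zeta}(E):=\inf_{X\in E}\PP\{\abs{\va^*X\va}\ge 2\zeta\}$ is the marginal small-ball function and $W_m(E):=\E\sup_{X\in E}\jkh{H_m,X}$ is the mean empirical width, with $H_m:=\frac1{\sqrt m}\sum_{j=1}^m\varepsilon_j\va_j\va_j^*$ and $\varepsilon_j$ i.i.d.\ Rademacher signs independent of the $\va_j$. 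So everything reduces to (i) a uniform absolute lower bound on $Q_{2\zeta}(E)$ for a well-chosen absolute $\zeta$, and (ii) the bound $W_m(E)\lesssim\sqrt{tdr}$; then taking $s\asymp\sqrt m$ and imposing $m\gtrsim tdr$ lets the last two terms absorb at most half of the first.

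For step (i), diagonalize $X=\sum_{i=1}^{tr}\lambda_i\vu_i\vu_i^*$ with $\sum_i\lambda_i^2=\normf{X}^2=1$. Since $\va$ is complex Gaussian and the $\vu_i$ are orthonormal, the scalars $\abs{\jkh{\vu_i,\va}}^2$ are i.i.d.\ standard exponential, so $Y:=\va^*X\va=\sum_i\lambda_i\abs{\jkh{\vu_i,\va}}^2$ is a degree-two polynomial in independent real Gaussians with $\E Y=\tr X$ and $\E Y^2=(\tr X)^2+1\ge 1$. Gaussian hypercontractivity gives $\E Y^4\le 81(\E Y^2)^2$, whence by the Paley--Zygmund inequality
\[
\PP\Big\{Y^2\ge\tfrac12\E Y^2\Big\}\ \ge\ \frac{(1/2)^2(\E Y^2)^2}{\E Y^4}\ \ge\ \frac1{324}.
\]
Since $\E Y^2\ge1$, this yields $\PP\{\abs{Y}\ge 2\zeta_0\}\ge1/324$ uniformly over $E$ with $2\zeta_0:=1/\sqrt2$, so $Q_{2\zeta_0}(E)\ge1/324=:q_0$, a constant independent of $d,r,t$.

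For step (ii), on rank-$\le tr$ matrices on the Frobenius sphere we have the Ky Fan--type identity $\sup_{X\in E}\jkh{H_m,X}=\big(\sum_{i=1}^{tr}\sigma_i(H_m)^2\big)^{1/2}\le\sqrt{tr}\,\norm{H_m}$, where $\sigma_1(H_m)\ge\sigma_2(H_m)\ge\cdots$ are the singular values (here the hypothesis $tr<d$ is used). Hence $W_m(E)\le\sqrt{tr}\cdot\E\norm{\tfrac1{\sqrt m}\sum_{j=1}^m\varepsilon_j\va_j\va_j^*}$, and it remains to show $\E\norm{\tfrac1{\sqrt m}\sum_j\varepsilon_j\va_j\va_j^*}\lesssim\sqrt d$ whenever $m\gtrsim d$. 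For a fixed $\vu\in\mathbb S_{\C}^{d-1}$, the sum $\sum_j\varepsilon_j\abs{\vu^*\va_j}^2$ consists of independent centered sub-exponential terms with parameter $O(1)$ and thus has a Bernstein tail of order $\exp(-c\min(\tau^2/m,\tau))$; combined with the standard events $\max_j\norm{\va_j}^2\lesssim d$ and $\norm{\tfrac1m\sum_j\va_j\va_j^*}\lesssim1$ (which hold with probability $1-e^{-cd}$ for $m\gtrsim d$) and a union bound over a $\tfrac12$-net of $\mathbb S_{\C}^{d-1}$ of cardinality $e^{O(d)}$, this upgrades to $\norm{\sum_j\varepsilon_j\va_j\va_j^*}\lesssim\sqrt{md}$ with probability $1-e^{-cd}$; integrating the tail gives the expectation bound. (A direct matrix-Khintchine estimate would only give $\sqrt{md\log d}$, so this covering argument is what lets the final requirement be $m\gtrsim tdr$ with no logarithmic factor.)

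Finally, assembling: choose $\zeta=\zeta_0$ and $s=\tfrac{q_0}{2}\sqrt m$ in the small-ball inequality, so that with probability at least $1-\exp(-q_0^2m/8)$,
\[
\inf_{X\in E}\norm{\A(X)}\ \ge\ \tfrac{\zeta_0q_0}{2}\sqrt m-2C_1\sqrt{tdr}.
\]
If $m\ge(8C_1/(\zeta_0q_0))^2\,tdr$ the second term is at most a quarter of the first, giving $\inf_{X\in E}\norm{\A(X)}\ge\tfrac{\zeta_0q_0}{4}\sqrt m$; rescaling, $\norm{\A(X)}/\sqrt m\ge c_0\normf{X}$ for all $X\in\H_{tr}(\C)$ with $c_0:=\zeta_0q_0/4$, which is the LRIP of order $tr$. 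I expect the real work to be concentrated in step (ii) --- getting the operator-norm bound on the symmetrized sum of the sharp order $\sqrt d$ rather than $\sqrt{d\log d}$, so that no logarithm is lost in the sample size --- while step (i) and the final assembly are essentially routine.
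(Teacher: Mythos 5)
Your proposal is correct and follows essentially the same route as the paper's proof: Mendelson's small-ball inequality on the Frobenius sphere of rank-$\le tr$ Hermitian matrices, a Paley--Zygmund lower bound on the small-ball probability, the estimate $W_m\le\sqrt{tr}\,\E\bigl\|\tfrac1{\sqrt m}\sum_{j}\varepsilon_j\va_j\va_j^*\bigr\|\lesssim\sqrt{tdr}$ obtained from a Bernstein-plus-net bound on the symmetrized sum (with the tail integrated to get the expectation), and the final choice of threshold and $s\asymp\sqrt m$ under $m\gtrsim tdr$. The only cosmetic deviations are that you bound $\E Y^4/(\E Y^2)^2$ by degree-two Gaussian hypercontractivity (constant $81$) where the paper computes the fourth moment explicitly (constant $13$), and your ``$\tfrac12$-net'' should be an $\epsilon$-net with $\epsilon<1/2$ (e.g.\ $1/4$) for the quadratic-form reduction; both affect only absolute constants.
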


As a direct consequence of Theorem \ref{th:detstab}  and Theorem \ref{le:ORIP}, the
estimation performance of optimization (\ref{mod:determ}) is given below.

\begin{corollary}\label{th:liftresult}
Suppose that  $\va_j\in \C^d, j=1,\ldots,m,$ are i.i.d. complex Gaussian random
vectors. If $m \gtrsim dr$, then the following  holds  with probability at least
$1-\exp(-c m)$: for any $X_0\in \H_r(\C)$, the solution $\X$ to (\ref{mod:determ})
with noisy measurements  $b_j=\va_j^* X_0 \va_j +\eta_j, j=1,\ldots,m$, satisfies
\begin{equation*}
  \normf{\X-X_0}\,\,   \lesssim\,\, \frac{\norm{\eta}}{\sqrt{m}},
\end{equation*}
where $\eta:=(\eta_1,\ldots,\eta_m)^\T \in \R^m$ is a noise vector and $c$ is a positive constant.
\end{corollary}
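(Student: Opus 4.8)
The plan is simply to compose Theorem \ref{th:detstab} with Theorem \ref{le:ORIP}: the corollary is a direct consequence of these two statements, with all the genuine content having been pushed into them (in particular into the still-deferred proof of Theorem \ref{le:ORIP}). So the proof I would write is essentially two lines of bookkeeping.

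First I would invoke Theorem \ref{le:ORIP} with $t=2$ and the given rank parameter $r$. Since $\H_r(\C)-\H_r(\C)\subseteq\H_{2r}(\C)$, what Theorem \ref{th:detstab} needs is the LRIP condition of order $2r$, and this is exactly what the $t=2$ instance of Theorem \ref{le:ORIP} delivers. The hypothesis $m\gtrsim dr$ implies $m\gtrsim t\,d\,r$ after enlarging the absolute constant (one just needs the constant in $m\gtrsim dr$ to dominate twice the constant appearing in Theorem \ref{le:ORIP}), and the side condition $t\cdot r<d$ holds in the regime of interest (it is automatic for phase retrieval, where $r=1$). Hence, with probability at least $1-\exp(-cm)$, the linear map $\A$ of \eqref{eq:mapA} satisfies the LRIP condition of order $2r$ with some absolute constant $c_0>0$. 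Crucially, this is a property of the measurement vectors $\va_1,\dots,\va_m$ alone, uniform over $X_0$ and $\eta$.

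On this event I would then apply Theorem \ref{th:detstab}, which is a purely deterministic statement once LRIP of order $2r$ is granted. For any $X_0\in\H_r(\C)$ the observation model $b_j=\va_j^*X_0\va_j+\eta_j$ is precisely $\vb=\A(X_0)+\eta$, so Theorem \ref{th:detstab} gives $\normf{\X-X_0}\lesssim\norm{\eta}/\sqrt{m}$ for the global minimizer $\X$ of \eqref{mod:determ}. Since the good event has probability at least $1-\exp(-cm)$ and the estimate holds on it simultaneously for every $X_0\in\H_r(\C)$ and every noise vector $\eta$, this is exactly the asserted conclusion; the implied constant is $2/c_0$, coming from the chain $c_0\sqrt{m}\,\normf{H}\le\norm{\A(H)}\le\norm{\A(H)-\eta}+\norm{\eta}\le 2\norm{\eta}$ with $H=\X-X_0$ used inside Theorem \ref{th:detstab}.

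There is no real obstacle at the level of this corollary — the only care required is the constant bookkeeping described above and checking that the order-$2r$ LRIP (rather than order-$r$) is the relevant one. The substantive difficulty is entirely in the proof of Theorem \ref{le:ORIP}, namely establishing the lower isometry bound for rank-one complex Gaussian measurements with only $O(dr)$ samples, which is handled separately.
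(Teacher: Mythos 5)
Your proposal is correct and follows essentially the same route as the paper: take $t=2$ in Theorem \ref{le:ORIP} to obtain the LRIP of order $2r$ on an event of probability at least $1-\exp(-cm)$, then apply the deterministic stability result of Theorem \ref{th:detstab}. The extra bookkeeping you note (the constant in $m\gtrsim dr$, uniformity in $X_0$ and $\eta$, and the implied constant $2/c_0$) is consistent with, and slightly more explicit than, the paper's two-line argument.
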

\begin{proof}
Taking $t=2$ in Theorem \ref{le:ORIP}, it then follows that with probability at least $1-\exp(-c m)$, the linear map
$\A$ defined in \eqref{eq:mapA} satisfies LRIP condition of order $2r$ and constant $c_0$ for
some constants $c,c_0>0$. Combining with Theorem \ref{th:detstab}, we complete the
proof.

\end{proof}

\subsection{Proof of Theorem \ref{le:ORIP}}

In this subsection, we will establish  the LRIP condition of $\A$. Before proceeding,
we gather some lemmas which are useful in our arguments.
\subsubsection{Lemmas}
The Mendelson's  {\em small-ball method} (see \cite{tropp2015convex}) plays a key role in
our proof, which is a  strategy to establish a lower bound for $\inf_{\vx\in E}
\sum_{j=1}^m \abs{\nj{\vx,\phi_j}}^2$ where $\phi_j\in \Rd$ are independent random
vectors and $E$ is a subset of $\Rd$.
\begin{lemma}\cite[Proposition 5.1]{tropp2015convex} \label{le:lower}
Fix $E\subset \Rd$ and let $\phi_1,\ldots,\phi_m$ be independent copies of a random
vector $\phi $ in $\Rd$. For any $\xi\ge 0$, set
\[
Q_\xi(E,\phi):=\inf_{\vu\in E} \PP\dkh{\abs{\nj{\phi,\vu}}\ge \xi}
\]
and
\[
W_m(E,\phi):=\E \sup_{\vu\in E} \nj{\vh,\vu} \quad \mbox{where} \quad \vh:=\frac{1}{\sqrt{m}} \sum_{j=1}^m \epsilon_j \phi_j
\]
where $\epsilon_1,\ldots,\epsilon_m$ are independent Rademacher random variables.
Then for any $\xi>0$ and $t>0$ the following holds with probability at least
$1-\exp(-t^2/2)$:
\[
\inf_{\vu\in E} \xkh{\sum_{j=1}^m \abs{\nj{\phi_j,\vu}}^2}^{1/2}\,\, \ge\,\, \xi\sqrt{m}Q_{2\xi}(E,\phi)-2W_m(E,\phi)-\xi t.
\]
\end{lemma}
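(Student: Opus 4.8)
The plan is to prove Lemma~\ref{le:lower} by Mendelson's small-ball argument: one truncates each term of the empirical quadratic form at the level $2\xi$, replaces the discontinuous truncation by a Lipschitz surrogate so that the resulting nonnegative empirical process becomes amenable to symmetrization and contraction, controls its fluctuation, and then reads off the bound.

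First I would fix the piecewise-linear function $\varphi(s):=\min\dkh{1,\ (s/\xi-1)_+}$ on $s\ge 0$, which satisfies $\1_{\{s\ge 2\xi\}}\le\varphi(s)\le\1_{\{s\ge\xi\}}$, is $\tfrac1\xi$-Lipschitz, and vanishes at the origin. From $\varphi(s)\le\1_{\{s\ge\xi\}}$ one gets $s\ge\xi\,\varphi(s)$ for every $s\ge 0$, so for each fixed $\vu\in E$ Cauchy--Schwarz gives
\[
\xkh{\sum_{j=1}^m\abs{\nj{\phi_j,\vu}}^2}^{1/2}\ \ge\ \xi\xkh{\sum_{j=1}^m\varphi(\abs{\nj{\phi_j,\vu}})^2}^{1/2}\ \ge\ \frac{\xi}{\sqrt m}\sum_{j=1}^m\varphi(\abs{\nj{\phi_j,\vu}}).
\]
Hence it suffices to lower bound $\inf_{\vu\in E}\tfrac1m\sum_{j=1}^m\varphi(\abs{\nj{\phi_j,\vu}})$. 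Since $\varphi\ge\1_{\{\cdot\ge 2\xi\}}$ we have $\E\,\varphi(\abs{\nj{\phi,\vu}})\ge\PP\dkh{\abs{\nj{\phi,\vu}}\ge 2\xi}\ge Q_{2\xi}(E,\phi)$ for all $\vu$, so setting $Z:=\sup_{\vu\in E}\big(\E\,\varphi(\abs{\nj{\phi,\vu}})-\tfrac1m\sum_{j=1}^m\varphi(\abs{\nj{\phi_j,\vu}})\big)$ we obtain $\tfrac1m\sum_j\varphi(\abs{\nj{\phi_j,\vu}})\ge Q_{2\xi}(E,\phi)-Z$ for every $\vu$, and the whole problem reduces to a high-probability upper bound on $Z$.

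To bound $Z$: since $0\le\varphi\le 1$, changing one $\phi_j$ moves $Z$ by at most $1/m$, so McDiarmid's bounded-differences inequality gives $Z\le\E Z+\tfrac{t}{2\sqrt m}$ with probability at least $1-\exp(-t^2/2)$. For $\E Z$, the one-sided symmetrization inequality yields $\E Z\le\tfrac2m\E\sup_{\vu\in E}\sum_{j=1}^m\epsilon_j\varphi(\abs{\nj{\phi_j,\vu}})$, and since $s\mapsto\varphi(\abs s)$ is $\tfrac1\xi$-Lipschitz with $\varphi(\abs 0)=0$, the one-sided contraction principle strips off $\varphi$ and leaves $\tfrac1\xi\E\sup_{\vu\in E}\sum_j\epsilon_j\nj{\phi_j,\vu}=\tfrac{\sqrt m}{\xi}W_m(E,\phi)$; thus $\E Z\le\tfrac{2W_m(E,\phi)}{\xi\sqrt m}$. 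Combining, on an event of probability at least $1-\exp(-t^2/2)$ we get $\tfrac1m\sum_j\varphi(\abs{\nj{\phi_j,\vu}})\ge Q_{2\xi}(E,\phi)-\tfrac{2W_m(E,\phi)}{\xi\sqrt m}-\tfrac{t}{2\sqrt m}$ for all $\vu$, and multiplying by $\xi\sqrt m$ and invoking the first display yields
\[
\inf_{\vu\in E}\xkh{\sum_{j=1}^m\abs{\nj{\phi_j,\vu}}^2}^{1/2}\ \ge\ \xi\sqrt m\,Q_{2\xi}(E,\phi)-2W_m(E,\phi)-\frac{\xi t}{2},
\]
which is the asserted inequality (even with a slightly smaller last term).

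The one step that needs genuine care is the replacement of the hard truncation $\1_{\{\abs{\nj{\phi_j,\vu}}\ge 2\xi\}}$ by the Lipschitz surrogate $\varphi$: the empirical process built from the indicators admits no contraction estimate, so without this smoothing symmetrization cannot turn $\sum_j\epsilon_j\varphi(\abs{\nj{\phi_j,\vu}})$ into the linear functional $\sum_j\epsilon_j\nj{\phi_j,\vu}$ underlying $W_m$. Relatedly, one must invoke the \emph{one-sided} versions of the symmetrization and contraction principles (for the supremum rather than its absolute value, with functions vanishing at the origin) so that the absolute constants collapse exactly to $2W_m(E,\phi)$ and $\xi t$ rather than larger multiples; everything else is a routine chain of Cauchy--Schwarz, McDiarmid's inequality, and the standard Rademacher symmetrization/contraction toolbox.
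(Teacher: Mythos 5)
Your argument is correct: the chain (soft indicator $\varphi$ sandwiched between the two hard indicators, Cauchy--Schwarz, McDiarmid with bounded differences $1/m$, one-sided symmetrization, and the one-sided Lipschitz contraction with $\varphi(0)=0$) is exactly the standard proof of Mendelson's small-ball bound, and your constants even yield the slightly stronger $\xi t/2$ term. The paper itself gives no proof of this lemma --- it is quoted from \cite[Proposition 5.1]{tropp2015convex} --- and your write-up is essentially a faithful reconstruction of that cited proof.
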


The following lemma is a consequence of  the  classical Paley-Zygmund inequality
(e.g., \cite{kahane1993some,de2012decoupling}).
\begin{lemma}\cite[Lemma 7.16]{foucart2017mathematical} \label{le:palzyg}
If a nonnegative random variable Z has finite second moment, then
\[
\PP(Z>t)\ge \frac{(\E Z-t)^2}{\E Z^2}, \quad 0\le t\le \E Z.
\]
\end{lemma}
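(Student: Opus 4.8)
The plan is to prove the Paley--Zygmund inequality by the standard device of splitting the expectation at the threshold $t$ and then applying the Cauchy--Schwarz inequality. First I would write, using $Z\ge 0$,
\[
\E Z \;=\; \E\big[Z\,\1_{\{Z\le t\}}\big] + \E\big[Z\,\1_{\{Z> t\}}\big].
\]
On the event $\{Z\le t\}$ we have $Z\le t$, so the first term is at most $t\,\PP(Z\le t)\le t$. For the second term I would apply Cauchy--Schwarz to the product $Z\cdot\1_{\{Z>t\}}$, which is legitimate precisely because $Z$ has a finite second moment (the hypothesis), obtaining
\[
\E\big[Z\,\1_{\{Z> t\}}\big] \;\le\; \big(\E Z^2\big)^{1/2}\,\big(\E\,\1_{\{Z>t\}}\big)^{1/2} \;=\; \sqrt{\E Z^2}\,\sqrt{\PP(Z>t)}.
\]

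Combining the two estimates gives $\E Z \le t + \sqrt{\E Z^2}\,\sqrt{\PP(Z>t)}$, hence
\[
\E Z - t \;\le\; \sqrt{\E Z^2}\,\sqrt{\PP(Z>t)}.
\]
Here is the one place where the range of $t$ matters: since we assume $0\le t\le \E Z$, the left-hand side is nonnegative, so squaring both sides preserves the inequality and yields $(\E Z-t)^2 \le \E Z^2\cdot \PP(Z>t)$. When $\E Z^2>0$ we divide through to obtain the stated bound $\PP(Z>t)\ge (\E Z-t)^2/\E Z^2$; in the degenerate case $\E Z^2=0$ we have $Z=0$ almost surely, so $\E Z=0$, $t=0$, and both sides of the asserted inequality vanish, so the statement holds trivially (or is read vacuously).

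I do not expect any genuine obstacle in this argument; it is a two-line computation. The only subtleties are cosmetic: one must record that the finiteness of $\E Z^2$ (part of the hypothesis) is what licenses the Cauchy--Schwarz step, and one must note that squaring is monotone only because $\E Z - t\ge 0$, which is exactly why the restriction $0\le t\le\E Z$ is imposed. For completeness I would also remark that, rescaling $t=\theta\,\E Z$ with $\theta\in[0,1]$, the inequality takes the familiar form $\PP\big(Z>\theta\,\E Z\big)\ge (1-\theta)^2 (\E Z)^2/\E Z^2$, which is the version that feeds into the small-ball lower bound of Lemma~\ref{le:lower}.
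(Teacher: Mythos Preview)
Your proof is correct and is the standard Cauchy--Schwarz argument for the Paley--Zygmund inequality. Note that the paper does not actually prove this lemma; it merely states it with a citation to \cite{foucart2017mathematical}, so there is no in-paper proof to compare against.
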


In addition, we also need the following lemma which presents an upper bound for the spectral norm of $\A^*(\epsilon)$ for a fixed independent Rademacher random vector $\epsilon \in \R^m$.

\begin{lemma} \label{le:simiA}
Suppose that $\va_j \in\Cd, j=1,\ldots,m $, are i.i.d. complex Gaussian random vectors and
$\epsilon_1,\ldots,\epsilon_m$ are independent Rademacher random variables. If $m
\gtrsim  d$, then $\E_{\epsilon,\A} \norm{\A^*(\epsilon)} \lesssim \sqrt{md}$, namely,
\[
\E ~ \frac{1}{\sqrt{m}}\norm{\sum_{j=1}^m \epsilon_j \va_j\va_j^*}
    \lesssim \sqrt{d}.
 \]
where $\epsilon:=(\epsilon_1,\ldots,\epsilon_m)^\T$.
\end{lemma}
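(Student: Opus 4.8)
The goal is to bound $\E \norm{\sum_{j=1}^m \epsilon_j \va_j\va_j^*}$, where the $\va_j$ are i.i.d.\ complex Gaussian vectors and the $\epsilon_j$ are independent Rademacher variables. The plan is to treat the matrix $M := \sum_{j=1}^m \epsilon_j \va_j\va_j^*$ as a sum of independent, mean-zero, self-adjoint random matrices (conditioning is not even needed since $\epsilon_j\va_j\va_j^*$ already has mean zero by symmetry of $\epsilon_j$) and apply a matrix Bernstein / non-commutative Khintchine-type inequality. First I would record the two quantities controlling such a bound: the ``matrix variance'' $\sigma^2 := \norm{\sum_{j=1}^m \E(\epsilon_j\va_j\va_j^*)^2} = \norm{\sum_{j=1}^m \E (\va_j\va_j^*)^2}$, and a tail/truncation bound on $\norm{\va_j\va_j^*} = \norm{\va_j}^2$. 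For a complex Gaussian vector, $\E (\va\va^*)^2 = \E \big(\norm{\va}^2 \va\va^*\big)$, which is a constant multiple of $I_d$ (with constant of order $d$); hence $\sigma^2 \asymp m d$. This already suggests $\E\norm{M} \lesssim \sqrt{\sigma^2 \log d} \asymp \sqrt{md\log d}$, which is off by a $\sqrt{\log d}$ factor from the claimed $\sqrt{md}$, so a naive matrix Bernstein is not quite enough.

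To remove the logarithmic factor I would instead use a covering/$\varepsilon$-net argument directly on the sphere, which is the standard route to the sharp $\sqrt{md}$ bound when $m \gtrsim d$. Concretely, $\norm{M} = \sup_{\vu \in \mathbb{S}_{\C}^{d-1}} |\vu^* M \vu| = \sup_{\vu} \big|\sum_{j=1}^m \epsilon_j |\nj{\va_j,\vu}|^2\big|$. For a fixed $\vu$, the $|\nj{\va_j,\vu}|^2$ are i.i.d.\ (scaled) chi-square / exponential-type variables, so $\sum_j \epsilon_j |\nj{\va_j,\vu}|^2$ is a sub-exponential sum; a Bernstein inequality gives a bound of order $\sqrt{mt} + t$ with probability $1 - e^{-t}$ (after truncating the heavy tail of $|\nj{\va_j,\vu}|^2$ at level $\sim \log m$, which costs only a negligible additive term since $m\gtrsim d$). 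Taking a $1/4$-net $\mathcal{N}$ of $\mathbb{S}_{\C}^{d-1}$ of cardinality $e^{O(d)}$, choosing $t \asymp d$, and union-bounding yields $\max_{\vu\in\mathcal N} |\vu^*M\vu| \lesssim \sqrt{md}$ with probability $\ge 1 - e^{-cd}$; a standard net-to-sup comparison for quadratic forms (the $\sup$ over the sphere is at most a constant times the max over the net) then gives $\norm{M} \lesssim \sqrt{md}$ on the same event. Converting this high-probability bound to a bound in expectation is routine: integrate the tail, using that on the complementary event one still has the crude deterministic-in-$\epsilon$ bound $\norm{M} \le \sum_j \norm{\va_j}^2$, whose expectation is $O(md)$, so the low-probability event contributes negligibly.

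The main obstacle is obtaining the clean $\sqrt{md}$ (as opposed to $\sqrt{md\log d}$) scaling, i.e.\ the truncation bookkeeping for the sub-exponential tails of $|\nj{\va_j,\vu}|^2$ and verifying that the net argument goes through uniformly after truncation. Since the paper only needs $\E \frac{1}{\sqrt m}\norm{M} \lesssim \sqrt d$ and works in the regime $m \gtrsim d$, an alternative — and arguably cleaner — route is to invoke an existing result on the operator norm of sums of outer products of sub-gaussian (here Gaussian) vectors: when $m \gtrsim d$, one has $\norm{\frac1m\sum_j \va_j\va_j^* - I_d} \le \tfrac12$ and more generally $\norm{\sum_j \epsilon_j\va_j\va_j^*} \lesssim \sqrt{md}$ by, e.g., the non-commutative Khintchine inequality combined with the sub-exponential concentration of $\sum_j |\nj{\va_j,\vu}|^2$; I would cite such a result (e.g.\ from Vershynin's book) and simply translate it to the complex Gaussian case, which changes only absolute constants. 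Either way, the lemma follows.
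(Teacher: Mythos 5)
Your proposal follows essentially the same route as the paper's proof: a $1/4$-net of the complex unit sphere, Bernstein's inequality for the sums $\sum_{j=1}^m \epsilon_j\abs{\va_j^*\vx}^2$ at a fixed net point, a union bound at deviation level $\asymp\sqrt{md}$, and integration of the tail to pass to the expectation bound $\E\,\frac{1}{\sqrt m}\norm{\sum_j\epsilon_j\va_j\va_j^*}\lesssim\sqrt d$. The only cosmetic difference is that your truncation of $\abs{\nj{\va_j,\vu}}^2$ at level $\sim\log m$ is unnecessary: the Rademacher signs already make the summands mean-zero sub-exponential with constant norm, so Bernstein applies directly, exactly as in the paper.
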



\begin{proof}
We assume that $\mathcal{N}$
 is a $1/4$-net  of the complex unit sphere ${\mathbb S}^{d-1}\subset {\mathbb C}^d$. It then follows from \cite[Lemma 4.4.3]{Vershynin2018} that
\begin{equation} \label{eq:operate}
 \norm{\sum_{j=1}^m \epsilon_j \va_j\va_j^*} \le 2 \max_{\vx \in \mathcal{N}} \Big| \sum_{j=1}^m \epsilon_j \abs{\va_j^*\vx}^2\Big|.
\end{equation}
 For any fixed $\vx \in {\mathbb S}^{d-1}$,  the terms $\epsilon_j |\va_j^*\vx|^2, j=1,\ldots,m$ are
independent centered sub-exponential random variables with the sub-exponential norm
being a constant.  Using the Bernstein's inequality \cite[Theorem
2.8.1]{Vershynin2018}, we obtain that, for any $t\ge 0$, it holds
\begin{eqnarray*}
\PP\dkh{ \Big| \sum_{j=1}^m \epsilon_j \abs{\va_j^*\vx}^2\Big| \ge \frac{C\sqrt{md}+t\sqrt{m}}{2} } &\le & 2\exp\xkh{-c \min\Big(\frac{\lambda^2}{m}, \lambda\Big)}.
\end{eqnarray*}
where $\lambda:=C\sqrt{md}+t \sqrt{m}$ and $C\ge 1$ is a constant to be chosen later. Noting that
\[
\lambda^2/m=C^2 d + 2Ct\sqrt{d}+t^2\ge Cd+t \qquad \mbox{and} \qquad \lambda \ge Cd+t
\]
for any $m\gtrsim d$, we have
\[
\PP\dkh{ \Big| \sum_{j=1}^m \epsilon_j \abs{\va_j^*\vx}^2\Big| \ge \frac{C\sqrt{md}+t\sqrt{m}}{2} }  \le 2\exp\xkh{-c(Cd+t)}.
\]
Recall that $|\mathcal{N}|\le 9^{2d}$. Taking the constant $C$ such that
$C\cdot c\ge 2\ln 9$, we obtain
\begin{equation}\label{eq:Pup}
\PP\dkh{\max_{\vx \in \mathcal{N}} \Big| \sum_{j=1}^m \epsilon_j \abs{\va_j^*\vx}^2\Big| \ge \frac{C\sqrt{md}+t\sqrt{m}}{2} } \le 2\exp\xkh{-c  t}.
\end{equation}
Combining (\ref{eq:operate}) and (\ref{eq:Pup}), we obtain that if $m \gtrsim d$ then
 with probability at least $1-2\exp(-c t)$ it holds
\[
 \frac{1}{\sqrt{m}}\norm{\sum_{j=1}^m \epsilon_j \va_j\va_j^*} \le C\sqrt{d}+t
\]
for all $t\ge 0$.
According to the definition of expectation, we have
\begin{equation*}
\begin{aligned}
&\E~ \frac{1}{\sqrt{m}}\norm{\sum_{j=1}^m \epsilon_j \va_j\va_j^*} = \int_0^\infty \PP\dkh{ \frac{1}{\sqrt{m}}\norm{\sum_{j=1}^m \epsilon_j \va_j\va_j^*} \ge t} dt\\
&= \int_0^{C\sqrt{d}} \PP\dkh{ \frac{1}{\sqrt{m}}\norm{\sum_{j=1}^m \epsilon_j \va_j\va_j^*} \ge t} dt
+\int_0^\infty \PP\dkh{ \frac{1}{\sqrt{m}}\norm{\sum_{j=1}^m \epsilon_j \va_j\va_j^*} \ge C\sqrt{d}+t} dt\\
&\le  C\sqrt{d}+2 \int_0^\infty e^{-ct} dt\lesssim  \sqrt{d}.
\end{aligned}
\end{equation*}
\end{proof}

\subsubsection{Proof of Theorem \ref{le:ORIP}}

We next present a proof of Theorem \ref{le:ORIP}. We would like to mention that
 one can prove Theorem \ref{le:ORIP}
based on RUB condition and the results in \cite{cai2015rop} (see Section 2.2.3 for
details). For completeness, we provide a proof which employs Mendelson's   small-ball
method.

{\noindent\it Proof of Theorem \ref{le:ORIP}}~ According to  Definition \ref{de:rip},
it is sufficient to prove that
\begin{equation}\label{eq:jiemid}
\norm{\A(H)} \gtrsim \sqrt{m} \normf{H} \qquad \mbox{for all} \quad H \in \H_{t\cdot r}(\C)
\end{equation}
holds with high probability. Due to  homogeneity, without loss of generality, we can
assume $\normf{H}=1$. We employ  Mendelson's  small-ball method to prove the
conclusion  (see Lemma \ref{le:lower} and Lemma \ref{le:palzyg}). To see this, we
identify $\H(\C) \in \C^{d\times d}$ with $\R^{d^2}$ and let
$$\mathcal{SH}_{r}^{d\times
d}:=\dkh{H \in \H_r(\C): \normf{H}=1}. $$ For any $\xi \ge 0$ define
\begin{equation*}
\begin{aligned}
Q_\xi&:=\inf_{H\in \mathcal{SH}_{t\cdot r}^{d\times d}} \PP\dkh{\abs{\va_j^*H\va_j}\ge
\xi}\\
 W_m&:= \E\sup_{H\in \mathcal{SH}_{t\cdot r}^{d\times d}} \nj{H,A} \quad
\mbox{where} \quad A:=\frac{1}{\sqrt{m}} \sum_{j=1}^m \epsilon_j \va_j\va_j^*.
\end{aligned}
\end{equation*}
Here, the $\epsilon_1,\ldots,\epsilon_m$ are independent Rademacher random variables.
Then Lemma \ref{le:lower} implies that, with probability at least
$1-\exp(-\gamma^2/2)$, it holds
\begin{equation}\label{eq:lowboundh}
\inf_{H\in \mathcal{SH}_{t\cdot r}^{d\times d}} \|\A(H)\|_2=
\inf_{H\in \mathcal{SH}_{t\cdot r}^{d\times d}} \xkh{\sum_{j=1}^m (\va_j^* H
\va_j)^2}^{1/2} \ge \xi\sqrt{m}Q_{2\xi}-2W_m-\xi \gamma
\end{equation}
for any $\xi>0$ and $\gamma>0$. We take  $\xi=\sqrt{2}/4, \gamma=c\sqrt{m}$ for
a sufficiently small positive constant $c$ in (\ref{eq:lowboundh}) and claim that
\begin{equation}\label{eq:jie1}
Q_{1/\sqrt{2}} \ge \frac{1}{52},\,\, W_m \lesssim \sqrt{tdr}.
\end{equation}
Combing (\ref{eq:lowboundh}), (\ref{eq:jie1}) and  $m\gtrsim tdr$, we arrive at
(\ref{eq:jiemid}).

It remains to prove (\ref{eq:jie1}).  For the term $Q_{1/\sqrt{2}}$, according to the
Payley-Zygmund inequality (Lemma \ref{le:palzyg}), we have
\begin{equation} \label{eq:palzym}
\PP\dkh{\abs{\va_j^*H\va_j}^2\ge \frac{1}{2} \E\abs{\va_j^*H\va_j}^2} \ge \frac{1}{4}\cdot \frac{(\E|\va_j^*H\va_j|^2)^2}{\E|\va_j^*H\va_j|^4}.
\end{equation}
 By spectral decomposition, we can write
$H:=\sum_{j=1}^{t\cdot r}\lambda_j \vv_j\vv_j^*$ where $\lambda_1,\ldots,\lambda_{t\cdot r}
\in \R$ are eigenvalues and $\vv_1,\ldots,\vv_{t\cdot r}\in \Cd$  are the
corresponding orthonormal eigenvectors. For a standard complex Gaussian  random
variable $Z\sim 1/\sqrt{2}\cdot \cN(0,1)+i/\sqrt{2}\cdot \cN(0,1) $, we have
$\E\abs{Z}^{2k}=k!, k\in {\mathbb Z}_+$. By the unitary invariance of complex
Gaussian random vectors, we have
\begin{equation*}
\begin{aligned}
\E\abs{\va_j^*H\va_j}^2&=\E \xkh{\sum_{k=1}^{t\cdot r} \lambda_k \abs{\va_{j,k}}^2}^2=2\sum_{k=1}^{t\cdot r} \lambda_k^2+2\sum_{1\le k<l \le t\cdot r}\lambda_k \lambda_l \\
&=\sum_{k=1}^{t\cdot r} \lambda_k^2 + \xkh{\sum_{k=1}^{t\cdot r} \lambda_k }^2
\ge \normf{H}^2.
\end{aligned}
\end{equation*}
Noting that $\normf{H}^2=\sum_{k=1}^{t\cdot r}\lambda_k^2=1$, we have $|\lambda_k| \le 1$.
Then $|\sum_{k=1}^{t\cdot r}\lambda_k^3|\le \sum_{k=1}^{t\cdot r}\lambda_k^2=1$ and
$\sum_{k=1}^{t\cdot r}\lambda_k^4 \le \sum_{k=1}^{t\cdot r}\lambda_k^2=1$. Thus we have
\begin{equation}\label{eq:EAA}
\begin{aligned}
\E\abs{\va_j^*H\va_j}^4&= \sum_{j\neq k\neq l\neq s}\lambda_j \lambda_k \lambda_l \lambda_s+12\sum_{k\neq l\neq s} \lambda_k^2 \lambda_l \lambda_s+12\sum_{k\neq l}\lambda_k^2 \lambda_l^2+24\sum_{k\neq l}\lambda_k^3 \lambda_l +24\sum_{k}\lambda_k^4\\
&= \sum_{j, k, l, s}\lambda_j \lambda_k \lambda_l \lambda_s+6\sum_{k\neq l\neq s} \lambda_k^2 \lambda_l \lambda_s+9\sum_{k\neq l}\lambda_k^2 \lambda_l^2+20\sum_{k\neq l}\lambda_k^3 \lambda_l +23\sum_{k}\lambda_k^4\\
&=\sum_{j,k,l,s}\lambda_j \lambda_k \lambda_l \lambda_s+6\sum_{k,l,s} \lambda_k^2 \lambda_l \lambda_s+3\sum_{k,l}\lambda_k^2 \lambda_l^2+8\sum_{k,l}\lambda_k^3 \lambda_l +6\sum_{k}\lambda_k^4  \\
&= \Big( \sum_{k=1}^{t\cdot r}\lambda_k\Big)^4+6\normf{H}^2  \Big(\sum_{k=1}^{t\cdot r}\lambda_k\Big)^2+3\normf{H}^4+8\Big(\sum_{k=1}^{t\cdot r}\lambda_k^3\Big) \Big(\sum_{k=1}^{t\cdot r}\lambda_k\Big)+6\sum_{k=1}^{t\cdot r}\lambda_k^4 \\
&\le   \Big( \sum_{k=1}^{t\cdot r}\lambda_k\Big)^4+6\Big(\sum_{k=1}^{t\cdot r}\lambda_k\Big)^2+4\xkh{1+ \Big(\sum_{k=1}^{t\cdot r}\lambda_k\Big)^2}+9\\
&\le  13(\E\abs{\va_j^*H\va_j}^2)^2.
\end{aligned}
\end{equation}
Here, we use the fact that  $2|\sum_{k=1}^{t\cdot r}\lambda_k|\le
1+\Big(\sum_{k=1}^{t\cdot r}\lambda_k\Big)^2$ in the first inequality and
$(\E|\va_j^*H\va_j|^2)^2 =(\sum_{k=1}^{t\cdot r}\lambda_k)^4+2( \sum_{k=1}^{t\cdot
r}\lambda_k)^2+1$ in the last inequality. Putting (\ref{eq:EAA}) into
(\ref{eq:palzym}), we obtain
\[
\PP\dkh{\abs{\va_j^*H\va_j}^2 \ge \frac{1}{2} \normf{H}^2} \ge \frac{1}{52},
\]
which implies
\begin{equation*}
Q_{1/\sqrt{2}} \ge \frac{1}{52}.
\end{equation*}

 We next show $W_m= \E\sup_{H\in \mathcal{SH}_{t\cdot r}^{d\times d}} \nj{H,A}\lesssim \sqrt{tdr}$. For
any $H\in \mathcal{SH}_{t\cdot r}^{d\times d}$, by spectral decomposition, we can
write $H=\sum_{j=1}^{t\cdot r}\lambda_j \vv_j\vv_j^*$. Then
\[
 \nj{H,A}= \sum_{j=1}^{t\cdot r}  \lambda_j \vv_j^*A\vv_j \le \norm{A} \cdot \sum_{j=1}^{t\cdot r}  \abs{\lambda_j} =\norm{A}\norms{H}_{*} \le \sqrt{t\cdot r}\norm{A}.
\]
Here, we use the fact that the nuclear norm $\norms{H}_{*}\le \sqrt{t\cdot r} \normf{H}$ due to
$\rank(H)\le t\cdot r$. It gives
\[
W_m= \E\sup_{H\in \mathcal{SH}_{t\cdot r}^{d\times d}} \nj{H,A} \le \sqrt{t\cdot r}\cdot\E\norm{A}.
\]
Recall that $A=\frac{1}{\sqrt{m}} \sum_{j=1}^m \epsilon_j \va_j\va_j^*$ where
$\epsilon_1,\ldots,\epsilon_m$ are independent Rademacher random variables,
independent from everything else. From Lemma \ref{le:simiA},  we have
\[
\E \norm{\frac{1}{\sqrt{m}} \sum_{j=1}^m \epsilon_j \va_j\va_j^*}\lesssim \sqrt{d},
\]
which implies
\begin{equation*}
W_m \lesssim \sqrt{tdr}.
\end{equation*}
This completes the claim.
\qed
\subsubsection{The connection between LRIP and RUB}

In  \cite{cai2015rop}, Cai and Zhang introduce the definition of  Restricted Uniform
Boundedness (RUB) and prove that the Gaussian rank-one projection satisfies such
condition with high probability in the real case. One can prove Theorem \ref{le:ORIP}
based on RUB condition and the results in \cite{cai2015rop}, as  shown below.


A linear map $\A: \R^{p_1\times p_2}\to \R^m$ has RUB condition of order $r$ if there
exist uniform   constants $C_1$ and $C_2$ such that
\[
C_1\normf{M} \le \frac{1}{m} \normone{\A(M)} \le C_2\normf{M}
\]
holds for all non-zero rank-$r$ matrices $M\in \R^{p_1\times p_2}$.
 Using the notation RUB and the results under `` Rank-One Projection'' model  in \cite{cai2015rop}, we
   can present an alternative proof for Theorem \ref{le:ORIP}.
    To see this,  recognize that the linear map defined in Theorem \ref{le:ORIP}
    is $\A: \H(\C) \to \R^m$ with
\[
[\A(H)]_j=\va_j^* H \va_j, \qquad j=1,\ldots,m.
\]
If we let $\va_j=\va_j^{\Re}+ i \va_j^\Im$,  $H=H^\Re+iH^\Im$ with $H^\Re=(H^\Re)^\T$
and $H^\Im=-(H^\Im)^\T$, then we could rewrite the linear map $\A$ as
\begin{equation}  \label{eq:tildeA}
[\A(H)]_j=\va_j^* H \va_j=\ta_j^\T \tilde{H} \ta_j,  \qquad j=1,\ldots,m,
\end{equation}
where
$\tilde{H}=\left( \begin{array}{ll} H^\Re & -H^\Im \\ H^\Im & H^\Re \end{array}
\right) \in \R^{2d\times 2d}$
 and
  $\ta_j = \left( \begin{array}{l}  \va_j^{\Re}\\
\va_j^\Im \end{array} \right)  \in \R^{2d}$.
Let $\tilde{\A}: \R^{2d\times 2d} \to \R^{\lfloor m/2 \rfloor}$ be an
 operator  which is defined by
\[
[{\tilde{\A} (M) }]_j:=\beta_j^\T M \gamma_j, \qquad j=1,\ldots, \lfloor m/2 \rfloor
\]
with $\beta_j:=\ta_{2j-1}+\ta_{2j}$ and $\gamma_j:=\ta_{2j-1}-\ta_{2j}$.
Since $\va_j^{\Re}, \va_j^\Im \sim N(0,I_d/2)$, it leads to $\beta_j, \gamma_j \sim
N(0,I_{2d})$ with $\beta_j$ and $\gamma_j$ independent because they are Gaussian random vectors and
$\E\nj{\beta_j,\gamma_j}=0$. Thus, the linear map $\tilde{\A}$ is exactly a Gaussian
rank-one projection model as defined in  \cite{cai2015rop}.  Theorem 2.2 in
\cite{cai2015rop} shows that with probability at least $1-\exp(-c m)$ for some
constant $c>0$, $\tilde{\A}$ satisfies RUB condition of order $t\cdot r$ provided $m\gtrsim
tdr$, namely, there exist constants $C_1, C_2>0$ such that
\[
C_1 \normf{M} \le \frac{1}{ \lfloor m/2
\rfloor} \normone{\tilde{\A}(M)} \le C_2 \normf{M}
\]
holds for all rank-$(t\cdot r)$ matrices $M\in \R^{2d \times 2d}$.  It follows from
(\ref{eq:tildeA}) that the connection between $\tilde{\A}$ and $\A$ is
\[
[\A(H)]_{2j-1}-[\A(H)]_{2j} =\tilde{\A}(\tilde{H}), \qquad j=1,\ldots,\lfloor m/2
\rfloor.
\]
This means that, for all $H\in \H_{t\cdot r}(\C)$, directly associated with the Hermitian
$\tilde{H} \in \R^{2d\times 2d}$, we have
\begin{eqnarray*}
\frac{1}{\sqrt{m}} \norm{\A(H)} \ge \frac{1}{m} \normone{\A(H)} &\ge& \frac{1}{m} \sum_{j=1}^{ \lfloor m/2
\rfloor} \Big| [\A(H)]_{2j-1}-[\A(H)]_{2j}  \Big|\\
&= &  \frac{1}{m}  \normone{\tilde{\A}(\tilde{H})}
\ge     \frac{C_1}{2} \normf{\tilde{H}} =\frac{C_1}{\sqrt{2}} \normf{H},
\end{eqnarray*}
which implies the result in Theorem \ref{le:ORIP}.

\section{Proofs of  Theorem \ref{result1} and Theorem \ref{th:lowbound} }

Motivated by the observation
\begin{equation} \label{eq:mealift}
b_j=\abs{\nj{\va_j,\vx_0}}^2+\eta_j=\va_j^* X_0 \va_j +\eta_j, \quad j=1,\ldots,m,
\end{equation}
where $X_0:=\vx_0 \vx_0^*$, we can employ  Corollary \ref{th:liftresult}
 to prove Theorem \ref{result1}.

{\noindent\it Proof of Theorem \ref{result1}.}~ Let $X_0:=\vx_0 \vx_0^*$. Since $\x$
is the global solution to (\ref{eq:mod1}), $\X:=\x\x^*$ is the global solution to
(\ref{mod:determ}) with $r=1$. From Corollary \ref{th:liftresult}, we obtain that
with probability  at least $1- \exp(-cm)$, it holds
\begin{equation} \label{eq:matrxuv}
\normf{\x\x^*-\vx_0\vx_0^*} \lesssim \frac{\norm{\eta}}{\sqrt{m}}
\end{equation}
provided $m\gtrsim d$. We claim that for any $\vu,\vv\in \Cd$, we have
\begin{equation} \label{cla:uv}
 \min_{\theta\in [0,2\pi)} \norm{\vu-e^{i\theta}\vv} \le \frac{2\normf{\vu\vu^*-\vv\vv^*}}{\norm{\vv}}.
\end{equation}
Indeed, choosing $\theta:=\mbox{Phase}(\vv^* \vu)$ and setting $\vbar:=e^{i\theta}\vv $, then  $\nj{\vu,\vbar}\ge 0$. Let $\vh:=\vu-\vbar$. Then we have
\begin{eqnarray*}
\normf{\vu\vu^*-\vv\vv^*}^2&=& \normf{\vu\vu^*-\vbar\vbar^*}^2 = \normf{\vh\vh^*+\vh\vbar^*+\vbar\vh^*}^2 \\
&=& \norm{\vh}^4 +4 \norm{\vh}^2 \nj{\vh,\vbar}+2\nj{\vh,\vbar}^2+2\norm{\vh}^2\norm{\vbar}^2\\
&\ge & (4-2\sqrt{2}) \norm{\vh}^2 \nj{\vh,\vbar}+ 2\norm{\vh}^2\norm{\vbar}^2\\
&=& (4-2\sqrt{2}) \norm{\vh}^2  \nj{\vu,\vbar}+ (2\sqrt{2}-2)\norm{\vh}^2\norm{\vbar}^2\\
&\ge & \frac{1}{2}\norm{\vh}^2\norm{\vbar}^2,
\end{eqnarray*}
where the last line follows from the fact $\nj{\vu,\vbar}\ge 0$. Thus we obtain
(\ref{cla:uv}). Combining (\ref{eq:matrxuv}) and (\ref{cla:uv}), we arrive at
\begin{equation}\label{eq:relationHh}
 \min_{\theta\in [0,2\pi)} \norm{\x-e^{i\theta}\vx_0} \le \frac{2}{\norm{\vx_0}}\normf{\x\x^*-\vx_0\vx_0^*}\lesssim
 \frac{ \norm{\eta}}{\norm{\vx_0}\sqrt{m}}.
\end{equation}
We next show
\[
 \min_{\theta\in [0,2\pi)} \norm{\x-e^{i\theta}\vx_0} \lesssim
  \|\vx_0\|_2+\frac{\sqrt{\|\eta\|_2}}{m^{1/4}}.
\]
Indeed, according to (\ref{eq:matrxuv}), we have
\[
\|\x\|_2^2 =\normf{\x\x^*}\leq \normf{\vx_0\vx_0^*}+\normf{\x\x^*-\vx_0\vx_0^*}\lesssim \|\vx_0\|_2^2+\frac{\|\eta\|_2}{\sqrt{m}},
\]
which implies
\[
\|\x\|_2\lesssim \|\vx_0\|_2+\frac{\sqrt{\|\eta\|_2}}{m^{1/4}}.
\]
Hence, we have
\begin{equation}\label{eq:14up}
 \min_{\theta\in [0,2\pi)} \norm{\x-e^{i\theta}\vx_0} \leq \norm{\x}+\norm{\vx_0}\lesssim \|\vx_0\|_2+\frac{\sqrt{\|\eta\|_2}}{m^{1/4}}.
\end{equation}
Combining (\ref{eq:relationHh}) and (\ref{eq:14up}), we obtain
\[
\min_{\theta\in [0,2\pi)} \norm{\x-e^{i\theta}\vx_0}\lesssim \min
\left\{ \|\vx_0\|_2+\frac{\sqrt{\|\eta\|_2}}{m^{1/4}}, \frac{ \norm{\eta}}{\norm{\vx_0}\sqrt{m}}\right\}.
\]
  A simple calculation shows that
\[
\min\left\{\|\vx_0\|_2+\frac{\sqrt{\|\eta\|_2}}{{m}^{1/4}},
\frac{\|\eta\|_2}{\norm{\vx_0}\cdot \sqrt{m}}\right\}=\frac{\|\eta\|_2}{\norm{\vx_0}\cdot
\sqrt{m}}
\]
 holds provided $\|\vx_0\|_2\geq \frac{\sqrt{5}-1}{2}\cdot \frac{\sqrt{\|\eta\|_2}}{m^{1/4}}$.  For the case where
 $\|\vx_0\|_2< \frac{\sqrt{5}-1}{2}\cdot \frac{\sqrt{\|\eta\|_2}}{m^{1/4}}$, we have
\[
\min\left\{\|\vx_0\|_2+\frac{\sqrt{\|\eta\|_2}}{{m}^{1/4}},
\frac{\|\eta\|_2}{\norm{\vx_0}\cdot
\sqrt{m}}\right\}=\|\vx_0\|_2+\frac{\sqrt{\|\eta\|_2}}{{m}^{1/4}}\leq
\frac{\sqrt{5}+1}{2}\cdot \frac{\sqrt{\|\eta\|_2}}{{m}^{1/4}}.
\]
Hence, we obtain
\[
\min_{\theta\in [0,2\pi)} \norm{\x-e^{i\theta}\vx_0}\lesssim \min
\left\{ \|\vx_0\|_2+\frac{\sqrt{\|\eta\|_2}}{m^{1/4}}, \frac{ \norm{\eta}}{\norm{\vx_0}\sqrt{m}}\right\}
\lesssim \min
\left\{ \frac{\sqrt{\|\eta\|_2}}{m^{1/4}}, \frac{ \norm{\eta}}{\norm{\vx_0}\sqrt{m}}\right\}.
\]
  \qed

\subsection{Proof of Theorem \ref{th:lowbound}}

We first introduce some lemmas which play a key role in our proof. The following
lemma is a nonuniform result about the upper bound of the fourth power of complex
Gaussian variables.
\begin{lemma}\cite[Lemma 21]{turstregion} \label{le:sunju}
Let $\va_j \in \C^d, j=1,\ldots,m,$ be i.i.d complex Gaussian random vectors. Suppose
that  $\vv\in \Cd$ is a fixed vector.
 For any $\delta \in (0,1)$ the following holds with
probability at least $1-c_a \delta^{-2} m^{-1}-c_b \exp(-c_c \delta^2 m /\log m)$
\[
\left\| \frac{1}{m} \sum_{j=1}^m \abs{\va_j^* \vv}^2  \va_j \va_j^*- \big( \vv\vv^* +\norm{\vv}^2 ~ I \big)    \right\|          \le  \delta \norms{\vv}^2
\]
provided $m\ge C(\delta) d \log d$. Here  $C(\delta)$ is a constant depending on
$\delta$ and $c_a$, $c_b$ and $c_c$ are positive absolute constants.
\end{lemma}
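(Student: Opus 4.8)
Since the statement is quoted as \cite[Lemma 21]{turstregion}, one could simply cite it; for completeness I describe how I would prove it directly, via a truncation reduction feeding into a Bernstein-type concentration bound made uniform over a net of the sphere. First I would exploit the unitary invariance of the complex Gaussian law to assume $\vv=\norm{\vv}\,\vve_1$, and homogeneity in $\vv$ to take $\norm{\vv}=1$, so that the target becomes
\[
\Big\| \tfrac1m\sum_{j=1}^m |a_{j,1}|^2\,\va_j\va_j^* \;-\; \big(\vve_1\vve_1^* + I_d\big) \Big\| \;\le\; \delta ,
\]
where $a_{j,1}$ is the first coordinate of $\va_j$. A direct moment computation — using $\E|Z|^{2k}=k!$ for a standard complex Gaussian $Z$, the very identity exploited in \eqref{eq:EAA} — gives $\E\big[\,|a_{j,1}|^2\va_j\va_j^*\,\big]=\vve_1\vve_1^*+I_d$, so the left-hand side is a centered empirical mean $\frac1m\sum_{j=1}^m(Y_j-\E Y_j)$ with $Y_j:=|a_{j,1}|^2\va_j\va_j^*$, and the whole statement reduces to concentration of this mean in operator norm.

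Next I would deal with the heavy tails: $\norm{Y_j}=|a_{j,1}|^2\norm{\va_j}^2$ is a degree-four polynomial in Gaussians, so Bernstein cannot be applied to the $Y_j$ directly. I would restrict to the event $\mathcal E_j:=\{\,|a_{j,1}|^2\le\beta\log m\,\}$ for a large absolute constant $\beta$, writing $Y_j=\widetilde Y_j+(Y_j-\widetilde Y_j)$ with $\widetilde Y_j:=Y_j\1_{\mathcal E_j}$, so that on $\mathcal E_j$ the scalar weight is bounded by $\beta\log m$. Then, passing to quadratic forms over a $1/4$-net $\mathcal N$ of $\mathbb{S}^{d-1}_{\C}$ — which controls the operator norm up to a factor $2$, exactly as in \eqref{eq:operate} — for each fixed $\vu\in\mathcal N$ the truncated quantity $\frac1m\sum_{j=1}^m|a_{j,1}|^2\1_{\mathcal E_j}|\va_j^*\vu|^2$ is an average of independent, essentially sub-exponential variables with mean close to $|u_1|^2+1$, and a scalar Bernstein inequality gives a deviation bound of the form $2\exp(-c\,\delta^2 m/\log m)$ for each $\vu$, which survives the union bound over the $9^{2d}$ points of $\mathcal N$ precisely because $m\ge C(\delta)\,d\log d$ with $C(\delta)$ absorbing a $\delta^{-2}$ factor. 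Finally I would bound (i) the deterministic truncation bias $\norm{\E Y_j-\E\widetilde Y_j}$, which is $\le\delta/3$ once $\beta$ is large, by a Cauchy--Schwarz tail estimate, and (ii) the tail of $\frac1m\sum_j(Y_j-\widetilde Y_j)$, via a second-moment (Markov) bound — this last step is what produces the $c_a\delta^{-2}m^{-1}$ term in the probability. Assembling the three pieces yields the lemma.

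The hard part is the logarithmic bookkeeping, i.e.\ arriving at a single power of $\log$ both in the hypothesis $m\ge C(\delta)\,d\log d$ and in the exponent $\delta^2 m/\log m$. A crude route — applying matrix Bernstein directly to $\frac1m\sum_j(Y_j-\E Y_j)$ — carries a boundedness parameter $L\asymp d\log m$ (since $\va_j\va_j^*$ genuinely has operator norm $\asymp d$) together with a $2d$ dimensional prefactor, which only yields $m\gtrsim d\log^2 d$; likewise a naive two-sided truncation of both $|a_{j,1}|^2$ and $|\va_j^*\vu|^2$ inflates the scalar boundedness parameter to $\log^2 m$ and again costs an extra logarithm. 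Recovering the sharp rate therefore requires a careful choice of truncation level together with a more refined moment estimate for the truncated summands than vanilla Bernstein provides; this is exactly the content of the argument in \cite{turstregion}. I should also note that Mendelson's small-ball method, which drives the $m\gtrsim dr$ results elsewhere in this paper, does not help here, since what is needed is a genuinely two-sided spectral-norm estimate rather than a one-sided lower bound on a nonnegative empirical process.
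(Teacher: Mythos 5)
The paper offers no proof of this statement at all---it is simply quoted as Lemma 21 of \cite{turstregion}---so your primary move of citing that reference is exactly what the paper does. Your accompanying truncation-plus-net-plus-Bernstein sketch is consistent with the argument in that reference, and your closing caveat correctly identifies where the single-logarithm bookkeeping (a variance-aware, rather than crude sub-exponential-norm, Bernstein bound for the truncated summands) has to come from, so there is nothing to correct.
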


\begin{lemma}\cite[Lemma 22]{turstregion} \label{le:sunju22}
Let $\va_j \in \C^d, j=1,\ldots,m,$ be i.i.d complex Gaussian random vectors. For any
$\delta \in (0,1)$  the following holds  with probability at least $1-c'  m^{-d}-c''
\exp(-c(\delta) m )$
\[
\frac{1}{m} \sum_{j=1}^m \abs{\va_j^* \vz}^2 \abs{\va_j^* \bm{w}}^2  \ge (1-\delta) \xkh{\norms{\bm{w}}^2\norms{\vz}^2+\abs{\bm{w}^*\vz}^2} \quad \mbox{for all} \quad \vz,\bm{w} \in \C^d
\]
provided $m\ge C(\delta) d \log d$.
 Here  $C(\delta)$ and $c(\delta)$ are constants depending on $\delta$ and
$c'$, $c''$ are positive absolute constants.
\end{lemma}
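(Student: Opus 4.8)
The plan is to derive this from a per-pair concentration estimate combined with a net argument over the product of unit spheres, following the strategy of \cite{turstregion}. Write $X:=\vz\bm{w}^*$, so that $\abs{\va_j^*\vz}^2\abs{\va_j^*\bm{w}}^2=\abs{\jkh{\va_j\va_j^*,X}}^2$, and recall that a routine complex Wick computation gives $\E\,\abs{\va_j^*\vz}^2\abs{\va_j^*\bm{w}}^2=\norms{\vz}^2\norms{\bm{w}}^2+\abs{\bm{w}^*\vz}^2$. Thus the assertion is precisely a \emph{one-sided} (lower) deviation of a nonnegative empirical average from its mean, uniformly in $(\vz,\bm{w})$. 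By homogeneity we may take $\norms{\vz}=\norms{\bm{w}}=1$, so it suffices to prove
\begin{equation*}
\frac{1}{m}\sum_{j=1}^m \abs{\va_j^*\vz}^2\abs{\va_j^*\bm{w}}^2 \;\ge\; (1-\delta)\xkh{1+\abs{\bm{w}^*\vz}^2}\qquad\text{for all }\vz,\bm{w}\in\mathbb{S}_{\C}^{d-1}.
\end{equation*}
I emphasize that Mendelson's small-ball method (Lemma \ref{le:lower}) is not directly suitable here, since it only yields a lower bound with an uncontrolled absolute constant, not the sharp factor $1-\delta$.

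For the per-pair bound, fix $(\vz,\bm{w})$ and set $Y_j:=\abs{\va_j^*\vz}^2\abs{\va_j^*\bm{w}}^2$, $\mu:=\E Y_j=1+\abs{\bm{w}^*\vz}^2\in[1,2]$. Each factor $\abs{\va_j^*\vz}^2$ is $\mathrm{Exp}(1)$-distributed, so $Y_j$ is a product of two exponential-type variables; it is only sub-Weibull of order $1/2$, and Bernstein's inequality does not apply. However, since we need only a \emph{lower} bound on a nonnegative sum, we may truncate: from $\PP(Y_j>t)\le 2e^{-\sqrt t}$ one gets $\E[Y_j\1_{Y_j>\tau}]\lesssim \tau e^{-\sqrt\tau}$, so choosing $\tau=\tau(\delta):=C(\log(1/\delta))^2$ forces $\E[Y_j\1_{Y_j\le\tau}]\ge(1-\delta/4)\mu$. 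The truncated variables lie in $[0,\tau(\delta)]$, so Hoeffding's inequality gives, for each fixed $(\vz,\bm{w})$,
\begin{equation*}
\PP\dkh{\frac1m\sum_{j=1}^m\abs{\va_j^*\vz}^2\abs{\va_j^*\bm{w}}^2<(1-\delta/2)\mu}\le\exp\xkh{-c(\delta)\,m},\qquad c(\delta)\asymp\delta^2/\tau(\delta)^2 .
\end{equation*}

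To make this uniform, I would work on the event $\{\max_{j\le m}\norms{\va_j}^2\lesssim d\}$, which holds with the probability stated in the lemma provided $m\gtrsim d\log d$. On this event the map $(\vz,\bm{w})\mapsto\frac1m\sum_j\abs{\va_j^*\vz}^2\abs{\va_j^*\bm{w}}^2$ is Lipschitz on $\mathbb{S}_{\C}^{d-1}\times\mathbb{S}_{\C}^{d-1}$ with constant $\lesssim d^2$ — using $\big|\abs{\va_j^*\vz}^2-\abs{\va_j^*\vz'}^2\big|\le 2\norms{\va_j}^2\norms{\vz-\vz'}$ and the product rule — and $\abs{\bm{w}^*\vz}$ is $2$-Lipschitz. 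Take an $\ep$-net $\mathcal N$ of $\mathbb{S}_{\C}^{d-1}\times\mathbb{S}_{\C}^{d-1}$ with $\ep\asymp\delta/d^2$, so that $\abs{\mathcal N}\le(Cd^2/\delta)^{4d}=\exp(O(d\log(d/\delta)))$; a union bound of the per-pair estimate over $\mathcal N$, together with the discretization bounds, yields the conclusion for all $(\vz,\bm{w})$ simultaneously, with total failure probability $\exp(O(d\log(d/\delta)))\cdot\exp(-c(\delta)m)$ plus the cost of the max-norm event. This is absorbed once $m\ge C(\delta)\,d\log d$, and running the whole argument with $\delta/4$ in place of $\delta$ swallows the $O(\ep)$ and $O(\delta)$ losses from the truncation and netting.

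The main obstacle is reconciling the \emph{sharp} constant $1-\delta$ with \emph{uniformity}. The heavy tail of the quartic products would be fatal for a two-sided bound, but is harmless here because only a lower bound is needed, so a simple truncation suffices. Recovering the mean up to a $(1-\delta)$ factor, however, requires a net fine enough to absorb the $\sim d^2$ Lipschitz constant, i.e.\ $\ep$ polynomially small in $d$; this is exactly what produces the extra $\log d$ in the sample complexity $m\gtrsim C(\delta)\,d\log d$ and dictates the precise form of the failure probability.
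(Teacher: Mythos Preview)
The paper does not prove this lemma; it is quoted verbatim as Lemma~22 of \cite{turstregion} and used as a black box, so there is no in-paper argument to compare against. Your sketch follows the standard truncation-plus-net template used in \cite{turstregion}, and the main ideas---the sub-Weibull tail $\PP(Y_j>t)\le 2e^{-\sqrt t}$ via $\PP(XY>t)\le\PP(X>\sqrt t)+\PP(Y>\sqrt t)$, truncation at level $\tau(\delta)\asymp(\log(1/\delta))^2$ to recover $(1-\delta/4)\mu$ of the mean, one-sided Hoeffding on the bounded pieces, and a net argument over the product sphere---are all correct.

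There is one inaccuracy worth fixing. The event $\{\max_{j\le m}\norms{\va_j}^2\lesssim d\}$ does \emph{not} hold with the probability stated in the lemma: since $\norms{\va_j}^2$ is a sum of $d$ independent $\mathrm{Exp}(1)$ variables, Bernstein only gives $\PP(\norms{\va_j}^2>Cd)\le e^{-cd}$, and the union bound $m e^{-cd}$ cannot be forced below $c'm^{-d}$ for large $m$. What produces the $m^{-d}$ term is the event $\{\max_{j\le m}\norms{\va_j}^2\le Cd\log m\}$: then $\PP(\norms{\va_j}^2>Cd\log m)\le m^{-c''d}$ for $C$ large, and the union bound over $j$ gives $m^{1-c''d}\le c'm^{-d}$. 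With this correction the Lipschitz constant becomes $\lesssim(d\log m)^2$ rather than $d^2$, so the net resolution is $\ep\asymp\delta/(d\log m)^2$; the net cardinality picks up only an additional $\log\log m$ factor in the exponent, which is still absorbed by $c(\delta)m$ once $m\ge C(\delta)\,d\log d$. After this adjustment your argument goes through.
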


\begin{lemma} \label{le:etalowbound}
Suppose that $\va_j \in \C^d, j=1,\ldots,m $, are i.i.d. complex Gaussian random vectors and
$\eta_j\in {\mathbb R}, j=1,\ldots,m$. For any fixed $\delta \in (0,1)$,
there exists a constant  $ \rho>0$ depending only on $\delta$  such that the following holds
 with probability at least $1-2\exp(-c (\delta) d)$ :
 \[
   \norm{\sum_{j=1}^m \eta_j (\va_j\va_j^*-I)} \le \rho\cdot\delta\cdot (\sqrt{d}\norm{\eta}+d \norms{\eta}_{\infty}).
 \]
Here, $c(\delta)>0$ is a constant depending only on $\delta$.
\end{lemma}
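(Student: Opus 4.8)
The plan is to reduce the operator norm to a supremum over a $1/4$-net of the complex sphere and then control each fixed direction by Bernstein's inequality for sums of independent sub-exponential random variables. The guiding observation is that the two terms $\sqrt{d}\,\norm{\eta}$ and $d\,\norms{\eta}_{\infty}$ in the target estimate are exactly the sub-Gaussian and sub-exponential parts of the Bernstein tail, with variance proxy $\norm{\eta}^2$ and width $\norms{\eta}_{\infty}$; no lower bound on $m$ is needed because this is the "correct" scaling.

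First I would let $\mathcal N$ be a $1/4$-net of $\mathbb{S}_{\C}^{d-1}$ with $\abs{\mathcal N}\le 9^{2d}$. Since $M:=\sum_{j=1}^m\eta_j(\va_j\va_j^*-I)$ is Hermitian (the $\eta_j$ being real), \cite[Lemma 4.4.3]{Vershynin2018}, exactly as in \eqref{eq:operate}, gives $\norms{M}\le 2\max_{\vx\in\mathcal N}\abs{\sum_{j=1}^m\eta_j(\abs{\va_j^*\vx}^2-1)}$. Fix $\vx\in\mathbb{S}_{\C}^{d-1}$. By unitary invariance of the complex Gaussian law, $\abs{\va_j^*\vx}^2$ has the distribution of $\abs{Z}^2$ for a standard complex Gaussian scalar $Z$ with $\E\abs{Z}^2=1$; hence $\abs{\va_j^*\vx}^2-1$ is centered and sub-exponential with $\psi_1$-norm at most an absolute constant $K$, so $Y_j:=\eta_j(\abs{\va_j^*\vx}^2-1)$ are independent, centered, with $\norms{Y_j}_{\psi_1}\le K\abs{\eta_j}$.

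Next I would apply Bernstein's inequality \cite[Theorem 2.8.1]{Vershynin2018}: for every $t\ge 0$,
\[
\PP\dkh{\abs{\sum_{j=1}^m Y_j}\ge t}\ \le\ 2\exp\xkh{-c_0\min\xkh{\tfrac{t^2}{K^2\norm{\eta}^2},\ \tfrac{t}{K\norms{\eta}_{\infty}}}}.
\]
Taking $t=\tfrac{\rho\delta}{2}\xkh{\sqrt{d}\,\norm{\eta}+d\,\norms{\eta}_{\infty}}$ and using $t\ge\tfrac{\rho\delta}{2}\sqrt{d}\,\norm{\eta}$ in the first argument and $t\ge\tfrac{\rho\delta}{2}d\,\norms{\eta}_{\infty}$ in the second, both arguments are bounded below by $c_1(\rho,\delta)\,d$ with $c_1(\rho,\delta)=\min\dkh{\rho^2\delta^2/(4K^2),\ \rho\delta/(2K)}$. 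A union bound over $\mathcal N$ then gives $\PP\dkh{\max_{\vx\in\mathcal N}\abs{\sum_j Y_j}\ge t}\le 2\exp\xkh{(2\ln 9)d-c_0c_1(\rho,\delta)d}$. Choosing $\rho$ large enough depending only on $\delta$ (of order $1/\delta$ suffices, so that $c_0c_1(\rho,\delta)\ge 2\ln 9+c(\delta)$ for some $c(\delta)>0$) makes this at most $2\exp(-c(\delta)d)$; combining with the factor-$2$ net reduction (absorbed into $\rho$) yields the claim.

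The argument is essentially routine; the one place requiring care is the bookkeeping—verifying the centering $\E\abs{\va_j^*\vx}^2=1$ and the uniform bound $\norms{Y_j}_{\psi_1}\lesssim\abs{\eta_j}$, and then matching the two Bernstein regimes (governed by $\norm{\eta}$ and $\norms{\eta}_{\infty}$ respectively) to the two terms in the target bound, which also forces the dependence $\rho\gtrsim 1/\delta$. One could instead attempt a matrix Bernstein inequality directly, but since $\va_j\va_j^*-I$ is unbounded this would require an additional truncation step, whereas the net approach—parallel to the proof of Lemma \ref{le:simiA}—avoids it.
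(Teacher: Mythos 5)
Your proposal is correct and follows essentially the same route as the paper's proof: a $1/4$-net reduction of the operator norm, Bernstein's inequality for the centered sub-exponential terms $\eta_j(\abs{\va_j^*\vx}^2-1)$ with variance proxy $\norm{\eta}^2$ and width $\norms{\eta}_{\infty}$, the same choice of $t\asymp\rho\delta(\sqrt{d}\norm{\eta}+d\norms{\eta}_{\infty})$, and a union bound over the net with $\rho$ chosen large enough (depending only on $\delta$) to beat the $9^{2d}$ cardinality. The only difference is cosmetic bookkeeping in how $\rho$ scales with $\delta$, which does not affect the statement.
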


\begin{proof}
We assume that $\mathcal{N}$
 is a $1/4$-net  of the complex unit sphere ${\mathbb S}^{d-1}\subset {\mathbb C}^d$. It then follows from \cite[Lemma 4.4.3]{Vershynin2018} that
\begin{equation*}
 \norm{\sum_{j=1}^m \eta_j (\va_j\va_j^*-I)} \le 2 \max_{\vx \in \mathcal{N}} \Big| \sum_{j=1}^m \eta_j (| \va_j^*\vx|^2-1)\Big|,
\end{equation*}
where the cardinality $|\mathcal{N}|\le 9^{2d}$.
For any fixed $\vx \in {\mathbb S}^{d-1}$,  the terms $ |\va_j^*\vx|^2-1, j=1,\ldots,m$ are
independent centered sub-exponential random variables with the sub-exponential norm
being a constant.  Using the Bernstein's inequality \cite[Theorem
2.8.1]{Vershynin2018}, we obtain that
\begin{equation*}\label{eq:bipp}
\PP\dkh{ \Big|\sum_{j=1}^m  \eta_j (| \va_j^*\vx|^2-1) \Big| \ge t} \le  2\exp\xkh{-c' \min\Big(\frac{t^2}{\norms{\eta}^2}, \frac t {\norms{\eta}_{\infty}}\Big)}
\end{equation*}
for some positive constant $c'$.
We assume that  $\rho>1$ is a constant which will be specified later.
Taking $t:=\rho(\sqrt{d}\norm{\eta}+d
\norms{\eta}_{\infty})\delta$, we obtain that
\[
 \Big|\sum_{j=1}^m  \eta_j (| \va_j^*\vx|^2-1) \Big|  \le \rho (\sqrt{d}\norm{\eta}+d \norms{\eta}_{\infty})\delta
 \]
holds with probability at least $1-2\exp(-c' \cdot \rho\cdot \delta^2 d)$.  Choosing the constant $\rho$ such that $c' \cdot \rho\cdot \delta^2 \ge 2\ln 9$ and taking the union bound,  we can obtain that, with  probability at least $1-2\exp\xkh{-c (\delta) d}$, it holds that
\[
   \norm{\sum_{j=1}^m \eta_j (\va_j\va_j^*-I)} \le \rho(\sqrt{d}\norm{\eta}+d \norms{\eta}_{\infty})\delta
 \]
 where $c(\delta):=c'\rho\delta^2-2\ln 9>0$ is a constant  depending only on $\delta$.
\end{proof}

The following lemma states that if $\x$ is the solution to (\ref{eq:mod1}),  then the
upper-tail of $\frac 1m \sum_{j=1}^m |\va_j^* \x|^4$ is well-behaved, although it involves the fourth power of the Gaussian variables. We present its
proof in Appendix A.
\begin{lemma} \label{le:upow4}
Suppose that $\va_j \in\Cd, j=1,\ldots,m $, are i.i.d. complex Gaussian random
vectors. Assume that the noise vector $\eta$ satisfies $\norm{\eta}  \asymp \sqrt{m}$,
$\norms{\eta}_{\infty}\lesssim \log m$ and $~\sum_{j=1}^m \eta_j \lesssim m$. Suppose
that $\vx_0\in \Cd$  is a fixed vector satisfying  $\norm{\vx_0} \ge
2C\norm{\eta}/\sqrt{m}$.
 For any $\gamma>0$,   if $m\ge c'(\gamma) d \log m$ then with probability at least
$1-c''(\gamma) m^{-1}-c''' \exp(-c''''(\gamma) d)$  the following holds: any solution
$\x \in \Cd$ to (\ref{eq:mod1}) with $b_j=\abs{\nj{\va_j,\vx_0}}^2+\eta_j, \;
j=1,\ldots,m$, satisfies
\[
\frac 1m \sum_{j=1}^m \abs{\va_j^* \x}^4 \le (2+ \gamma)\norms{\x}^4.
\]
Here $c'(\gamma),c''(\gamma), c''''(\gamma)$ are constants depending on $\gamma$,  and $c'''$ is an absolute constant,  and  $C$ is the constant in Theorem \ref{result1}.
\end{lemma}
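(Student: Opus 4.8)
The plan is to start from the optimality of $\x$ and convert it into a bound on the quartic empirical sum. Since $\x$ is the global minimizer of $\sum_{j=1}^m(\abs{\va_j^*\vx}^2-b_j)^2$ and $\vx_0$ is feasible, we have $\sum_{j=1}^m(\abs{\va_j^*\x}^2-b_j)^2\le\sum_{j=1}^m(\abs{\va_j^*\vx_0}^2-b_j)^2=\sum_{j=1}^m\eta_j^2=\norm{\eta}^2$. Expanding the left-hand side with $b_j=\abs{\va_j^*\vx_0}^2+\eta_j$ gives
\[
\sum_{j=1}^m\abs{\va_j^*\x}^4 \;\le\; 2\sum_{j=1}^m\abs{\va_j^*\x}^2\abs{\va_j^*\vx_0}^2 + 2\sum_{j=1}^m\eta_j\abs{\va_j^*\x}^2 - 2\sum_{j=1}^m\eta_j\abs{\va_j^*\vx_0}^2,
\]
so after dividing by $m$ the task reduces to controlling two cross terms: $\frac1m\sum_j\abs{\va_j^*\x}^2\abs{\va_j^*\vx_0}^2$ and $\frac1m\sum_j\eta_j\abs{\va_j^*\x}^2$ (the last term is a fixed quantity bounded using $\sum_j\eta_j\lesssim m$ and concentration of $\abs{\va_j^*\vx_0}^2$, which is harmless).

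For the first cross term I would first invoke Theorem \ref{result1} to get $\min_\theta\norm{\x-e^{i\theta}\vx_0}\lesssim\norm{\eta}/(\norm{\vx_0}\sqrt m)$, hence (using $\norm{\vx_0}\ge 2C\norm{\eta}/\sqrt m$) that $\norm{\x}\asymp\norm{\vx_0}$ and $\x$ is within a small constant-relative distance of $e^{i\theta}\vx_0$. Writing $\x=e^{i\theta}\vx_0+\vh$ with $\norm{\vh}$ small relative to $\norm{\vx_0}$, I would expand $\frac1m\sum_j\abs{\va_j^*\x}^2\abs{\va_j^*\vx_0}^2$ and apply Lemma \ref{le:sunju} (with a small $\delta$) to the operator $\frac1m\sum_j\abs{\va_j^*\vx_0}^2\va_j\va_j^*$, which concentrates around $\vx_0\vx_0^*+\norm{\vx_0}^2 I$. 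This yields $\frac1m\sum_j\abs{\va_j^*\x}^2\abs{\va_j^*\vx_0}^2 \le (1+\delta)(\abs{\x^*\vx_0}^2+\norm{\x}^2\norm{\vx_0}^2 + O(\delta)\cdot\text{terms})$. Controlling the terms involving $\vh$ requires also a two-sided bound on $\frac1m\sum_j\abs{\va_j^*\x}^4$ in the other direction, which is where I would use Lemma \ref{le:sunju22} applied with $\vz=\bm w=\x$ to get $\frac1m\sum_j\abs{\va_j^*\x}^4\ge 2(1-\delta)\norm{\x}^4$ — this lower bound, combined with the inequality above, forces the cross term close to $2\norm{\x}^4$ and closes a self-improving loop. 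The term $\frac1m\sum_j\eta_j\abs{\va_j^*\x}^2$ I would handle by writing $\abs{\va_j^*\x}^2=\x^*(\va_j\va_j^*-I)\x+\norm{\x}^2$ and applying Lemma \ref{le:etalowbound}: the first piece is bounded by $\rho\delta(\sqrt d\norm{\eta}+d\norm{\eta}_\infty)\norm{\x}^2/m$, which under the hypotheses $\norm{\eta}\asymp\sqrt m$, $d\norm{\eta}_\infty\lesssim\sqrt m\norm{\eta}$ is $O(\delta)\norm{\x}^2$ (using $m\gtrsim d\log m$ to absorb $\sqrt d\norm{\eta}/m\asymp\sqrt{d/m}$), while the second piece is $\frac{\norm{\x}^2}{m}\sum_j\eta_j\lesssim\norm{\x}^2$ — so to extract the factor $\gamma$ I need the $\sum_j\eta_j$ contribution to actually be small relative to $\norm{\x}^4$, which uses $\norm{\x}^2\asymp\norm{\vx_0}^2\gtrsim\norm{\eta}^2/m\asymp 1$ only up to constants; a more careful accounting shows the $\eta$-cross term is absorbable into $\gamma\norm{\x}^4$ once $\norm{\vx_0}$ is bounded below appropriately, or contributes a genuinely lower-order additive term.

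Assembling: from the optimality expansion and the estimates above, $\frac1m\sum_j\abs{\va_j^*\x}^4\le 2(1+O(\delta))\norm{\x}^4 + (\text{lower order})$, and choosing $\delta$ small enough in terms of $\gamma$ gives the claimed bound $\frac1m\sum_j\abs{\va_j^*\x}^4\le(2+\gamma)\norm{\x}^4$. The probability bookkeeping comes from intersecting the events of Theorem \ref{result1}, Lemma \ref{le:sunju}, Lemma \ref{le:sunju22} and Lemma \ref{le:etalowbound}, each of which holds with the stated probability provided $m\gtrsim_\gamma d\log m$; the failure probability is $c''(\gamma)m^{-1}+c'''\exp(-c''''(\gamma)d)$ as claimed. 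The main obstacle I anticipate is the bookkeeping in the first cross term: the naive bound $\frac1m\sum_j\abs{\va_j^*\x}^2\abs{\va_j^*\vx_0}^2\le\frac12(\frac1m\sum_j\abs{\va_j^*\x}^4+\frac1m\sum_j\abs{\va_j^*\vx_0}^4)$ is circular (it reintroduces the quantity we want to bound), so one must instead use the structured concentration of Lemma \ref{le:sunju} around $\vx_0\vx_0^*+\norm{\vx_0}^2I$ together with the a priori closeness $\x\approx e^{i\theta}\vx_0$ from Theorem \ref{result1}, and carefully track how the perturbation $\vh$ enters — getting the constant down to exactly $2$ (rather than some larger constant) is what forces the use of both Lemma \ref{le:sunju} and the matching lower bound Lemma \ref{le:sunju22}, and the interplay of these two is the delicate part.
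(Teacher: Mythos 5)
There is a genuine gap, and it sits exactly where you hedge. First, your displayed inequality silently drops the term $-\sum_{j=1}^m\abs{\va_j^*\vx_0}^4$ from the expansion of $\sum_j(\abs{\va_j^*\x}^2-b_j)^2\le\norm{\eta}^2$. Since $\frac1m\sum_j\abs{\va_j^*\x}^2\abs{\va_j^*\vx_0}^2=\x^*\big(\frac1m\sum_j\abs{\va_j^*\vx_0}^2\va_j\va_j^*\big)\x$ concentrates (Lemma \ref{le:sunju}) near $\abs{\x^*\vx_0}^2+\norm{\x}^2\norm{\vx_0}^2\approx 2\norm{\x}^4$ when $\x\approx e^{i\theta}\vx_0$, the right-hand side of your inequality is $\approx 4\norm{\x}^4$ plus noise terms, so this route can only yield a constant near $4$, not $2+\gamma$; the dropped quartic term in $\vx_0$ (which itself concentrates near $2\norm{\vx_0}^4$) is what would bring the main term back to $2\norm{\x}^4$. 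Second, and more fundamentally, even after restoring it the noise contribution $\frac2m\sum_j\eta_j\big(\abs{\va_j^*\x}^2-\abs{\va_j^*\vx_0}^2\big)$ contains the piece $\frac{2\sum_j\eta_j}{m}\big(\norm{\x}^2-\norm{\vx_0}^2\big)$. Under the hypotheses of the lemma one only has $\normf{\x\x^*-\vx_0\vx_0^*}\lesssim\norm{\eta}/\sqrt m\asymp 1$ and $\norm{\vx_0}$ may itself be of order $1$ (the hypothesis $\norm{\vx_0}\ge 2C\norm{\eta}/\sqrt m$ only bounds it below by a constant), so this piece is an additive quantity of absolute-constant size, comparable to $\norm{\x}^4$ and not absorbable into $\gamma\norm{\x}^4$ for arbitrarily small $\gamma$: it is not ``genuinely lower order''. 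Note also that your starting point, global optimality compared against the feasible point $\vx_0$, is equivalent to $\norm{\A(\x\x^*-\vx_0\vx_0^*)-\eta}\le\norm{\eta}$, i.e.\ exactly the information already consumed in proving Theorem \ref{result1}; it cannot detect the finer cancellation the constant $2$ requires. Finally, the step ``the lower bound $\frac1m\sum_j\abs{\va_j^*\x}^4\ge 2(1-\delta)\norm{\x}^4$ from Lemma \ref{le:sunju22} forces the cross term close to $2\norm{\x}^4$ and closes a self-improving loop'' is not a valid inference: a lower bound on the quantity you want to bound above does not improve the upper bound on the cross term.

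The paper's proof takes a different and essential route: it uses the Wirtinger stationarity conditions $\nj{\nabla f(\x),\x}=0$ and $\nj{\nabla f(\x),\vx_0}=0$, writes $\x=R\z$ with $\norm{\z}=1$, and eliminates $R^2$ by combining the two identities. In the resulting comparison the $\frac1m\sum_j\eta_j$ contributions appear on both sides with matching factors (roughly $s=\nj{\z,\vx_0}$ times the quartic sum on one side, $2s$ on the other) and cancel up to $\epsilon$ terms, which is how a constant-level noise mean is neutralized; Theorem \ref{result1} enters only to guarantee $s\ge\sqrt5/5$. The remaining delicate object, $\frac1m\sum_j\abs{\va_j^*\z}^2\Re(\vx_0^*\va_j\va_j^*\z)$ uniformly over unit $\z$, is handled by a truncation argument (Lemma \ref{le:trouble}, built on Lemma \ref{le:indi}) whose upper bound deliberately retains a factor $\big(\frac1m\sum_j\abs{\va_j^*\z}^4\big)^{3/4}$, producing a self-bounding inequality — necessary because a uniform two-sided bound on the fourth moment over the sphere is false at $m\asymp d\log m$. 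Your proposal has no analogue of either mechanism, so as it stands it cannot reach the claimed bound $(2+\gamma)\norm{\x}^4$ for every $\gamma>0$.
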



We next present the proof of Theorem \ref{th:lowbound}.
\begin{proof}[Proof of Theorem \ref{th:lowbound}]~
Without loss of generality, we assume $\norm{\vx_0}=1$ (the general case can be obtained via a simple rescaling).
Let
\[
f(\vx)=\sum_{j=1}^m \xkh{\abs{\nj{\va_j,\vx}}^2-b_j}^2.
\]
Then the Wirtinger gradient (see, eg, \cite{WF}) of $f$ is
\[
\nabla f(\vx)=2\sum_{j=1}^m \xkh{\abs{\nj{\va_j,\vx}}^2-b_j} \va_j\va_j^*\vx.
\]
Since $\x $ is the solution to  (\ref{eq:mod1}), we have
\begin{equation*}
\nabla f(\x)=2\sum_{j=1}^m \xkh{\abs{\nj{\va_j,\x}}^2-b_j} \va_j\va_j^*\x=0.
\end{equation*}
Noting that $b_j=\abs{\nj{\va_j,\vx_0}}^2+\eta_j, \;  j=1,\ldots,m$, then $\nj{\nabla f(\x),\x}=0$ gives
\begin{equation} \label{eq:1lowerkey}
\frac1m \sum_{j=1}^m \abs{\va_j^* \x}^4 -\frac1m \sum_{j=1}^m \abs{\va_j^* \x}^2\abs{\va_j^* \vx_0}^2= \frac1m \sum_{j=1}^m \eta_j \abs{\va_j^* \x}^2.
\end{equation}
We claim that, when $m\gtrsim d \log m$, with probability at least $1-c' m^{-1}-c''
\exp(-c''' d)$ it holds:  any solution $\x$ to  (\ref{eq:mod1}) obeys
\begin{equation} \label{eq:up5}
\abs{\frac1m \sum_{j=1}^m \abs{\va_j^* \x}^4 -\frac1m \sum_{j=1}^m \abs{\va_j^* \x}^2\abs{\va_j^* \vx_0}^2} \le 3\norms{\x}^4-\norms{\x}^2 \le 3\norms{\x}^2 \normf{H}
\end{equation}
and
\begin{equation} \label{eq:infbound5}
 \frac1m \Big| \sum_{j=1}^m \eta_j \abs{\va_j^*\x}^2\Big| \gtrsim \norm{\eta} \norm{\x}^2/\sqrt{m},
 \end{equation}
where $H=\x\x^*-\vx_0\vx_0^*$ and $ c', c'', c'''$ are universal positive constants.
Combining (\ref{eq:1lowerkey}), (\ref{eq:up5}) and (\ref{eq:infbound5}), we obtain
that
\begin{equation}\label{eq:Hlow}
\normf{H}=\normf{\x\x^*-\vx_0\vx_0^*} \gtrsim \frac{\norm{\eta}}{\sqrt{m}}
\end{equation}
holds with probability at least $1-c' m^{-1}-c'' \exp(-c''' d)$ provided $m\gtrsim d
\log m$. We next use (\ref{eq:Hlow}) to derive the conclusion.   For any $\theta \in
[0,2\pi)$, we have
\begin{equation}\label{eq:Hlow1}
\begin{aligned}
\normf{\x\x^*-\vx_0\vx_0^*} &=  \normf{\x\x^*-\x\vx_0^*e^{-i\theta}+\x\vx_0^*e^{-i\theta}-\vx_0\vx_0^*}\\
&\le  \norm{\x} \norm{\x-\vx_0e^{i\theta}}+  \norm{\vx_0} \norm{\x-\vx_0e^{i\theta}}.
\end{aligned}
\end{equation}
If $\|\vx_0\|_2< \frac{\sqrt{\|\eta\|_2}}{{m}^{1/4}}$, then Theorem \ref{result1} gives
\begin{equation} \label{eq:boundxhat}
\norm{\x}\lesssim \norm{\vx_0}+\frac{\sqrt{\|\eta\|_2}}{{m}^{1/4}} <\frac{2\sqrt{\|\eta\|_2}}{{m}^{1/4}}.
\end{equation}
Combining (\ref{eq:Hlow}), (\ref{eq:Hlow1}) and (\ref{eq:boundxhat}), we obtain that
\[
 \min_{\theta\in [0,2\pi)} \norm{\x-e^{i\theta}\vx_0} \gtrsim \frac{{m}^{1/4}}{\sqrt{\norm{\eta}}}\cdot \normf{\x\x^*-\vx_0\vx_0^*}\gtrsim   \frac{\sqrt{\|\eta\|_2}}{{m}^{1/4}}.
\]
On the other hand, if $\|\vx_0\|_2\ge \frac{\sqrt{\|\eta\|_2}}{{m}^{1/4}}$, then Theorem \ref{result1} gives
\[
\norm{\x}\lesssim \norm{\vx_0}+  \frac{\|\eta\|_2}{\norm{\vx_0}\cdot \sqrt{m}} \le  2\norm{\vx_0}.
\]
According to  (\ref{eq:Hlow}) and (\ref{eq:Hlow1}), we obtain
\[
 \min_{\theta\in [0,2\pi)} \norm{\x-e^{i\theta}\vx_0} \gtrsim \frac{1}{\norm{\vx_0}}\normf{\x\x^*-\vx_0\vx_0^*}\gtrsim  \frac{ \norm{\eta}}{\norm{\vx_0}\sqrt{m}}.
\]
We arrive at the conclusion.

 It remains to prove (\ref{eq:up5}) and
(\ref{eq:infbound5}). We first consider (\ref{eq:up5}). According to Lemma
\ref{le:upow4},  if $m\gtrsim d \log m$ then the following holds
\begin{equation}\label{eq:di1}
\frac 1m \sum_{j=1}^m \abs{\va_j^* \x}^4 \le 3\norms{\x}^4
\end{equation}
with probability at least $1-c_1 m^{-1}-c_2 \exp(-c_3 d)$ for some positive constants
$c_1, c_2$ and $c_3$. On the other hand, Lemma  \ref{le:sunju} implies  that if
$m\gtrsim d \log d$ then with probability at least $1-c_1 m^{-1}-c_2 \exp(-c_3 m/\log
m)$ it holds
\begin{equation}\label{eq:di2}
\frac1m \sum_{j=1}^m \abs{\va_j^* \x}^2\abs{\va_j^* \vx_0}^2 \ge \norms{\x}^2.
\end{equation}
Combining (\ref{eq:di1}) and (\ref{eq:di2}), we obtain that,  with probability at
least $1-2c_1 m^{-1}-2c_2 \exp(-c_3 d)$,
\begin{equation}\label{eq:1abs}
\frac1m \sum_{j=1}^m \abs{\va_j^* \x}^4 -\frac1m \sum_{j=1}^m \abs{\va_j^* \x}^2\abs{\va_j^* \vx_0}^2 \le 3\norms{\x}^4-\norms{\x}^2 \le 3\norms{\x}^2 \normf{H}
\end{equation}
holds provided    $m\gtrsim d \log m$. Similarly, we can use  Lemma \ref{le:sunju}
and Lemma \ref{le:sunju22} to obtain that
\begin{equation}\label{eq:2abs}
 \frac1m \sum_{j=1}^m \abs{\va_j^* \x}^2\abs{\va_j^* \vx_0}^2 -  \frac1m \sum_{j=1}^m \abs{\va_j^* \x}^4 \le 3\norms{\x}^2-\norms{\x}^4 \le 3\norms{\x}^2 \normf{H}
\end{equation}
holds with probability at least $1-2c_1 m^{-1}-2c_2 \exp(-c_3 d)$ provided $m\gtrsim
d \log m$. Combining (\ref{eq:1abs}) and (\ref{eq:2abs}), we arrive at
\eqref{eq:up5}.

 We next consider (\ref{eq:infbound5}).
To prove (\ref{eq:infbound5}), it is enough to show that the following holds with
high probability:
\begin{equation} \label{eq:newinfbound5}
  \Big| \sum_{j=1}^m \eta_j \abs{\va_j^*\vx}^2\Big| \gtrsim \sqrt{m}\norm{\eta} \quad \mbox{for all} \quad \vx \in \Cd \text{ with } \|\vx\|=1.
 \end{equation}
  A simple observation is that  for any fixed $\vx\in \C^d$ with  $\|\vx\|=1$
  the terms $|\va_j^*\vx|^2$ are independent
sub-exponential random variables with the sub-exponential norm $C_0$ where $C_0$ is a
constant. According to Bernstein's inequality, we have
\begin{equation}\label{eq:bern12}
\PP\dkh{\Big| \sum_{j=1}^m \eta_j \big(\abs{\va_j^*\vx}^2-1\big)\Big|\ge t}
\le 2\exp\xkh{-c_4 \min \Big(\frac{t^2}{C_0^2 \norms{\eta}^2 },
\frac{t}{C_0 \norms{\eta}_{\infty}  }\Big)}.
\end{equation}
Assume that $c_5>1$ is a constant which will be specified later.
Taking $t:= \sqrt{c_5 d }\norm{\eta} + c_5 d \norms{\eta}_{\infty}$, we have
\[
\frac{t^2}{C_0^2 \norms{\eta}^2 } \ge  \frac{c_5 }{C_0^2} d \qquad \mbox{and} \qquad
 \frac{t}{C_0 \norms{\eta}_{\infty}  } \ge  \frac{c_5 }{C_0} d.
\]
Then (\ref{eq:bern12}) implies that, with probability at least
\[
1-2\exp\xkh{-c_4 \min \Big(\frac{t^2}{C_0^2 \norms{\eta}^2 },
\frac{t}{C_0 \norms{\eta}_{\infty}  }\Big)} \ge 1-2\exp\xkh{-c_4c_5 d \cdot \min \Big(\frac{1}{C_0^2}, \frac{1}{C_0 }\Big)},
\]
 it
holds that
\begin{eqnarray*}
\Big| \sum_{j=1}^m \eta_j \abs{\va_j^*\vx}^2\Big| &\ge &
 { \big|\sum_{j=1}^m \eta_j\big|- \sqrt{c_5 d}\norm{\eta} - c_5 d \norms{\eta}_{\infty} }.
\end{eqnarray*}
Hence, for any fixed $c_5>0$ there exists a constant $C'>0$ such that if $m\ge C'd\log m$ then
\begin{equation} \label{eq:eta5}
 \Big| \sum_{j=1}^m \eta_j \abs{\va_j^*\vx}^2\Big|   \gtrsim \sqrt{m}\norm{\eta}
\end{equation}
holds with probability at least $1-2\exp(-c d)$, where $ c:=c_4 c_5\min\xkh{1/C_0^2,1/C_0}$.
Here, we use  $\big| \sum_{j=1}^m \eta_j\big| \gtrsim \sqrt{m}\norm{\eta} \gtrsim \sqrt{C'} d
\norms{\eta}_{\infty} $, $\norm{\eta} \gtrsim \sqrt{m}$ and $\norms{\eta}_{\infty} \lesssim \log m$.

 Next, we give a uniform bound
for (\ref{eq:eta5}).  We assume that $\mathcal{N}$ is an $\epsilon$-net of the unit
complex sphere in $\Cd$  and hence the covering number $\# \mathcal{N}\le
(1+\frac{2}{\epsilon})^{2d}$. For any $\vx' \in \Cd$ with $\norm{\vx'}=1$, there exists
a $\vx \in \mathcal{N}$ such that $\norm{\vx'-\vx}\le \epsilon$.  Taking $\delta=1/2$ in Lemma
\ref{le:etalowbound}, we can obtain that if $m\gtrsim d \log m$ then the following holds
with probability at least $1-2\exp(-c_6 d)$
\begin{equation}\label{eq:epsnet5}
\begin{aligned}
\Big| \sum_{j=1}^m \eta_j \abs{\va_j^*\vx'}^2\Big|-\Big| \sum_{j=1}^m \eta_j \abs{\va_j^*\vx}^2\Big|
 &\le  \Big| \sum_{j=1}^m \eta_j \va_j^*( \vx'\vx'^*-\vx\vx^*)\va_j \Big| \\
 &\lesssim (\sqrt{d}\norm{\eta} + d\norms{\eta}_{\infty})\normf{\vx'\vx'^*-\vx\vx^*}\\
&\lesssim (\sqrt{m}\norm{\eta} + d\norms{\eta}_{\infty})\normf{\vx'\vx'^*-\vx\vx^*}\\
&\lesssim (\sqrt{m}\norm{\eta} + d\norms{\eta}_{\infty})\epsilon  \\
&\lesssim \epsilon \sqrt{m}\norm{\eta},
\end{aligned}
\end{equation}
where we take $\epsilon$ to be some positive constant  in the third inequality and use the fact that $ \sqrt{m}\norm{\eta} \gtrsim d
\norms{\eta}_{\infty} $  in the last inequality. Here, $c_6$ is a positive constant.  We can choose the constant $c_5$ such that
\[
c=c_4 c_5\min\xkh{1/C_0^2,1/C_0}> 2\log(1+2/\epsilon).
\]
Combining (\ref{eq:eta5}) and
(\ref{eq:epsnet5}), we obtain that (\ref{eq:newinfbound5}) holds  with probability at
least
\[
1-2\exp(-c m)\cdot (1+\frac{2}{\epsilon})^{2d}-2\exp(-c_6 d) \ge 1-4\exp(-c_7 d)
\]
for some positive constant $c_7$, provided $m\gtrsim d\log m$.
\end{proof}

\section{Proof of Theorem \ref{result sparse}}
In this section, we present the proof of Theorem \ref{result sparse}. Before
proceeding, we introduce several notations and technical lemmas which are useful in
the proof. For convenience, we set
\[
S_{d,s}:=\dkh{\vx\in \Cd: \norm{\vx} \le 1, \norms{\vx}_0 \le s},
\]
and
\[
K_{d,s}:=\dkh{\vx\in \Cd: \norm{\vx}\le 1, \normone{\vx} \le \sqrt{s}}.
\]
The relationship of the two sets are characterized by the following lemma:
\begin{lemma} \cite[Lemma 3.1]{Plan2013} \label{le:relaks}
It holds that $\mathrm{conv}(S_{d,s}) \subset K_{d,s} \subset
2\mathrm{conv}(S_{d,s})$,  where $\mathrm{conv}(S_{d,s})$ denotes the convex hull of
$S_{d,s}$.
\end{lemma}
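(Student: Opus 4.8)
The statement to prove is the set inclusion $\mathrm{conv}(S_{d,s}) \subset K_{d,s} \subset 2\,\mathrm{conv}(S_{d,s})$, where $S_{d,s}=\{\vx\in\Cd: \norm{\vx}\le 1, \norms{\vx}_0\le s\}$ and $K_{d,s}=\{\vx\in\Cd:\norm{\vx}\le1,\normone{\vx}\le\sqrt s\}$. The plan is to prove the two inclusions separately by elementary convexity and norm arguments.

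\textbf{First inclusion.} For $\mathrm{conv}(S_{d,s})\subset K_{d,s}$, I would first check that $S_{d,s}\subset K_{d,s}$: if $\norm{\vx}\le 1$ and $\norms{\vx}_0\le s$, then by Cauchy--Schwarz applied on the support, $\normone{\vx}\le \sqrt{\norms{\vx}_0}\,\norm{\vx}\le\sqrt s$, so $\vx\in K_{d,s}$. Since $K_{d,s}$ is an intersection of two norm balls it is convex, hence it contains the convex hull of any subset, giving $\mathrm{conv}(S_{d,s})\subset K_{d,s}$.

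\textbf{Second inclusion.} For $K_{d,s}\subset 2\,\mathrm{conv}(S_{d,s})$, fix $\vx\in K_{d,s}$, i.e. $\norm{\vx}\le1$ and $\normone{\vx}\le\sqrt s$; I must exhibit $\vx$ as a convex combination of elements of $2S_{d,s}=\{\vy:\norm{\vy}\le2,\norms{\vy}_0\le s\}$. The standard device is a greedy/dyadic decomposition of the coordinates by magnitude: sort $|x_{(1)}|\ge|x_{(2)}|\ge\cdots$ and partition the index set into consecutive blocks $T_1,T_2,\dots$ each of size $s$ (the last possibly smaller). Write $\vx=\sum_k \vx_{T_k}$ where $\vx_{T_k}$ is $\vx$ restricted to $T_k$; each $\vx_{T_k}$ is $s$-sparse. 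The key estimate is the ``shifting'' bound $\norm{\vx_{T_{k+1}}}\le \normone{\vx_{T_k}}/\sqrt s$ for $k\ge1$, which follows because every coordinate in $T_{k+1}$ has magnitude at most the average magnitude on $T_k$. Summing, $\sum_{k\ge2}\norm{\vx_{T_k}}\le \normone{\vx}/\sqrt s\le 1$, and together with $\norm{\vx_{T_1}}\le\norm{\vx}\le1$ this gives $\sum_k\norm{\vx_{T_k}}\le 2$. Then writing $\lambda_k:=\norm{\vx_{T_k}}/2$ we have $\sum_k\lambda_k\le1$ and $\vx_{T_k}/\lambda_k$ has norm $2$ and is $s$-sparse, hence lies in $2S_{d,s}$; adding a dummy term to make the $\lambda_k$ sum to exactly $1$ (the zero vector is in $2S_{d,s}$), we realize $\vx$ as a convex combination of points in $2S_{d,s}=2\,\mathrm{conv}(S_{d,s})$ up to the scalar, i.e. $\vx\in 2\,\mathrm{conv}(S_{d,s})$.

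\textbf{Main obstacle.} There is no deep obstacle here — this is a known lemma (cited as \cite[Lemma 3.1]{Plan2013}) and the content is the dyadic block decomposition. The only point requiring care is the bookkeeping in the second inclusion: making sure the shifting inequality is applied with the correct index offset (one cannot bound $\norm{\vx_{T_1}}$ by the tail, so $T_1$ must be handled by the trivial bound $\norm{\vx}\le1$), and confirming that the scalars $\lambda_k$ can be padded to a genuine convex combination using $0\in S_{d,s}$. I would also note the argument works verbatim over $\C^d$ since only the $\ell_1$/$\ell_2$ norms and support sizes are used.
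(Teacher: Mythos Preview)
Your proposal is correct and is precisely the standard argument from \cite{Plan2013}; the paper itself does not supply a proof of this lemma but simply cites it, so there is nothing further to compare.
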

The following lemma presents an upper bound for the spectral norm of $\A^*(\epsilon)$
which is defined in (\ref{eq:dualo}).
\begin{lemma} \label{le:sparsedual}
Suppose that $\va_j\in \C^s,j=1,\ldots,m $, are i.i.d.  complex Gaussian random
vectors and $\epsilon_1,\ldots,\epsilon_m$ are independent Rademacher random
variables. If $m \gtrsim  s$, then with probability at least $1-2\exp(-c m)$ we have
$ \norm{\A^*(\epsilon)} \lesssim m$, that is,
\[
\norm{\sum_{j=1}^m \epsilon_j \va_j\va_j^*} \lesssim  m,
\]
where $\epsilon:=(\epsilon_1,\ldots,\epsilon_m)^\T$.
\end{lemma}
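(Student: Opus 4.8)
The plan is to follow the same route as the proof of Lemma \ref{le:simiA}: discretize the unit sphere and control the quadratic form at each net point via Bernstein's inequality. The only change is that here a bound of order $m$ (rather than of order $\sqrt{ms}$) suffices, which lets us use a crude threshold and thereby obtain a tail of order $\exp(-cm)$. First I would pick a $1/4$-net $\mathcal{N}$ of the complex unit sphere ${\mathbb S}^{s-1}\subset {\mathbb C}^s$, so that $\abs{\mathcal{N}}\le 9^{2s}$ and, by \cite[Lemma 4.4.3]{Vershynin2018},
\[
\norm{\sum_{j=1}^m \epsilon_j \va_j\va_j^*} \le 2\max_{\vx\in\mathcal{N}} \Big| \sum_{j=1}^m \epsilon_j \abs{\va_j^*\vx}^2 \Big| .
\]

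Next, fix $\vx\in{\mathbb S}^{s-1}$. Since $\va_j^*\vx$ is a standard complex Gaussian scalar, $\abs{\va_j^*\vx}^2$ has an exponential distribution; hence the summands $\epsilon_j\abs{\va_j^*\vx}^2$ are independent, centered (as $\epsilon_j$ is independent of $\va_j$), and sub-exponential with sub-exponential norm at most an absolute constant. Bernstein's inequality \cite[Theorem 2.8.1]{Vershynin2018} then yields, for every $\tau\ge 0$,
\[
\PP\dkh{ \Big| \sum_{j=1}^m \epsilon_j \abs{\va_j^*\vx}^2 \Big| \ge \tau m } \le 2\exp\xkh{-c\min(\tau^2,\tau)\,m}
\]
for an absolute constant $c>0$. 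I would fix $\tau$ to be a sufficiently large absolute constant so that $\min(\tau^2,\tau)=\tau$, take a union bound over $\mathcal{N}$, and invoke the hypothesis $m\gtrsim s$ (with a large enough implied constant) to absorb $\abs{\mathcal{N}}\le 9^{2s}$: since then $2s\ln 9\le c\tau m/2$, we get
\[
\max_{\vx\in\mathcal{N}} \Big| \sum_{j=1}^m \epsilon_j \abs{\va_j^*\vx}^2 \Big| \le \tau m
\]
with probability at least $1-2\exp(-c\tau m/2)$. Combining this with the net reduction gives $\norm{\sum_{j=1}^m \epsilon_j \va_j\va_j^*}\le 2\tau m\lesssim m$ on the same event, which is the assertion (with $2\exp(-cm)$ the failure probability after renaming constants).

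There is no real obstacle in this argument; it is essentially a coarser version of the proof of Lemma \ref{le:simiA}. The single point that needs care is the bookkeeping of constants at the union-bound step: one first fixes $\tau$ large, and only then fixes the implied constant in $m\gtrsim s$ large enough that the Bernstein exponent $c\tau m$ dominates the entropy term $2s\ln 9$, so that the net cardinality $9^{2s}$ is harmless.
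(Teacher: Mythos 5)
Your proposal is correct and follows essentially the same route as the paper's proof: a $1/4$-net of the complex unit sphere in $\C^s$, Bernstein's inequality at each net point with a threshold of order $m$, and a union bound over the net absorbed by the hypothesis $m\gtrsim s$. The constant bookkeeping you highlight (fixing the threshold first, then the implied constant in $m\gtrsim s$) matches how the paper handles the entropy term.
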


\begin{proof}
Assume that $\mathcal{N}$  is a $1/4$-net  of the complex unit sphere ${\mathbb
S}^{s-1}\subset {\mathbb C}^s$. Then we have
\begin{equation*}
 \norm{\sum_{j=1}^m \epsilon_j \va_j\va_j^*} \le 2 \max_{\vx \in \mathcal{N}} \Big| \sum_{j=1}^m \epsilon_j \abs{\va_j^*\vx}^2\Big|.
\end{equation*}
 For any fixed $\vx \in \mathcal{N}$,  the terms $\epsilon_j |\va_j^*\vx|^2, j=1,\ldots,m$ are
independent centered sub-exponential random variables with maximum sub-exponential
norm $C_1$,  where $C_1$ is a constant.  We use Bernstein's inequality \cite[Theorem
2.8.1]{Vershynin2018} to obtain
\begin{eqnarray*}
\PP\dkh{ \Big| \sum_{j=1}^m \epsilon_j \abs{\va_j^*\vx}^2\Big| \ge Cm }   \le 2\exp\xkh{-c' \min \Big( \frac{C^2 m^2}{C_1^2m}, \frac{C m}{ C_1}\Big)}\le 2\exp\xkh{-c m}
\end{eqnarray*}
for some positive constants $C,c,c'$. Recognize that $|\mathcal{N}|\le 9^{2s}$.
Taking the union bound over $\mathcal{N}$,  we can obtain that for $m \gtrsim s$ with
probability at least $1-2\exp\xkh{-c m}$ it holds
\[
 \norm{\sum_{j=1}^m \epsilon_j \va_j\va_j^*} \lesssim m,
\]
which completes the proof.
\end{proof}

Following the spirit of LRIP condition, we need  to present a lower bound
$\sum_{j=1}^m (\va_j^* H \va_j)^2$ for simultaneously sparse and low-rank matrix $H$
under the optimal sampling complexity, as demonstrated in the theorem below.


\begin{lemma} \label{le:cruciallem}
Suppose that $\va_j\in \C^d,j=1,\ldots,m, $ are i.i.d. complex Gaussian random
vectors and $\epsilon_1,\ldots,\epsilon_m$ are independent Rademacher random
variables. Assume that $\eta\in \R^m$ is a vector. If $m\gtrsim s \log (ed/s)$, then
the followings hold  with probability at least $1-4\exp(-c m)$:
\begin{itemize}
\item[(i)]  There exists a sufficiently small constant $c_0>0$ such that
\begin{equation*}
 \big| \sum_{j=1}^m \epsilon_j \va_j^*\vu \vv^* \va_j \big |\le c_0m \quad
\text{ for all }
    \vu,\vv \in K_{d,s}.
 \end{equation*}
%
\item[(ii)]
\begin{equation*}
\inf_{H\in  M_{d,s}}\norm{\A(H)} \gtrsim  \sqrt{m},
\end{equation*}
where
\[
\begin{array}{l}
M_{d,s}:=\left\{\frac{\vh\vh^*+\vh\vx^*+\vx\vh^*}{\normf{\vh\vh^*+\vh\vx^*+\vx\vh^*}}\in \H_2(\C):  \vh/\norm{\vh} \in K_{d,s},\;  \vx \in K_{d,s} \quad \mbox{and}  \right.\vspace{1em} \\
\qquad \qquad \quad   \norm{\vh}\le 2\normf{\vh\vh^*+\vh\vx^*+\vx\vh^*} \Big\}.
\end{array}
\]
\end{itemize}
\end{lemma}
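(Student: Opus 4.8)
The plan is to prove (i) by a net‑and‑Bernstein union bound over sparse supports, after using the convexity relation $K_{d,s}\subset 2\,\mathrm{conv}(S_{d,s})$ of Lemma~\ref{le:relaks} to reduce to exactly $s$‑sparse vectors, and then to deduce (ii) by feeding (i) into Mendelson's small‑ball method (Lemma~\ref{le:lower}) in the same way as in the proof of Theorem~\ref{le:ORIP}.

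\textbf{Part (i).} Since $(\vu,\vv)\mapsto \sum_{j=1}^m\epsilon_j\,\va_j^*\vu\vv^*\va_j$ is bilinear, Lemma~\ref{le:relaks} reduces the supremum over $K_{d,s}\times K_{d,s}$ to at most $4$ times the supremum over $S_{d,s}\times S_{d,s}$. For $\vu,\vv\in S_{d,s}$ with $\mathrm{supp}(\vu)\cup\mathrm{supp}(\vv)\subset U$, $\abs{U}\le 2s$, one has
\[
\Big|\sum_{j=1}^m\epsilon_j\,\va_j^*\vu\vv^*\va_j\Big|=\big|\vv^*\A^*(\epsilon)\vu\big|\le \big\|\A^*(\epsilon)\big|_{U}\big\|,
\]
where $\A^*(\epsilon)|_{U}$ is the principal submatrix indexed by $U$. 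For a fixed $U$ with $\abs{U}\le 2s$, a $1/4$‑net of the unit sphere in $\C^{\abs{U}}$ together with the Bernstein inequality applied to the centered sub‑exponential variables $\epsilon_j\abs{\va_j^*\vx}^2$ (exactly as in the proof of Lemma~\ref{le:sparsedual}), now with threshold $c_0 m$, yields $\big\|\A^*(\epsilon)|_U\big\|\lesssim c_0 m$ with probability at least $1-9^{4s}\exp(-c' c_0^2 m)$. A union bound over the $\binom{d}{2s}\le(ed/s)^{2s}$ index sets $U$, together with $m\gtrsim s\log(ed/s)$ with an implicit constant large enough (depending only on $c_0$) to absorb the $9^{4s}$ and $(ed/s)^{2s}$ factors, gives (i) with probability at least $1-2\exp(-cm)$, after renaming $c_0$.

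\textbf{Part (ii).} Apply Lemma~\ref{le:lower} with $\phi_j=\va_j\va_j^*$ (viewed in $\R^{d^2}$) and $E=M_{d,s}$. Every $H\in M_{d,s}$ is Hermitian, of rank at most $2$, with $\normf{H}=1$, so the Paley--Zygmund computation from the proof of Theorem~\ref{le:ORIP} (which only uses $\mathrm{rank}(H)\le r$) applies verbatim with $r=2$ and gives $Q_{1/\sqrt2}(M_{d,s})\ge 1/52$. It remains to bound the Rademacher width $W_m=\E\sup_{H\in M_{d,s}}\jkh{H,A}$ with $A=\frac1{\sqrt m}\sum_{j=1}^m\epsilon_j\va_j\va_j^*$. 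Writing a generic $H\in M_{d,s}$ as $(\vh\vh^*+\vh\vx^*+\vx\vh^*)/N$ with $N=\normf{\vh\vh^*+\vh\vx^*+\vx\vh^*}$, $t=\norm{\vh}$, $\hat{\vh}=\vh/t\in K_{d,s}$ and $\vx\in K_{d,s}$, a direct computation gives
\[
\abs{\jkh{H,A}}=\frac1N\Big|t^2\,\hat{\vh}^*A\hat{\vh}+2t\,\Real\!\big(\vx^*A\hat{\vh}\big)\Big|\le\frac{t^2+2t}{N}\sup_{\vu,\vv\in K_{d,s}}\abs{\vu^*A\vv}\le\frac{t^2+2t}{N}\cdot c_0\sqrt m,
\]
the last step being (i), since $\sqrt m\,\abs{\vu^*A\vv}=\big|\sum_{j}\epsilon_j\va_j^*\vu\vv^*\va_j\big|$ for $\vu,\vv\in K_{d,s}$. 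The scalar ratio $(t^2+2t)/N$ is bounded by an absolute constant: expanding $N^2=t^4+2t^2\norm{\vx}^2+2t^2\abs{\jkh{\hat{\vh},\vx}}^2+4t^3\Real\jkh{\hat{\vh},\vx}$ (choosing the global phase so $\Real\jkh{\hat{\vh},\vx}=\abs{\jkh{\hat{\vh},\vx}}\ge0$) gives $N\ge t\sqrt{t^2+2\norm{\vx}^2}$, while the defining constraint $t=\norm{\vh}\le 2N$ together with $\abs{\jkh{\hat{\vh},\vx}}\le\norm{\vx}$ forces $t^2+2\norm{\vx}^2\gtrsim 1$; hence $(t^2+2t)/N\le (t+2)/\sqrt{t^2+2\norm{\vx}^2}\lesssim 1$ both when $t\ge1$ and when $t<1$. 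Thus $\sup_{H\in M_{d,s}}\abs{\jkh{H,A}}\lesssim c_0\sqrt m$ on the event of (i); since that event has probability $1-2\exp(-cm)$ and off it we may use $\abs{\jkh{H,A}}\le\normf{A}$ with $\E\normf{A}^2\lesssim d^2$, we obtain $W_m\lesssim c_0\sqrt m$. Plugging $\xi=\sqrt2/4$, $\gamma=c''\sqrt m$, $Q_{1/\sqrt2}\ge1/52$ and $W_m\lesssim c_0\sqrt m$ into Lemma~\ref{le:lower}, and choosing $c_0$ and $c''$ small enough, yields $\inf_{H\in M_{d,s}}\norm{\A(H)}\gtrsim\sqrt m$ with probability at least $1-4\exp(-cm)$ once (i) is taken into account.

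\textbf{Main obstacle.} The only genuinely non‑routine step is the bound $W_m\lesssim\sqrt m$: one must use the inequality $\norm{\vh}\le 2\normf{\vh\vh^*+\vh\vx^*+\vx\vh^*}$ built into $M_{d,s}$ to keep $(t^2+2t)/N$ bounded, both when $t=\norm{\vh}$ is very small (where $N\ge\norm{\vh}/2$ alone is useless and one needs that $\norm{\vx}$ cannot then be small) and when $t$ is large. Everything else is a standard net‑plus‑Bernstein estimate for a sub‑exponential empirical process over sparse supports (part (i)) and a direct reuse of the rank‑$\le2$ Paley--Zygmund bound already established for Theorem~\ref{le:ORIP}.
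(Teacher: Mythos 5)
Your overall architecture is the same as the paper's: reduce (i) to $S_{d,s}\times S_{d,s}$ via Lemma~\ref{le:relaks} and a net-plus-Bernstein bound, then feed (i) into Mendelson's small-ball method with the rank-$2$ Paley--Zygmund estimate for (ii). Your part (i), which takes a union bound over supports of size at most $2s$ and a $1/4$-net inside each support, is a correct variant of the paper's route (an $\epsilon$-net over $S_{d,s}\times S_{d,s}$ combined with the spectral-norm bound of Lemma~\ref{le:sparsedual}), and your handling of the expectation defining $W_m$ off the event of (i) is, if anything, more careful than the paper's. The problem sits exactly in the step you single out as the crux: the bound on $(t^2+2t)/N$ with $t=\norm{\vh}$ and $N=\normf{\vh\vh^*+\vh\vx^*+\vx\vh^*}$.

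You cannot ``choose the global phase so that $\Re\langle\hat{\vh},\vx\rangle=\abs{\langle\hat{\vh},\vx\rangle}\ge 0$'': replacing $\vh$ by $e^{i\theta}\vh$ changes the cross term $\vh\vx^*+\vx\vh^*$ and hence the matrix $H$ itself, so the relative phase of $\langle\hat{\vh},\vx\rangle$ is part of the data of an element of $M_{d,s}$ and its real part may be negative. The correct expansion is $N^2=t^4+4t^3\Re c+2t^2\norm{\vx}^2+2t^2\Re(c^2)$ with $c=\langle\hat{\vh},\vx\rangle$, and when $\Re c<0$ the cross terms are negative, so your claimed lower bound $N\ge t\sqrt{t^2+2\norm{\vx}^2}$ is false. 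Concretely, take $\hat{\vh}=e_1$, $\vx=-e_1$, $t=1$: then $\vh\vh^*+\vh\vx^*+\vx\vh^*=-e_1e_1^*$, so $N=1$ and the membership constraint $t\le 2N$ holds, yet $t\sqrt{t^2+2\norm{\vx}^2}=\sqrt{3}>1$. The conclusion you need, $(t^2+2t)/N\le C$ for an absolute constant, is nevertheless true, and the paper obtains it without any expansion: by the triangle inequality $N\ge\normf{\vh\vh^*}-\normf{\vh\vx^*+\vx\vh^*}\ge t^2-2t$ (using $\norm{\vx}\le 1$), which combined with the defining constraint $t\le 2N$ gives $t^2\le N+2t\le 5N$, hence $(t^2+2t)/N\le 5+4=9$. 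If you replace your faulty ratio bound by this argument, the rest of your proof of (ii) goes through as written.
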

\begin{proof}
We first prove (i).  According to Lemma \ref{le:relaks} we have
\[
\sup_{\vu,\vv \in K_{d,s}}\Big|  \sum_{j=1}^m \epsilon_j \va_j^* \vu \vv^* \va_j \Big|\lesssim \sup_{\vu,\vv \in S_{d,s}} \Big| \sum_{j=1}^m \epsilon_j \va_j^* \vu \vv^* \va_j \Big |.
\]
It suffices  to present an upper bound for  $\sup_{\vu,\vv \in S_{d,s}} \big|
\sum_{j=1}^m \epsilon_j \va_j^* \vu \vv^* \va_j\big |$. For any fixed $\vu_0,\vv_0\in
S_{d,s}$, the terms $ \epsilon_j \va_j^* \vu_0 \vv_0^* \va_j$ are independent
centered sub-exponential random random variables with maximum sub-exponential norm
$C$, where $C$ is a constant. The Bernstein's inequality gives
\begin{equation}\label{eq:etaaxsparse}
\PP\dkh{ \Big| \sum_{j=1}^m \epsilon_j \va_j^* \vu_0 \vv_0^* \va_j \Big |\ge c_1 m} \le 2\exp\xkh{-c' \min \Big( \frac{c_1^2 m^2}{C^2m}, \frac{c_1 m}{ C}\Big)} \le 2\exp(-c m)
\end{equation}
for some sufficiently small constants $c, c_1,c'>0$. Suppose that $\mathcal{N}$ is an
$\epsilon$-net of $S_{d,s}\times S_{d,s}$. Hence, for any $\vu,\vv \in S_{d,s}$,
there exist $\vu_0,\vv_0 \in \mathcal{N}$ satisfying $\norm{\vu-\vu_0}\le \epsilon$
and $\norm{\vv-\vv_0}\le \epsilon$. Note that  the matrix $\vu \vv^*-\vu_0 \vv_0^*$
has at most $2s$ nonzero columns and $2s$ nonzero rows because of $
\vu,\vv,\vu_0,\vv_0 \in S_{d,s}$. Using Lemma \ref{le:sparsedual}, we  obtain that if
$m\gtrsim 2s $ then, with probability at least $1-2\exp(-c m)$, it holds
\begin{equation}\label{eq:epsnetspars}
\begin{aligned}
&\abs{\Big|  \sum_{j=1}^m \epsilon_j \va_j^* \vu \vv^* \va_j\Big|-\Big| \sum_{j=1}^m \epsilon_j \va_j^* \vu_0 \vv_0^* \va_j\Big| }
\le  \Big| \sum_{j=1}^m \epsilon_j \va_j^*(\vu \vv^*-\vu_0 \vv_0^*)\va_j \Big| \\
& \le \left|\nj{\vu \vv^*-\vu_0 \vv_0^* , \sum_{j=1}^m \epsilon_j \va_j\va_j^*}\right|
\leq \sqrt{2}\norm{\sum_{j=1}^m \epsilon_j \va_j\va_j^*} \normf{\vu \vv^*-\vu_0 \vv_0^*} \\
&\lesssim  m \normf{\vu \vv^*-\vu_0 \vv_0^*}
\lesssim m \epsilon,
\end{aligned}
\end{equation}
 where we use  $ \nj{A,B} \le \sqrt{r} \norm{B}\normf{A}$ for
  any Hermitian matrices $A,B$ with $\rank(A)\le r$.
 Note that the covering number $\# \mathcal{N} \le \exp(Cs\log(ed/s)/\epsilon^2)$.  Choosing a
 sufficiently small constant $\epsilon>0$ and
taking the union bound over $\mathcal{N}$,  we obtain that if $m\gtrsim s \log
(ed/s)$  then with probability at least $1-4\exp(-c m)$, it holds
\[
\sup_{\vu,\vv \in S_{d,s}} \Big|  \sum_{j=1}^m \epsilon_j \va_j^* \vu \vv^* \va_j\Big| \le c_0 m
\]
 for some sufficiently small positive constant $c_0$. Here, we use
(\ref{eq:etaaxsparse}) and (\ref{eq:epsnetspars}). This completes the proof of (i).

We next turn to prove (ii). Let
\begin{equation*}
\begin{aligned}
Q_\xi&:=\inf_{H\in M_{d,s}} \PP\dkh{\abs{\va_j^*H\va_j}\ge \xi}\\
W_m&:= \E\sup_{H\in M_{d,s}} \nj{H,A} \quad \mbox{where} \quad A:=\frac{1}{\sqrt{m}} \sum_{j=1}^m \epsilon_j \va_j\va_j^*.
\end{aligned}
\end{equation*}
Here, the $\epsilon_1,\ldots,\epsilon_m$ are independent Rademacher random variables.
Then Lemma \ref{le:lower} shows that, with probability at least $1-\exp(-t^2/2)$, it holds
\begin{equation}\label{eq:lowboundhsparse}
\inf_{H\in  M_{d,s}} \xkh{\sum_{j=1}^m (\va_j^* H \va_j)^2}^{1/2} \ge \xi\sqrt{m}Q_{2\xi}-2W_m-\xi t
\end{equation}
for any $\xi>0$ and $t>0$. Taking $\xi=\sqrt{2}/4$, we can employ the method in  the
proof of  Theorem   \ref{le:ORIP} to obtain
\begin{equation}\label{eq:qxisparse}
Q_{1/\sqrt{2}} \ge \frac{1}{52}.
\end{equation}
We next proceed to obtain an upper bound for $W_m$.  According to the definition of
$M_{d,s}$,  the matrix $H\in M_{d,s}$ is in the form of
\[
H=\frac{\vh\vh^*+\vh\vx^*+\vx\vh^*}{\normf{\vh\vh^*+\vh\vx^*+\vx\vh^*}}
\]
 with $\vh/\norm{\vh}\in K_{d,s}$ and $\vx\in K_{d,s}$. Recall  that $A=\frac{1}{\sqrt{m}} \sum_{j=1}^m \epsilon_j \va_j\va_j^*$. This immediately leads to
\begin{eqnarray*}
 \nj{H,A} &= & \frac{1}{\normf{\vh\vh^*+\vh\vx^*+\vx\vh^*}} \cdot \frac{1}{\sqrt{m}} \sum_{j=1}^m \xkh{\epsilon_j \va_j^* \vh \vh^* \va_j  + \epsilon_j \va_j^* \vh \vx^* \va_j +  \epsilon_j \va_j^* \vx \vh^* \va_j}.
\end{eqnarray*}
According to the result (i),  there exists a sufficiently small constant $c_0>0$ such
that the following holds with probability at least $1-4\exp(-c m)$
\begin{equation} \label{eq:HAindetmine}
 \nj{H,A}\le  c_0 \sqrt{m} \cdot \frac{\norm{\vh}^2+2\norm{\vh}}{\normf{\vh\vh^*+\vh\vx^*+\vx\vh^*}},
\end{equation}
provided $m\gtrsim s \log (ed/s)$.  On the other hand, $H\in M_{d,s}$ implies
  \begin{equation} \label{eq:h2norm}
   \norm{\vh}\le 2\normf{\vh\vh^*+\vh\vx^*+\vx\vh^*}.
   \end{equation}
  We next show $ \norm{\vh}^2\le 5\normf{\vh\vh^*+\vh\vx^*+\vx\vh^*}$. Indeed,  by triangle inequality, we have
 \begin{equation}\label{eq:h2norm1}
\normf{\vh\vh^*+\vh\vx^*+\vx\vh^*} \ge \normf{\vh\vh^*}-\normf{\vh\vx^*+\vx\vh^*} \ge \norm{\vh}^2-2\norm{\vh}.
\end{equation}
 Combining (\ref{eq:h2norm}) and (\ref{eq:h2norm1}), we have
\begin{equation} \label{eq:hsquestim}
\norm{\vh}^2 \le \normf{\vh\vh^*+\vh\vx^*+\vx\vh^*}+2\norm{\vh} \le 5\normf{\vh\vh^*+\vh\vx^*+\vx\vh^*}.
\end{equation}
Putting (\ref{eq:h2norm}) and (\ref{eq:hsquestim}) into (\ref{eq:HAindetmine}), we
obtain that
\begin{equation} \label{eq:wmsparse}
W_m= \E\sup_{H\in M_{d,s}} \nj{H,A} \le 9c_0 \sqrt{m}.
\end{equation}
 Choosing $t=c\sqrt{2m}$ for a sufficiently small positive constant $c$ and putting
(\ref{eq:qxisparse}) and (\ref{eq:wmsparse}) into (\ref{eq:lowboundhsparse}), we
arrive at the conclusion.
\end{proof}

Based on the above lemmas, we are now ready to present the proof of Theorem \ref{result sparse}.

{\noindent\it Proof of Theorem \ref{result sparse}.}~ Without loss of generality, we
assume $\norm{\vx_0}=1$ (the general case can be obtained via a simple rescaling) and
$\nj{\x,\vx_0}\ge 0$ (Otherwise, we can choose $e^{i\theta} \vx_0$ for an appropriate
$\theta$).   Set $\vh:=\x-\vx_0$. We first show that $\normone{\vh} \le
2\sqrt{s}\norm{\vh}$. Indeed, let $S:=\supp(\vx_0)$. Then we have
\begin{equation*}
  \normone{\x}=\normone{\vx_0+\vh}=\normone{\vx_0+\vh_{S}}+\normone{\vh_{S^c}}\ge \normone{\vx_0}-\normone{\vh_{S}}+\normone{\vh_{S^c}}.
\end{equation*}
Here $\vh_{S}$ denotes the restriction of the vector $\vh$ onto the set of
coordinates $S$. Then the constrain condition $\normone{\x}\le R:=\normone{\vx_0}$
implies that $\normone{\vh_{S^c}}\le \normone{\vh_{S}}$. Using H\"older inequality,
we have
\begin{equation*}
  \normone{\vh}\,=\,\normone{\vh_{S}}+\normone{\vh_{S^c}}\, \le\, 2\normone{\vh_{S}}\,\le\, 2\sqrt{s}\norm{\vh}.
\end{equation*}
Set $H=\x\x^*- \vx_0 \vx_0^*$. It is straightforward to check that
\[
H=\vh\vh^*+\vh\vx_0^*+\vx_0\vh^*.
\]
From the claim (\ref{cla:uv}), we know $\norm{\vh}\le 2\normf{H}$.  Recall that
$\normone{\vh} \le 2\sqrt{s}\norm{\vh}$  and $\vx_0\in K_{d,s}$. It implies that
$H/\normf{H} \in M_{d,4s}$, where the set $M_{d,s}$ is defined in Lemma
\ref{le:cruciallem}.

 Since $\x$ is the global solution to (\ref{quartic model sparse}) and $\vx_0$ is a
feasible point, we have
\[
 \norm{ \A(\x\x^*)  -\vb} \le \norm{ \A(\vx_0\vx_0^*)  -\vb}
\]
which implies
\begin{equation} \label{eq:keypointsparse}
\norm{\A(H)-\eta}\le \norm{\eta}.
\end{equation}
Noting that $H/\normf{H} \in M_{d,4s}$, by  Lemma \ref{le:cruciallem},  we obtain
\begin{equation}\label{eq:lowressparse}
\sqrt{m}\normf{H} \lesssim \norm{\A(H)} \le \norm{\A(H)-\eta}+ \norm{\eta}\le  2\norm{\eta}
\end{equation}
with probability at least $1-4\exp(-c m)$, provided $m\gtrsim s \log (ed/s)$.
Thus,  (\ref{eq:lowressparse}) gives
\begin{equation} \label{eq:sparseH}
\normf{H}=\normf{\x\x^*- \vx_0 \vx_0^*}\lesssim \frac{\norm{\eta}}{\sqrt{m}},
\end{equation}
which implies
\[
 \min_{\theta\in [0,2\pi]} \norm{\x-e^{i\theta}\vx_0} \le \frac{2}{\norm{\vx_0}}\normf{\x\x^*-\vx_0\vx_0^*}\lesssim   \frac{ \norm{\eta}}{\norm{\vx_0}\sqrt{m}}.
\]
Here, we use  (\ref{cla:uv}).
 Based on (\ref{eq:sparseH}), similar to the proof of Theorem
\ref{result1}, we have
\[
 \min_{\theta\in [0,2\pi)} \norm{\x-e^{i\theta}\vx_0} \lesssim
  \|\vx_0\|_2+\frac{\sqrt{\|\eta\|_2}}{m^{1/4}}.
\]
It means that
\[
\min_{\theta\in [0,2\pi)} \norm{\x-e^{i\theta}\vx_0}\lesssim \min
\left\{ \|\vx_0\|_2+\frac{\sqrt{\|\eta\|_2}}{m^{1/4}}, \frac{ \norm{\eta}}{\norm{\vx_0}\sqrt{m}}\right\}.
\]
Finally, note that if $\|\vx_0\|_2\geq \frac{\sqrt{5}-1}{2}\cdot \frac{\sqrt{\|\eta\|_2}}{m^{1/4}}$ then
\[
\min\left\{\|\vx_0\|_2+\frac{\sqrt{\|\eta\|_2}}{{m}^{1/4}},
\frac{\|\eta\|_2}{\norm{\vx_0}\cdot \sqrt{m}}\right\}=\frac{\|\eta\|_2}{\norm{\vx_0}\cdot \sqrt{m}}
\]
and if $\|\vx_0\|_2< \frac{\sqrt{5}-1}{2}\cdot \frac{\sqrt{\|\eta\|_2}}{m^{1/4}}$ then
\[
\min\left\{\|\vx_0\|_2+\frac{\sqrt{\|\eta\|_2}}{{m}^{1/4}},
\frac{\|\eta\|_2}{\norm{\vx_0}\cdot
\sqrt{m}}\right\}=\|\vx_0\|_2+\frac{\sqrt{\|\eta\|_2}}{{m}^{1/4}}\leq
\frac{\sqrt{5}+1}{2}\cdot \frac{\sqrt{\|\eta\|_2}}{{m}^{1/4}}.
\]
We obtain the conclusion that
\[
\min_{\theta\in [0,2\pi)} \norm{\x-e^{i\theta}\vx_0}\lesssim \min
\left\{ \|\vx_0\|_2+\frac{\sqrt{\|\eta\|_2}}{m^{1/4}}, \frac{ \norm{\eta}}{\norm{\vx_0}\sqrt{m}}\right\}
\lesssim \min
\left\{ \frac{\sqrt{\|\eta\|_2}}{m^{1/4}}, \frac{ \norm{\eta}}{\norm{\vx_0}\sqrt{m}}\right\}.
\]
\qed

\section{Discussions}
This paper considers the  performance of the intensity-based estimators for phase
retrieval and its sparse version. The upper and lower bounds are obtained under
complex Gaussian random measurements.


There are some interesting problems for future research.
First, in the presence of noises, many numerical experiments show that gradient descent algorithms can solve
estimators (\ref{eq:mod1}) and (\ref{eq:mod2sparse}), however, it is of practical
interest to provide some theoretical guarantees for it.
 Second, a more practical scenario is the case where the measurements are Fourier
vectors. Since there is much less or even no randomness to be exploited  in this
scenario, we conjecture the estimation error would be no less than the lower bound
given in this paper, namely, $O(\norm{\eta}/\sqrt{m})$. To establish the precise
upper and lower bounds for Fourier measurements  is the future work.

\vspace{0.5cm}


%
%

 \appendix
 \renewcommand{\appendixname}{Appendix~\Alph{section}}

\section{Proof of Lemma \ref{le:upow4}}

The goal of this section is to prove Lemma \ref{le:upow4}. Before continuing, we introduce some lemmas.
The following result is a complex version of Lemma 5.8 in \cite{TWF}
and the proof is the same as that of Lemma 5.8 in \cite{TWF}.

\begin{lemma}  \label{le:indi}
Let $\va_j \in \C^d, j=1,\ldots,m,$ be i.i.d complex Gaussian random vectors. For any
$\epsilon>0$, there exist some universal constants $c_0,c_1,C>0$ such that
\[
\frac 1m \sum_{j=1}^m \1_{\dkh{\abs{\va_j^* \vz} \ge \gamma \norms{\vz}}} \le \frac{1}{0.49\gamma} \exp(-0.485 \gamma^2) +\frac{\epsilon}{\gamma^2} \quad \text{ for all } \; \vz \in \C^d\; \backslash \dkh{\bm{0}},  \gamma\geq 2
\]
holds with probability at least $1-C\exp(-c_0 \epsilon^2 m)$, provided $m\ge
c_1\epsilon^{-2} \log \epsilon^{-1}d$.

\end{lemma}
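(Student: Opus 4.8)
The plan is to follow the argument of \cite[Lemma~5.8]{TWF}, adapting it to complex Gaussian measurement vectors. By homogeneity we may take $\norm{\vz}=1$; then for a fixed unit $\vz$ the scalar $\va_j^*\vz$ is a standard complex Gaussian, so $R_j:=\abs{\va_j^*\vz}^2\sim\mathrm{Exp}(1)$ and $\PP\dkh{\abs{\va_j^*\vz}\ge\gamma}=e^{-\gamma^2}$. Since $e^{-\gamma^2}\le\frac{1}{0.49\gamma}e^{-0.485\gamma^2}$ for every $\gamma\ge2$, it suffices to control the fluctuation, i.e.\ to show that with probability $\ge1-C\exp(-c_0\epsilon^2m)$ one has
\[
\frac1m\sum_{j=1}^m\1_{\dkh{\abs{\va_j^*\vz}\ge\gamma}}\ \le\ e^{-\gamma^2}+\frac{\epsilon}{\gamma^2}\qquad\text{for all unit }\vz,\ \gamma\ge 2.
\]
Fix a threshold $\beta=\beta(\epsilon)\ge2$ with $(\beta^2+1)e^{-\beta^2}\le\epsilon/2$, so $\beta\asymp\sqrt{\log(1/\epsilon)}$; this truncation level is what produces the factor $\log\epsilon^{-1}$ in the sample complexity. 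We treat $\gamma>\beta$ and $2\le\gamma\le\beta$ separately.

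For $\gamma>\beta$ the pointwise inequality $\1_{\dkh{\abs{\va_j^*\vz}\ge\gamma}}\le\gamma^{-2}\abs{\va_j^*\vz}^2\1_{\dkh{\abs{\va_j^*\vz}\ge\beta}}$ reduces the claim to the uniform bound $\sup_{\norm{\vz}=1}\frac1m\sum_j\abs{\va_j^*\vz}^2\1_{\dkh{\abs{\va_j^*\vz}\ge\beta}}\le\epsilon$. Each summand has mean $\E\zkh{R\,\1_{\dkh{R\ge\beta^2}}}=(\beta^2+1)e^{-\beta^2}\le\epsilon/2$ and is sub-exponential with an absolute parameter, so for a fixed $\vz$ Bernstein's inequality \cite[Theorem~2.8.1]{Vershynin2018} gives concentration within $\epsilon/2$ off an event of probability $\le\exp(-cm\epsilon^2)$; a union bound over an $\epsilon$-net of the sphere, with the discontinuity of the indicator handled as below, upgrades this to the uniform estimate. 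This settles $\gamma>\beta$ once $m\gtrsim\epsilon^{-2}d$.

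For $2\le\gamma\le\beta$ one runs a net-and-grid argument: place a fine grid $\gamma_k$ on $[2,\beta]$ and a $\delta$-net $\{\vz_\ell\}$ on the unit sphere. For each pair $(\vz_\ell,\gamma_k)$ the indicators $\1_{\dkh{\abs{\va_j^*\vz_\ell}\ge\gamma_k}}$ are i.i.d.\ Bernoulli with parameter $e^{-\gamma_k^2}$, so Bernstein's inequality gives $\frac1m\sum_j\1_{\dkh{\abs{\va_j^*\vz_\ell}\ge\gamma_k}}\le e^{-\gamma_k^2}+\tfrac13\epsilon\gamma_k^{-2}$ off an event of probability $\exp(-\Omega(m\epsilon^2))$. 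A general pair $(\vz,\gamma)$ is then reduced to a grid point: monotonicity of $\gamma\mapsto\frac1m\sum_j\1_{\dkh{\abs{\va_j^*\vz}\ge\gamma}}$ handles the $\gamma$-direction, while on the event $\dkh{\max_j\norm{\va_j}\lesssim\sqrt d}$ the bound $\abs{\va_j^*\vz}\ge\abs{\va_j^*\vz_\ell}-\norm{\va_j}\,\delta$ replaces $\vz$ by the nearest net point at the cost of shifting the threshold by $O(\sqrt d\,\delta)$; choosing $\delta$ (and the grid mesh) polynomially small makes this shift inflate $e^{-\gamma^2}$ only by a factor $1+O(\epsilon)$ on the range $\gamma\le\beta$, which is absorbed into the constants $0.49$, $0.485$ and the budget $\epsilon/\gamma^2$. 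Since the sphere-net has cardinality $e^{O(d\log(1/\delta))}$ and the family of sets $\dkh{\va:\abs{\va^*\vz}\ge\gamma}$ has VC dimension $O(d)$ (a VC-type uniform deviation inequality may be substituted to avoid spurious logarithms), the hypothesis $m\gtrsim\epsilon^{-2}\log(\epsilon^{-1})d$ makes the union bound leave the advertised failure probability.

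The main obstacle is precisely that $\vz\mapsto\1_{\dkh{\abs{\va_j^*\vz}\ge\gamma}}$ is discontinuous, so a covering argument cannot be applied verbatim; the remedy is to condition on $\dkh{\max_j\norm{\va_j}\lesssim\sqrt d}$ (which holds with probability $1-m\,e^{-cd}$) and to interpolate with a piecewise-linear Lipschitz surrogate sandwiched between $\1_{\dkh{\abs{\va_j^*\vz}\ge\gamma-\rho}}$ and $\1_{\dkh{\abs{\va_j^*\vz}\ge\gamma}}$, the cushion $\rho\asymp\sqrt d\,\delta$ being chosen small enough that $e^{-(\gamma-\rho)^2}\le(1+O(\epsilon))e^{-\gamma^2}$ throughout $\gamma\le\beta$. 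The other delicate point is the bookkeeping: balancing $\beta$, the deviation budget $\epsilon/\gamma^2$, the mesh sizes, and the Bernstein exponents so as to land exactly on $m\gtrsim\epsilon^{-2}\log(\epsilon^{-1})d$ with failure probability $\exp(-\Omega(\epsilon^2m))$.
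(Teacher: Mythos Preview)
Your approach is exactly what the paper does: it states that Lemma~\ref{le:indi} is the complex version of \cite[Lemma~5.8]{TWF} and that ``the proof is the same as that of Lemma 5.8 in \cite{TWF}'', giving no separate argument. Your sketch correctly identifies the one genuine change in passing to the complex case (namely that $\abs{\va_j^*\vz}^2\sim\mathrm{Exp}(1)$, so the pointwise tail $e^{-\gamma^2}$ already sits below $\tfrac{1}{0.49\gamma}e^{-0.485\gamma^2}$ for $\gamma\ge2$), and otherwise follows the TWF template of truncation at level $\beta\asymp\sqrt{\log(1/\epsilon)}$, Bernstein plus a net for small $\gamma$, and the Markov-type domination $\1_{\{\abs{\va_j^*\vz}\ge\gamma\}}\le\gamma^{-2}\abs{\va_j^*\vz}^2\1_{\{\abs{\va_j^*\vz}\ge\beta\}}$ for large $\gamma$.

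One small caution on the bookkeeping: the event $\{\max_j\norm{\va_j}\lesssim\sqrt d\}$ fails with probability $m\,e^{-cd}$, which is not automatically dominated by $Ce^{-c_0\epsilon^2 m}$ for all admissible $m$ and $\epsilon$; you should either absorb this into a possibly larger constant $C$ (as TWF effectively does) or avoid the conditioning altogether by working directly with a Lipschitz majorant of the indicator and a standard $\epsilon$-net argument on the sphere.
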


\begin{lemma} \label{le:trouble}
Suppose that $\va_j \in \C^d,j=1,\ldots,m $, are i.i.d. complex Gaussian random vectors. For
any $\epsilon \in (0,1)$, if $m\ge c(\epsilon) d \log d$ then the following holds
with probability at least $1-c'_a m^{-1}-c'_b \exp(-c'_c(\epsilon) m /\log m)$:
\[
\frac 1m \sum_{j=1}^m |\va_j^* \vz|^2 \Re(\vx_0^* \va_j \va_j^* \vz) \le 2\Re(\vx_0^* \vz) + \epsilon+2 \epsilon \xkh{\frac 1m \sum_{j=1}^m \abs{\va_j^* \vz}^4}^{\frac 34} \quad \mbox{for all} \; \vz\in \mathbb{S}_{\C}^{d-1},
\]
 where  $~\mathbb{S}_{\C}^{d-1}:=\{\vx\in \C^d: \|\vx\|=1\}$, $c'_a$, $c'_b$
 are positive absolute constants and $c(\epsilon)$, $c'_c(\epsilon)$ are positive constants
 depending on $\epsilon$.
\end{lemma}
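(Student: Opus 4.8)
The plan is to adapt the truncation-plus-discretization argument of Chen--Cand\`es (the real-valued analogue is \cite[Lemma~5.8]{TWF}) to the complex Gaussian setting. Write $L(\vz):=\frac1m\sum_{j=1}^m|\va_j^*\vz|^2\Re(\vx_0^*\va_j\va_j^*\vz)$. For $\vz\in\mathbb S_\C^{d-1}$ one has $\E L(\vz)=\Re\bigl(\vx_0^*\,\E[|\va^*\vz|^2\va\va^*]\,\vz\bigr)=\Re\bigl(\vx_0^*(\vz\vz^*+\|\vz\|^2 I)\vz\bigr)=2\Re(\vx_0^*\vz)$, using the same moment identity that underlies Lemma~\ref{le:sunju}. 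So the statement is a one-sided uniform deviation bound for $L(\vz)-\E L(\vz)$, and the term $2\epsilon\bigl(\frac1m\sum_j|\va_j^*\vz|^4\bigr)^{3/4}$ is precisely there to absorb the contribution of the atypically large summands.

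Fix a level $\gamma=\gamma(\epsilon)\ge 2$, to be taken large, and split $L=L_{\mathrm{bulk}}+L_{\mathrm{tail}}$, where $L_{\mathrm{bulk}}$ keeps the indices with $|\va_j^*\vz|\le\gamma$ and $|\vx_0^*\va_j|\le\gamma$, and $L_{\mathrm{tail}}$ the rest. For the tail I would bound $|L_{\mathrm{tail}}(\vz)|\le\frac1m\sum_{j\in T(\vz)}|\va_j^*\vz|^3|\vx_0^*\va_j|$ with $T(\vz)=\{j:|\va_j^*\vz|>\gamma\}\cup\{j:|\vx_0^*\va_j|>\gamma\}$, and apply H\"older with exponents $(8,\tfrac43,8)$ to the three factors $\1_{T(\vz)}$, $|\va_j^*\vz|^3$, $|\vx_0^*\va_j|$:
\[
|L_{\mathrm{tail}}(\vz)|\le\Bigl(\tfrac{|T(\vz)|}{m}\Bigr)^{1/8}\Bigl(\tfrac1m\sum_j|\va_j^*\vz|^4\Bigr)^{3/4}\Bigl(\tfrac1m\sum_j|\vx_0^*\va_j|^8\Bigr)^{1/8}.
\]
Now $|T(\vz)|/m$ is bounded \emph{uniformly in $\vz$} by a quantity $\delta(\gamma)$ that can be made arbitrarily small by enlarging $\gamma$: the first set is controlled by Lemma~\ref{le:indi} (with its free parameter chosen as a function of $\epsilon$), the second is a fixed random set handled by Bernstein's inequality for indicators. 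For the \emph{fixed} vector $\vx_0$, $\frac1m\sum_j|\vx_0^*\va_j|^8\le C_0$ with probability $\ge1-m^{-1}$ (peel off the event $\max_j|\vx_0^*\va_j|^2\gtrsim\log m$, which has probability $\lesssim m^{-1}$, and Bernstein on the remaining bounded summands). Choosing $\gamma=\gamma(\epsilon)$ so that $C_0^{1/8}\delta(\gamma)^{1/8}\le\epsilon$ gives $|L_{\mathrm{tail}}(\vz)|\le\epsilon\bigl(\frac1m\sum_j|\va_j^*\vz|^4\bigr)^{3/4}$ for all $\vz$; this accounts for the $m^{-1}$ term and part of the $2\epsilon(\cdots)^{3/4}$ term in the conclusion.

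For the bulk term the summands are bounded by $\gamma^4$, so for each fixed $\vz$ Bernstein's inequality yields $L_{\mathrm{bulk}}(\vz)\le\E L_{\mathrm{bulk}}(\vz)+\epsilon/4$ with probability $1-2\exp(-cm\epsilon^2/\gamma^8)$; moreover $\E L_{\mathrm{bulk}}(\vz)=2\Re(\vx_0^*\vz)-\E\bigl[|\va_j^*\vz|^2\Re(\vx_0^*\va_j\va_j^*\vz)\1_{T(\vz)}\bigr]$, and the removed expectation is $\le\bigl(\E|\va_j^*\vz|^6|\vx_0^*\va_j|^2\bigr)^{1/2}\PP(T(\vz))^{1/2}\le\epsilon/4$ for $\gamma$ large, by Cauchy--Schwarz and Gaussian tail bounds (the law only depends on $\langle\vz,\vx_0\rangle$, which is bounded). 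Taking a $\rho$-net $\mathcal N$ of $\mathbb S_\C^{d-1}$ with $\rho$ polynomially small in $m$, so $\log|\mathcal N|\lesssim d\log m$, and a union bound gives $L_{\mathrm{bulk}}(\vz')\le 2\Re(\vx_0^*\vz')+\epsilon/2$ for every $\vz'\in\mathcal N$ with probability $1-\exp(-c'm\epsilon^2/\gamma^8+Cd\log m)$, which is of the form $1-\exp(-c'_c(\epsilon)m/\log m)$ once $m\ge c(\epsilon)d\log d$.

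The main obstacle is the final step: passing from the net to all of $\mathbb S_\C^{d-1}$. Because $L$ involves the fourth power of the $\va_j$ it is not uniformly Lipschitz, and the hard-truncation sets $T(\vz)$ jump as $\vz$ varies. I would handle this exactly as in \cite{TWF}: write $|\va_j^*\vz|^2-|\va_j^*\vz'|^2=\Re\bigl((\va_j^*(\vz-\vz'))\overline{\va_j^*(\vz+\vz')}\bigr)$ and use $\|\vz-\vz'\|\le\rho$ together with a crude bound such as $\max_j\|\va_j\|^2\lesssim d\log m$ and the count bound of Lemma~\ref{le:indi} (to limit how many indices can have $|\va_j^*\vz|$ near the threshold $\gamma$), so that the discretization error is at most $\epsilon/4+\epsilon\bigl(\frac1m\sum_j|\va_j^*\vz|^4\bigr)^{3/4}$; absorbing it into the $+\epsilon$ and the $2\epsilon(\cdots)^{3/4}$ slots of the right-hand side finishes the proof. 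Intersecting the finitely many events above (the fixed-$\vx_0$ $8$th-moment event, the Lemma~\ref{le:indi} event, the net Bernstein events, and the auxiliary $\max_j\|\va_j\|^2$ event) yields the stated probability $1-c'_am^{-1}-c'_b\exp(-c'_c(\epsilon)m/\log m)$.
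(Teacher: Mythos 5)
Your proposal is correct in outline and its skeleton coincides with the paper's: split the sum into a truncated bulk plus a tail, bound the tail by H\"older with exponents $(8,\tfrac43,8)$ into $\bigl(\tfrac{|T(\vz)|}{m}\bigr)^{1/8}\bigl(\tfrac1m\sum_j|\va_j^*\vz|^4\bigr)^{3/4}\bigl(\tfrac1m\sum_j|\vx_0^*\va_j|^8\bigr)^{1/8}$, control the indicator fraction uniformly by Lemma~\ref{le:indi} and the fixed-vector eighth moment with probability $1-O(m^{-1})$ (this is exactly where the paper's $m^{-1}$ term comes from as well), and finish the bulk by pointwise concentration plus a net. The one genuine divergence is in how the bulk is made uniform over the sphere. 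The paper truncates with a \emph{smooth} cutoff $\phi\in C_c^\infty$, so that $\tau\mapsto\tau^2\phi(\tau/\beta)$ is bounded and Lipschitz; the resulting bulk is then Lipschitz in $\vz$ with a constant controlled through Lemma~\ref{le:sunju} and the bound $\|A\|/\sqrt{m}\le\sqrt2$, which lets a net of \emph{constant} resolution $\delta=\epsilon/(15\beta^2)$ suffice and makes the discretization error a clean $5\beta^2\delta$. You instead use a hard truncation, which forces a net of resolution polynomially small in $m$ (so $\log|\mathcal N|\asymp d\log m$), an auxiliary bound on $\max_j\|\va_j\|^2$, and a separate argument for indices whose truncation status switches between a point and its net neighbor. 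Your fix for the switchers is sound — any switching index must satisfy $|\va_j^*\vz|\ge\gamma-1\ge2$, so Lemma~\ref{le:indi} bounds their fraction uniformly, and their contribution can be absorbed into the $2\epsilon\bigl(\tfrac1m\sum_j|\va_j^*\vz|^4\bigr)^{3/4}$ slot (or crudely by $\gamma^4$ times the fraction, provided the free parameter $\epsilon'$ in Lemma~\ref{le:indi} is taken of size $\epsilon/\gamma^2$, which only affects the constant $c(\epsilon)$ in the sample complexity). Note also that since $\vx_0$ is fixed, your extra truncation on $|\vx_0^*\va_j|$ creates no $\vz$-dependent switching, so nothing is lost there; the paper avoids that truncation altogether by treating the bulk summands as sub-Gaussian of norm $O(\beta^3)$ and using Hoeffding rather than bounded-Bernstein. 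In short, both routes work; the paper's smooth cutoff buys a simpler continuity step, while your hard truncation is more elementary pointwise but shifts the burden onto the net argument, which you have correctly identified and plausibly resolved.
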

\begin{proof}
Suppose that $\phi\in C_c^{\infty}(\mathbb R)$ is a Lipschitz continuous function
satisfying $0\le \phi(x)\le 1$ for all $x\in \mathbb R$. We furthermore require
$\phi(x)=1$ for $|x|\le 1$ and $\phi(x)=0$ for $|x|\ge 2$. For any   $\beta>0$, we
have
\begin{equation} \label{eq1}
\frac 1m \sum_{j=1}^m |\va_j^* \vz|^2 \Re(\vx_0^* \va_j \va_j^* \vz)= T+r
\end{equation}
where
\begin{equation*}
\begin{aligned}
 T&:= \frac 1m \sum_{j=1}^m |\va_j^* \vz|^2 \Re(\vx_0^* \va_j \va_j^* \vz) \phi\xkh{\frac {|\va_j^* \vz|}
 {\beta}},\\
r&:=\frac 1m \sum_{j=1}^m |\va_j^* \vz|^2 \Re(\vx_0^* \va_j \va_j^* \vz)
 \xkh{1- \phi\xkh{\frac {|\va_j^* \vz|}{\beta}}} \le \frac 1m \sum_{j=1}^m \abs{\va_j^* \vz}^3  \abs{\va_j^* \vx_0} \1_{\dkh{|\va_j^* \vz| \ge \beta}}.
 \end{aligned}
\end{equation*}
We claim that  for any $0<\epsilon<1$ there exists a  sufficiently large $\beta>1$
such that   if $m\ge c(\epsilon) d \log d$ then the following holds with probability
at least $1-c'_a m^{-1}-c'_b \exp(-c'_c(\epsilon) m /\log m)$:
\begin{equation}\label{eq:claim0}
T\le 2\Re(\vx_0^* \vz) + \epsilon,\quad
r \le 2 \epsilon \xkh{\frac 1m \sum_{j=1}^m \abs{\va_j^* \vz}^4}^{\frac 34}\quad \mbox{for all} \; \vz\,\, \in \mathbb{S}_{\C}^{d-1}.
\end{equation}
 Here
 $c(\epsilon), c'_c(\epsilon)  $ are constants depending on $\epsilon$ and $c_a', c_b'$ are positive absolute constants.
Substituting (\ref{eq:claim0}) into \eqref{eq1}, we obtain the conclusion that
 with probability at least $1-c'_a m^{-1}-c'_b \exp(-c'_c(\epsilon) m /\log m)$, it holds
\[
\frac 1m \sum_{j=1}^m |\va_j^* \vz|^2 \Re(\vx_0^* \va_j \va_j^* \vz) \le 2\Re(\vx_0^* \vz) + \epsilon+2 \epsilon \xkh{\frac 1m \sum_{j=1}^m \abs{\va_j^* \vz}^4}^{\frac 34} \quad \mbox{for all} \quad \vz\in \mathbb{S}_{\C}^{d-1}
\]
provided $m\ge c(\epsilon) d \log d$.

It remains to prove (\ref{eq:claim0}). We first show $T\le 2\Re(\vx_0^* \vz) +
\epsilon$. Due to the cut-off $\phi\xkh{\frac {|\va_j^* \vz|}{\beta}}$, the terms
$|\va_j^* \vz|^2 \Re(\vx_0^* \va_j \va_j^* \vz) \phi\xkh{\frac {|\va_j^* \vz|}
{\beta}}$ are independent sub-gaussian random variables with the sub-gaussian norm
$O(\beta^3)$. According to Hoeffding's inequality, we obtain that the following holds
with probability at least $1-2\exp(-c(\beta)\epsilon^2 m)$
\begin{equation}\label{eq:gudvz0}
\frac 1m \sum_{j=1}^m |\va_j^* \vz|^2 \Re(\vx_0^* \va_j \va_j^* \vz)
\phi\xkh{\frac {|\va_j^* \vz|}{\beta}} \le
2\Re(\vx_0^* \vz)+\frac {\epsilon} 3,
\end{equation}
 where $c(\beta)>0$ is a constant depending on $\beta$.
 Here we use the fact 
 \[
 \E\xkh{|\va_j^* \vz|^2 \Re(\vx_0^* \va_j \va_j^* \vz)\phi\xkh{\frac {|\va_j^* \vz|} {\beta}}
 }\leq \E\xkh{|\va_j^* \vz|^2 \Re(\vx_0^* \va_j \va_j^* \vz)} + \frac {\epsilon}6=2\Re(\vx_0^* \vz)+\frac {\epsilon} 6
\]
for some sufficiently large $\beta$ depending only on $\epsilon$.
 We next show that (\ref{eq:gudvz0}) holds
for all unit vectors $\vz\in \C^d$, for which we adopt a basic version of a
$\delta$-net argument. We assume that $\mathcal{N}$ is a $\delta$-net of the unit
complex sphere in $\Cd$  and hence the covering number $\# \mathcal{N}\le
(1+\frac{2}{\delta})^{2d}$. For any $\vz'\in  \mathbb{S}_{\C}^{d-1}$, there exists a
$\vz \in \mathcal{N}$ such that $\norm{\vz'-\vz}\le \delta$. Noting $f(\tau):=\tau^2
\phi(\tau/\beta)$ is a bounded function with Lipschitz constant $O(\beta)$, we
obtain  that when $m\gtrsim d \log d$, with probability at least $1-c_a m^{-1}-c_b
\exp(-c_c m /\log m)$, it  holds
\begin{equation}\label{eq:gudvz1}
\begin{aligned}
&\Big| \frac 1m \sum_{j=1}^m |\va_j^* \vz'|^2 \Re(\vx_0^* \va_j \va_j^* \vz') \phi\xkh{\frac {|\va_j^* \vz'|}{\beta}} - \frac 1m \sum_{j=1}^m |\va_j^* \vz|^2 \Re(\vx_0^* \va_j \va_j^* \vz) \phi\xkh{\frac {|\va_j^* \vz|} {\beta}}  \Big| \\
& \le  \frac 1m \sum_{j=1}^m |\va_j^* \vz'|^2 \phi\xkh{\frac {|\va_j^* \vz'|} {\beta}}  \abs{\vx_0^* \va_j \va_j^* \vz' - \vx_0^* \va_j \va_j^* \vz}  \\
&\quad +  \frac 1m \sum_{j=1}^m |\va_j^* \vz| |\va_j^* \vx_0 | \Big | |\va_j^* \vz'|^2 \phi\xkh{\frac {|\va_j^* \vz'|} {\beta}} - |\va_j^* \vz|^2 \phi\xkh{\frac {|\va_j^* \vz|} {\beta}} \Big|  \\
&\lesssim   \frac {\beta^2} m \sum_{j=1}^m \abs{\va_j^* \vx_0} \abs{\va_j^* \vz'-\va_j^* \vz} +  \frac {\beta} m \sum_{j=1}^m  |\va_j^* \vz| \abs{\va_j^* \vx_0} \abs{\va_j^* \vz'-\va_j^* \vz}  \nonumber\\
&\le \frac {\beta^2} m \norms{A\vx_0} \norms{A(\vz'-\vz)} +\beta  \sqrt{\frac 1m \sum_{j=1}^m  |\va_j^* \vz|^2 \abs{\va_j^* \vx_0}^2 } \cdot \norms{\frac 1 {\sqrt{m}}A(\vz'-\vz)}    \nonumber \\
&\le 2 \beta^2 \norms{\vz'-\vz} +3 \beta \norms{\vz'-\vz}
\le 5 \beta^2 \delta ,
\end{aligned}
\end{equation}
where the fourth inequality follows from Lemma \ref{le:sunju}  and the fact that
$\frac 1 {\sqrt{m}} \norms{A} \le \sqrt{2}$ with probability at least $1-2\exp(-c m)$
provided $m\gtrsim d$. Here, the matrix $A:=[\va_1,\ldots,\va_m]^*$ and $c_a, c_b,
c_c$ are absolute constants. Taking $\delta=\epsilon/(15 \beta^2)$, we use
(\ref{eq:gudvz0}) and (\ref{eq:gudvz1}) to obtain that if $m\ge c(\epsilon,\beta) d
\log d$ then with probability at least $1-c_a m^{-1}-c'_b \exp(-c'(\beta) \epsilon^2
m /\log m)$ it holds
\[
\frac 1m \sum_{j=1}^m |\va_j^* \vz|^2 \Re(\vx_0^* \va_j \va_j^* \vz) \phi\xkh{\frac {|\va_j^* \vz|} {\beta}} \le
2\Re(\vx_0^* \vz)+\epsilon
\]
for all  $\vz \in \mathbb{S}_{\C}^{d-1}$, where $c(\epsilon,\beta)$ is a positive
constant depending on $\epsilon, \beta$ and $c'(\beta)$ is a positive constant
depending on $\beta$.

We next show that $r \le 2 \epsilon \xkh{\frac 1m \sum_{j=1}^m \abs{\va_j^*
\vz}^4}^{\frac 34}$.
 By Lemma
\ref{le:indi}, for any $\epsilon>0$ there exists a sufficiently large $\beta$ such
that  if $m\ge c_1(\epsilon) d$ then with probability at least $1-C\exp(-c_0(
\epsilon) m)- c_2m^{-1}$ it holds
\begin{equation}\label{eq:rest}
\begin{aligned}
r &\le  \frac 1m \sum_{j=1}^m \abs{\va_j^* \vz}^3  \abs{\va_j^* \vx_0} \1_{\dkh{|\va_j^* \vz| \ge \beta}} \notag\\
&\le  \xkh{\frac 1m \sum_{j=1}^m \abs{\va_j^* \vz}^4}^{\frac 34} \xkh{\frac 1m \sum_{j=1}^m \abs{\va_j^* \vx_0}^8}^{\frac 18}  \xkh{\frac 1m \sum_{j=1}^m \1_{\dkh{|\va_j^* \vz| \ge \beta}}}^{\frac 18}  \notag \\
&\le  2 \epsilon \xkh{\frac 1m \sum_{j=1}^m \abs{\va_j^* \vz}^4}^{\frac 34} ,
\end{aligned}
\end{equation}
 where we use the Chebyshev's inequality in the last line to deduce that with probability at least $1-c_2m^{-1}$,
\[
\frac 1m \sum_{j=1}^m \abs{\va_j(1)}^8 \le 25.
\]
Here, $c_0(\epsilon)$ and $c_1(\epsilon)$ are constants depending on $\epsilon$,  and
$C, c_2$ are absolute constants.
\end{proof}

We are now ready to prove Lemma \ref{le:upow4}.
%
\begin{proof}[Proof of Lemma \ref{le:upow4}]
Without loss of generality, we assume $\norm{\vx_0}=1$ (the general case can be
obtained via a simple rescaling) and $\nj{\x,\vx_0}\ge 0$ (otherwise, we can choose
$e^{i\theta} \vx_0$ for an appropriate $\theta$).
 Recall that the loss function is
\[
f(\vz)=\sum_{j=1}^m \xkh{\abs{\nj{\va_j,\vz}}^2-b_j}^2.
\]
Since $\x $ is a global minimizer of $f(\vz)$ , we  have
\[
\nabla f(\x)=2\sum_{j=1}^m \xkh{\abs{\nj{\va_j,\x}}^2-b_j} \va_j\va_j^*\x=0.
\]
Let $\x:= R \z$, where $R\ge 0$ and $\norms{\z}=1$.   Recall that $b_j=|\va_j^*
\vx_0|^2+\eta_j, \;  j=1,\ldots,m$. Then $\nj{\nabla f(\x),\x}=0$ implies
\begin{equation}\label{eq:he1}
R^2 \sum_{j=1}^m \abs{\va_j^* \z}^4= \sum_{j=1}^m \abs{\va_j^* \z}^2 \abs{\va_j^* \vx_0}^2+   \sum_{j=1}^m \eta_j\abs{\va_j^* \z}^2.
\end{equation}
 Similarly, according to $\nj{\nabla f(\x),\vx_0}=0$  we have
\begin{equation}\label{eq:he2}
R^2 \sum_{j=1}^m \abs{\va_j^* \z}^2 \vx_0^* \va_j \va_j^* \z =\sum_{j=1}^m  \abs{\va_j^* \vx_0}^2 \vx_0^* \va_j \va_j^* \z + \sum_{j=1}^m \eta_j \vx_0^* \va_j \va_j^* \z.
\end{equation}
Combining (\ref{eq:he1}) and (\ref{eq:he2}), we obtain
\begin{equation}\label{eq:UV}
\begin{aligned}
U&:= { \frac 1m \sum_{j=1}^m \abs{\va_j^* \z}^4 \cdot \xkh{ \frac 1m\sum_{j=1}^m  \abs{\va_j^* \vx_0}^2 \Re(\vx_0^* \va_j \va_j^* \z) +  \frac 1m \sum_{j=1}^m \eta_j \Re( \vx_0^* \va_j \va_j^* \z)}}  \\
&= {\frac 1m\sum_{j=1}^m \abs{\va_j^* \z}^2 \Re(\vx_0^* \va_j \va_j^* \z) \cdot \xkh{\frac 1m\sum_{j=1}^m \abs{\va_j^* \z}^2 \abs{\va_j^* \vx_0}^2+  \frac 1m \sum_{j=1}^m \eta_j\abs{\va_j^* \z}^2}}=:V.
\end{aligned}
\end{equation}
Since $\nj{\x,\vx_0}\ge 0$, without loss of generality, we may assume $\z=s \vx_0+s_1 \vx_0^{\perp}$, where $\vx_0^{\perp} \in
\mathbb{S}_{\C}^{d-1}$ satisfies $\nj{\vx_0^{\perp}, \vx_0}=0$ and  $s,s_1$ are
positive real numbers obeying $s^2+s_1^2=1$.  A simple observation is that
$s:=\nj{\z,\vx_0} \in [0,1]$. We claim that for any $0<\epsilon<1$, when $m\ge
c(\epsilon) d\log m$, with probability at least $1-c_a \epsilon^{-2} m^{-1}-c_b
\exp(-c_c (\epsilon) d)$,  the followings hold:
\begin{equation}\label{eq:UU}
U \ge \frac 1m \sum_{j=1}^m \abs{\va_j^* \z}^4 \cdot \xkh{2s +  s \cdot  \frac 1 m \sum_{j=1}^m \eta_j - \lambda\epsilon }
\end{equation}
and
\begin{equation}\label{eq:VV}
V \le \xkh{2s + \epsilon+2 \epsilon \xkh{\frac 1m \sum_{j=1}^m \abs{\va_j^* \z}^4}^{\frac 34}} \cdot \xkh{1+s^2+ \frac 1 m \sum_{j=1}^m \eta_j + \lambda \epsilon },
\end{equation}
 where $\lambda$ is  a universal positive constant.
Here,  $c_a, c_b $ are absolute constants and $c(\epsilon), c_c(\epsilon)$ are
constants depending on $\epsilon$. Combining (\ref{eq:UV}), (\ref{eq:UU}) and
(\ref{eq:VV}), we obtain
\begin{equation}\label{eq:rela2}
\begin{aligned}
& \xkh{2s + \epsilon+2 \epsilon \xkh{\frac 1m \sum_{j=1}^m \abs{\va_j^* \z}^4}^{\frac 34}} \cdot \xkh{1+s^2+ \frac 1 m \sum_{j=1}^m \eta_j + \lambda \epsilon } \\
&\ge \frac 1m \sum_{j=1}^m \abs{\va_j^* \z}^4 \cdot \xkh{2s +  s\cdot  \frac 1 m \sum_{j=1}^m \eta_j - \lambda\epsilon }.
\end{aligned}
\end{equation}
According to Lemma \ref{le:sunju22},  when $m\ge c(\epsilon)d\log d$, with
probability at least $1-c_1\exp(-c_2(\epsilon) m)-c_3m^{-d}$,  it holds
\begin{equation} \label{eq:pow4low}
 \frac 1m \sum_{j=1}^m \abs{\va_j^* \z}^4\,\, \ge\,\, 2-\epsilon\,\, >\,\,1,
\end{equation}
where $c_2(\epsilon)$ is a constant depending on $\epsilon$ and $c_1, c_3$ are
absolute constants. Since $|\frac 1 m \sum_{j=1}^m \eta_j|$ is bounded, there exists a constant $C_0$ so that
\begin{equation}\label{eq:C0}
1+s^2+ \frac 1 m \sum_{j=1}^m \eta_j + \lambda \epsilon  \le C_0
\end{equation}
for some positive constant $C_0$.
 We can use (\ref{eq:pow4low}) and (\ref{eq:C0}) to obtain
\begin{equation}\label{eq:three3}
\begin{aligned}
2 \epsilon \xkh{\frac 1m \sum_{j=1}^m \abs{\va_j^* \z}^4}^{\frac 34}  \cdot \xkh{1+s^2+ \frac 1 m \sum_{j=1}^m \eta_j + \lambda \epsilon } &\le 2C_0 \epsilon \cdot \frac 1m \sum_{j=1}^m \abs{\va_j^* \z}^4,\\
\epsilon \cdot \xkh{1+s^2+ \frac 1 m \sum_{j=1}^m \eta_j + \lambda \epsilon } &\le C_0 \epsilon \cdot \frac 1m \sum_{j=1}^m \abs{\va_j^* \z}^4,\\
2s\lambda \epsilon &\le 2\lambda \epsilon \cdot \frac 1m \sum_{j=1}^m \abs{\va_j^* \z}^4.
\end{aligned}
\end{equation}
Substituting (\ref{eq:three3}) into  \eqref{eq:rela2}, we have
\begin{equation}\label{eq:ha1}
 2s\xkh{1+s^2+ \frac 1 m \sum_{j=1}^m \eta_j }\ge   \frac 1m \sum_{j=1}^m\abs{\va_j^* \z}^4  \cdot \xkh{2s +  s \cdot  \frac 1 m \sum_{j=1}^m \eta_j - C_1 \epsilon },
\end{equation}
where $C_1:=3\lambda+3C_0$ is  bounded.
  Assume that $c_0$ is a constant satisfying
$\frac{1}{m}\sum_{j=1}^m\eta_j\leq c_0$.
 Using \eqref{eq:pow4low} again, we have
  \begin{equation}\label{eq:ha2}
   \frac 2 m \sum_{j=1}^m \eta_j  \le \frac1m \sum_{j=1}^m\abs{\va_j^* \z}^4\cdot  \xkh{\frac 1 m \sum_{j=1}^m \eta_j  + c_0\epsilon}.
  \end{equation}
Combining (\ref{eq:ha1}) and (\ref{eq:ha2}), we have
\begin{equation} \label{eq:rala5}
2s(1+s^2) \ge   \frac 1m \sum_{j=1}^m\abs{\va_j^* \z}^4 \xkh{2s-C_2\epsilon}  \ge (2-\epsilon)\xkh{2s-C_2\epsilon},
\end{equation}
where $C_2:=3\lambda+3C_0+c_0$ is  bounded.  We claim that $s > \frac{\sqrt{5}}{5}$.
Recall that $s\le 1$. By taking $\epsilon>0$ sufficiently small, it then follows from
\eqref{eq:rala5} that $s$ must be sufficient close to $1$. Then \eqref{eq:rala5}
implies that, for any $\gamma>0$,
 the following holds  with probability at least $1-c''(\gamma) m^{-1}-c''' \exp(-c''''(\gamma) d)$
\[
 \frac 1m \sum_{j=1}^m\abs{\va_j^* \z}^4 \le 1+s^2 + \frac{C_3\epsilon}{s} \le 2+\gamma,
\]
provided $m\ge c'(\gamma) d \log m$, where $c'(\gamma),c''(\gamma), c''''(\gamma)$
are constants depending $\gamma$, $C_3$ and
 $c'''$ are sufficiently large constant.

It remains to prove (\ref{eq:UU}), (\ref{eq:VV}) and $s\geq \frac{\sqrt{5}}{5}$.

We first show that  (\ref{eq:UU}) holds.  Lemma \ref{le:sunju} implies that for any
$0<\epsilon<1$,  when $m\ge c(\epsilon) d\log d$, with probability at least $1-c_a
\epsilon^{-2} m^{-1}-c'_b \exp(-c_c \epsilon^2 m /\log m)$,
\begin{equation}\label{eq:est1}
\frac 1m \sum_{j=1}^m  \abs{\va_j^* \vx_0}^2 \Re(\vx_0^* \va_j \va_j^* \z) \ge 2s-\epsilon.
\end{equation}
Here, $c(\epsilon)$ is a constant depending on $\epsilon$ and $c_a, c'_b, c_c$ are absolute constants.
On the other hand, note that $\norms{\eta}\lesssim \sqrt{m}$ and $\norms{\eta}_{\infty} \lesssim \log m$.
Taking  $\delta=\epsilon$ in  Lemma \ref{le:etalowbound},  we obtain that the following holds with probability at least $1-2\exp(-c_c \epsilon^2 d)$:
\begin{eqnarray}
\frac 1m  \sum_{j=1}^m \eta_j \Re( \vx_0^* \va_j \va_j^* \z) &\ge& s \cdot  \frac 1 m \sum_{j=1}^m \eta_j - \rho\epsilon \cdot \xkh{\frac{\sqrt{d}}{m} \norms{\eta}+ \frac dm \norms{\eta}_{\infty}} \nonumber\\
& \ge&  s \cdot  \frac 1 m \sum_{j=1}^m \eta_j -C' \epsilon \label{eq:est2}
\end{eqnarray}
for some universal positive constant $C'$, provided $m \ge   C'' \rho d \log m $. Here, $C''$ is a universal constant and $\rho$ is a constant depending on $\epsilon$.
Combining (\ref{eq:est1}) and (\ref{eq:est2}), we arrive at (\ref{eq:UU}).

We next turn to (\ref{eq:VV}). By Lemma \ref{le:sunju},  for any $0<\epsilon<1$,
when $m\ge c(\epsilon) d\log d$, with probability at least $1-c_a \epsilon^{-2}
m^{-1}-c'_b \exp(-c_c \epsilon^2 m /\log m)$,
\begin{equation}\label{eq:VV1}
\frac 1m \sum_{j=1}^m \abs{\va_j^* \z}^2 \abs{\va_j^* \vx_0}^2 \le 1+s^2 +\epsilon.
\end{equation}
 Lemma \ref{le:etalowbound} implies that,  with probability at least $1-2\exp(-c_c \epsilon^2 d)$, we have
\begin{eqnarray}
\frac 1m \sum_{j=1}^m \eta_j\abs{\va_j^* \z}^2 &\le& \frac 1 m \sum_{j=1}^m \eta_j + \rho \epsilon \cdot \xkh{\frac{\sqrt{d}}{m} \norms{\eta}+ \frac dm \norms{\eta}_{\infty}} \nonumber\\
&\le &   \frac 1 m \sum_{j=1}^m \eta_j +C' \epsilon   \label{eq:VV2}
\end{eqnarray}
provided $m \ge   C'' \rho^{-1} d \log m $.
According to  Lemma \ref{le:trouble}, we obtain  that the following holds   with probability at least $1-c_a m^{-1}-c'_b \exp(-c'( \epsilon) m /\log m)$,
\begin{equation}\label{eq:VV3}
\frac 1m \sum_{j=1}^m |\va_j^* \z|^2 \Re(\vx_0^* \va_j \va_j^* \z) \le 2s + \epsilon+2 \epsilon \xkh{\frac 1m \sum_{j=1}^m \abs{\va_j^* \z}^4}^{\frac 34}
\end{equation}
provided $m\ge c(\epsilon) d \log d$.
Here $c'(\epsilon)$ is a positive constant depending on $\epsilon$. Combining (\ref{eq:VV1}), (\ref{eq:VV2}) and (\ref{eq:VV3}), we obtain
(\ref{eq:VV}).

We still need to show that $s\geq \frac{\sqrt{5}}{5} $. From Theorem \ref{result1},
we know that for $m\gtrsim d$, with probability at least $1-\exp(-c m)$,
\[
\norms{\x-\vx_0} \le C\frac{\norms{\eta}}{\sqrt{m}}.
\]
It immediately gives
\begin{equation} \label{eq:finalone}
\nj{\z,\vx_0}\ge \frac12 \norms{\x} +\frac{1-C^2\norms{\eta}^2/m}{2\norms{\x}}.
\end{equation}
We claim that $s:=\nj{\z ,\vx_0}\ge \sqrt{5}/5$ where $\z:= \x/\norms{\x}$. Indeed, if $C\norms{\eta}/\sqrt{m} \ge 2/\sqrt{5}$ then \eqref{eq:finalone} gives
\[
s:=\nj{\z ,\vx_0}\ge \frac12 \norms{\x} \ge   \frac{C\norms{\eta}}{2\sqrt{m}} \ge \sqrt{5}/5 ,
\]
where we use the fact that $2C\norms{\eta}/\sqrt{m} \le \norms{\vx_0} =1$ and
$\norms{\x} \ge 1-C\norms{\eta}/\sqrt{m}$. On the other hand, if
$C\norms{\eta}/\sqrt{m} < 2/\sqrt{5}$ then \eqref{eq:finalone} implies
\[
s:=\nj{\z ,\vx_0}\ge \sqrt{1-C^2\norms{\eta}^2/m} >   \sqrt{5}/5,
\]
where we use the inequality $a+b\ge 2\sqrt{ab}$ for any positive real numbers $a,b$.
In summary, we obtain $s\geq \sqrt{5}/5$.
\end{proof}



\end{document}